\def\dOi{10(2:4)2014}
\subjclass{F.4.1 [Mathematical logic and formal languages]: Lambda calculus and related systems}
\colorlet{dblue}{blue!40!black}
\begin{document}

\title[Discriminating Lambda-Terms Using Clocked B\"{o}hm Trees]
      {Discriminating Lambda-Terms Using Clocked B\"{o}hm Trees\rsuper*}

\author[J.~Endrullis]{J\"{o}rg Endrullis}
\address{%
  VU University Amsterdam,
  Dept.\ of Computer Science,
  De Boelelaan 1081A, 1081 HV Amsterdam
}
\email{\{j.endrullis,r.d.a.hendriks,j.w.klop,a.polonsky\}@vu.nl}

\author[D.~Hendriks]{Dimitri Hendriks}
\address{\vspace{-18 pt}}

\author[J.~W.~Klop]{Jan Willem Klop}
\address{\vspace{-18 pt}}

\author[A.~Polonsky]{Andrew Polonsky}
\address{\vspace{-18 pt}}


\keywords{$\lambda$-calculus, \boehm{} Trees, $\beta$-inconvertibility, fixed point combinators}

\titlecomment{%
  {\lsuper*}This is a modified and extended version of~\cite{endr:hend:klop:2010}
  which appeared in the proceedings of LICS~2010.
  The research has been partially funded by 
  the Netherlands Organisation for Scientific Research (NWO)
  under grant numbers 612.000.934, 639.021.020, and 612.001.002.%
}

\maketitle

\begin{abstract}
As observed by Intrigila~\cite{intri:1997}, there are hardly techniques available 
in the $\lambda$\nb-calculus to prove that two $\lambda$\nb-terms are not $\beta$\nb-convertible. 
Techniques employing the usual \boehm{} Trees are inadequate when we deal with terms 
having the same \boehm{} Tree (BT). 
This is the case in particular for fixed point combinators, as they all have the same~BT. 
Another interesting equation, whose consideration was suggested by Scott~\cite{scott:1975}, 
is \mbox{$\cB\cY = \cB\cY\cS$}, an equation valid in the classical model~$\mcl{P}\omega$ 
of $\lambda$\nb-calculus, and hence valid with respect to BT\nb-equality~$=_{\sbohm}$, 
but nevertheless the terms are $\beta$\nb-inconvertible. 

To prove such $\beta$\nb-inconvertibilities, we employ `clocked' BT's, 
with annotations that convey information of the tempo in which the data in the BT are produced. 
\boehm{} Trees are thus enriched with an intrinsic clock behaviour, 
leading to a refined discrimination method for $\lambda$\nb-terms. 
The corresponding equality is strictly intermediate between $\conv$ and $=_{\sbohm}$, 
the equality in the model $\mcl{P}\omega$.
An analogous approach pertains to \levy{} and \ber{} Trees. 

Our refined \boehm{} Trees find in particular an application in $\beta$\nb-discriminating 
fixed point combinators (fpc's).
It turns out that Scott's equation $\cB\cY = \cB\cY\cS$ is the key to unlocking a plethora of fpc's, 
generated by a variety of production schemes of which the simplest was found by \boehm, 
stating that new fpc's are obtained by postfixing the term $\cS\cI$, also known as Smullyan's Owl.
We prove that all these newly generated fpc's are indeed new, 
by considering their clocked BT's. 
Even so, not all pairs of new fpc's can be discriminated this way. 
For that purpose 
we increase the discrimination power by 
a precision of the clock notion that we call `atomic clock'.


\end{abstract}

\section{Introduction}

\bohm{} Trees constitute a well-known method to discriminate $\lambda$-terms $M$, $N$:
if $\bt{M}$ and $\bt{N}$ are not identical, then $M$ and $N$ are $\beta$-inconvertible, $M \nconv N$.
But how do we prove $\beta$-inconvertibility of $\lambda$-terms with the same \boehm{} Tree?
This question was raised in Scott~\cite{scott:1975} for the interesting equation 
$\comb{B}\comb{Y} = \comb{B}\comb{Y}\comb{S}$ 
between terms that as Scott noted are presumably $\beta$-inconvertible, yet BT-equal ($=_{\sbohm}$).
Scott used his Induction Rule to prove that $\comb{B}\comb{Y} = \comb{B}\comb{Y}\comb{S}$;
instead we will employ below the infinitary $\lambda$-calculus with the same effect,
but with more convenience for calculations as a direct generalization of finitary $\lambda$-calculus.
Often one can solve such a $\beta$-discrimination problem by finding a suitable invariant 
for all the $\beta$-reducts of $M$, $N$\!.
Below we will do this by way of preparatory example for the fixed point combinators (fpc's) in the \bohm{} sequence.
But a systematic method for this discrimination problem has been lacking, 
and such a method is one of the two contributions of this paper.
%
\begin{figure}[h]
\begin{center}
\begin{tikzpicture}[thick,node distance=13mm,inner sep=0.5mm]
  \node (BT) {$\sbohm$};
  \node (cBT) [left of=BT,yshift=-.6cm]{$\scbohm$};
  \node (aBT) [left of=cBT,yshift=-.6cm]{$\sabohm$};

  \node (LLT) [below of=BT] {$\slevi$};
  \node (cLLT) [left of=LLT,yshift=-.6cm]{$\sclevi$};
  \node (aLLT) [left of=cLLT,yshift=-.6cm]{$\salevi$};
  
  \node (BeT) [below of=LLT] {$\sber$};
  \node (cBeT) [left of=BeT,yshift=-.6cm]{$\scber$};
  \node (aBeT) [left of=cBeT,yshift=-.6cm]{$\saber$};

  \node (beta) [below of=aBeT]{$\conv$};

  \draw (BT) -- (cBT) -- (aBT);
  \draw (LLT) -- (cLLT) -- (aLLT);
  \draw (BeT) -- (cBeT) -- (aBeT);

  \draw (BT) -- (LLT) -- (BeT);
  \draw (cBT) -- (cLLT) -- (cBeT);
  \draw (aBT) -- (aLLT) -- (aBeT);
  
  \draw (aBeT) -- (beta);
\end{tikzpicture}
\caption{Comparison of (atomic) clock semantics and unclocked semantics. Higher means more identifications.}
\label{fig:overivew}
\vspace{-4ex}
\end{center}
\end{figure}

Actually, the need for such a strategic method was forced upon us, by the other contribution,
because Scott's equation $\comb{B}\comb{Y} = \comb{B}\comb{Y}\comb{S}$ turned out to be 
the key unlocking a plethora of new fpc's.
The new generation schemes are of the form: 
if $Y$ is an fpc, then $YP_1\ldots P_n$ is an fpc,
abbreviated as $Y \Rightarrow YP_1\ldots P_n$.
So $\cxthole P_1\ldots P_n$ is an `fpc-generating' vector,
and can be considered as a building block to make new fpc's. 
But are they indeed new? 
A well-known example of a (singleton)-fpc-generating vector is $\cxthole\delta$, 
where $\delta = \comb{S}\comb{I}$,
giving rise when starting from Curry's fpc to the \boehm{} sequence of fpc's.
Here another interesting equation is turning up, namely $Y = Y \delta$, 
for an arbitrary fpc $Y$, considered by Statman and Intrigila~\cite{intri:1997}.  
In fact, it is implied by Scott's equation, for an arbitrary fpc $Y$:
\begin{align*}
  \comb{B}Y = \comb{B}Y\comb{S} 
  \,\,\Longrightarrow\,\, 
  \comb{B}Y\comb{I} = \comb{B}Y\comb{S}\comb{I} 
  \,\,\Longleftrightarrow\,\, 
  Y = Y\delta
\end{align*}
The first equation $\comb{B}Y = \comb{B}Y\comb{S}$ will yield many new fpc's, built in a modular way;
the last equation $Y = Y\delta$ addresses the question whether they are indeed new.
Finding ad hoc invariant proofs for their novelty is too cumbersome. 
But fortunately, it turns out that although the new fpc's all have the same BT, namely $\mylam{f}{f^\omega}$,
they differ in the way this BT is formed, in the `tempo of formation', 
where the ticks of the clock are head reduction steps.
We can thus discern a clock-like behaviour of BT's, 
and we refine BT's 
to `clocked' BT's 
by annotating them with this information.
These then enable us to discriminate the terms in question.

This refined discrimination method 
works best for what we call `simple' terms (or, for terms that reduce to simple terms).
A term is `simple' if its reduction to the \boehm{} Tree does not duplicate redexes. 
The class of simple terms is still fairly extensive; 
it includes all fpc's that are constructed in the modular way that we present, 
thereby solving our novelty problem.
In fact, we gain some more ground: though our discrimination method works best 
for pairs of simple terms, it can also fruitfully be applied
to compare a simple term with a non-simple term,
and with some more effort, we can even compare
and discriminate non-simple terms;
see Section~\ref{sec:plotkin} for an example.

Even so, many pairs of fpc's cannot yet be discriminated,
because they not only have the same BT,
they also have the same clocked BT. Therefore, in a final grading up of the precision, 
we introduce `atomic clocks', where the actual position of a head reduction step is administrated. 
All this pertains not only to the BT-semantics, but also to \levy{} Trees (LLT) (or lazy trees), 
and Berarducci Trees (BeT) (or syntactic trees). 
Many problems stay open, in particular problems generalizing the equation of Statman and Intrigila, 
when arbitrary fpc's are considered. 

\subsection*{Overview}
After defining preliminary notions in Section~\ref{sec:prelims}, 
in Section~\ref{sec:fpcs} we are concerned with constructing new fpc's from old, by some generating schemes. 
In Section~\ref{sec:clocked} we define clocked \boehm{} Trees.
The main results here are Theorems~\ref{thm:general} and~\ref{thm:simple}.
The first states that if no reduct of $M$ has a clock that is at least as fast as the clock of $N$,
then $M$ and $N$ are inconvertible, whereas Theorem~\ref{thm:simple} states that 
if $M$ is a simple term then it suffices that the clock $M$ is not eventually faster than the clock of $N$.
In the paper we are mainly concerned with $\lambda$\nb-terms which have simple reducts.
An exception is Section~\ref{sec:plotkin}, where we answer a question of Plotkin (see Related Work below) 
which involves arbitrary (not necessarily simple) fpc's.
Another elaborate example is given in Section~\ref{sec:periodic}, 
where we compute the clocks of three enumerators for Combinatory Logic.
In Section~\ref{sec:atom}, we give a refinement of the clock method 
by not only recording the number of head reduction steps but also their positions.
As an application we show that every combination of the fixed point generating vectors 
$\leftappiterate{\cxthole (\cS\cS)}{\cS}{n} \cI$ introduced in Section~\ref{sec:fpcs} 
give rise to \emph{new} fpc's.
We briefly mention how the theory can be extended to 
the other well-known semantics of $\lambda$\nb-calculus, namely 
\levy{} and \ber{} Trees, in Section~\ref{sec:levy}.
We conclude in Section~\ref{sec:conclusion} with directions for future research.

\subsection*{Related Work}
The present paper is an extension and elaboration of~\cite{endr:hend:klop:2010}.
In particular, as an example application of the main theorems (Theorem~\ref{thm:general}), 
we now answer the following question of Plotkin~\cite{plot:2007}:
\begin{align*}
  \text{\textit{Is there a fixed point combinator $Y$\! such that}}\  Y(\mylam{z}{fzz}) \conv Y(\mylam{x}{Y(\mylam{y}{fxy})}) \punc? 
\end{align*}

An idea similar to clocked B\"{o}hm Trees is employed in 
the excellent paper~\cite{aehl:joac:2002},
pointed out to us by Tarmo Uustalu at the 2010 LICS conference in Edinburgh.
In~\cite{aehl:joac:2002}, Aehlig and Joachimski study continuous normalization 
of the coinductive \mbox{$\lambda$-calculus}
(see~\cite{{kenn:klop:slee:vrie:1997},{joac:2004}})
extended with a void `wait' constructor~$\mcl{R}$.
This extra constructor is returned whenever the head constructor of a term 
cannot immediately be read off from the argument. 
If the head constructor of an application $rs$ is to be found,
it depends on a recursive call to investigate whether $r$ is a variable, 
an abstraction or an application again; in this case an $\mcl{R}$ is returned.
Thus it is guaranteed that the procedure of building a non-wellfounded term 
over the extended grammar
is productive by making the recursion guarded.
Every $\mcl{R}$ is matched either by a $\beta$-step necessary 
to reach the normal form 
(as represented by the B\"{o}hm Tree) 
or by an application node in the B\"{o}hm Tree.
When $r = \mylam{x}{r'}$ then $\mcl{R}$ matches a $\beta$-step 
and so $s$ is used in the substitution $\subst{r'}{x}{s}$.
Or we find that $r$ is a variable and 
so $\mcl{R}$ matches an application node in the B\"ohm Tree.
So the number of $\mcl{R}$'s in the normal form of a term $t$ is precisely related 
to the number of reduction steps and the `size' of the resulting B\"{o}hm Tree.
Summarizing, similar to the clocked B\"ohm Trees that we introduce in Section~\ref{sec:clocked}, 
in~\cite{aehl:joac:2002} B\"{o}hm Trees are enriched with information about 
how the tree is constructed. 
However, in~\cite{aehl:joac:2002} the refinement is used for proving 
their normalization function to be continuous, 
and not for discriminating \mbox{$\lambda$-terms}, 
the goal we pursue here.

In~\cite{gold:1995} a heuristic procedure in finitary $\lambda$-calculus is
given to construct fpc's in a uniform way.

\section{Preliminaries}\label{sec:prelims}

To make this paper moderately self-contained,
and to fix notations, we lay out some ingredients.
For $\lambda$-calculus we refer to \cite{bare:1984} and \cite{beth:2003}.
For an introduction to \boehm{}, \ber{} and \levy{} Trees, 
we refer to \cite{bare:1984,abra:ong:1993,beth:klop:vrij:2000,bare:klop:2009}.

\begin{defi}\label{def:terms}
  Let $\mcl{X}$ be an countably infinite set of variables. 
  The set $\lterm$ of \emph{finite $\lambda$-terms} is defined inductively 
  by the following grammar:
  \begin{align*}
    M & \BNFis x \BNFor \mylam{x}M \BNFor M \cdot M && (x \in \mcl{X})
  \end{align*}
  We use $x, y, z, \ldots$ for variables, and $M,N,\ldots$ to range over the elements of~$\lterm$.
\end{defi}
Thus $\lambda$-terms are variables, abstractions or applications.
Usually we suppress the application symbol in a term $M\cdot N$
and just write $(MN)$. 
We also adopt the usual conventions for omitting brackets,
i.e., we let application associate to the left, 
so that $N_1 N_2 \ldots N_k$ denotes $(\ldots(N_1 N_2) \ldots N_k)$,
and we let abstraction associate to the right: 
$\mylam{x_1\ldots x_n}{M}$ stands for $(\mylam{x_1}{(\ldots(\mylam{x_n}{(M)}))})$.

\begin{defi}\label{def:infinite:terms}
  The set $\infterm{\lambda}$ of (finite and) \emph{infinite $\lambda$-terms} is defined 
  by interpreting the grammar from Definition~\ref{def:terms}
  \emph{coinductively}, that is, 
  $\infterm{\lambda}$ is the largest set $X$ such that 
  every element $M \in X$ is either a variable $x$, an abstraction $\mylam{x}{M'}$ or an application $M_1 M_2$
  with $M',M_1,M_2\in X$.
  (See further~\cite{sang:rutt:2012} for a precise treatment of coinductive definition and proof principles.)
\end{defi}

\boehm{}, \levy{} and \ber{} Trees are infinite $\lambda$-terms
with an additional clause for $\bot$ in the grammar, 
where $\bot$ stands for the different notions of `undefined' in these semantics,
see Definitions~\ref{def:cbohm}, \ref{def:clevi} and \ref{def:cber}.

\begin{defi}
  The relation $\to_\beta$ on $\term{\lambda}$ or $\infterm{\lambda}$, called \emph{$\beta$-reduction},  
  is the compatible closure (i.e., closure under term formation)
  of the $\beta$-rule:
  \begin{align}
    (\mylam{x}{M})N & \to \subst{M}{x}{N}
    \tag{$\beta$}
  \end{align}
  where $\subst{M}{x}{N}$ denotes the result of substituting $N$ 
  for all free occurrences of $x$ in $M$.
  Furthermore, we write
  $\to_{\beta}^{n}$ for the $n$-fold composition of $\to_{\beta}$, 
  defined by $M\to_{\beta}^{0}M$, 
  and $M\to_{\beta}^{n+1}N$ if $M\to_{\beta}P$ and $P\to_{\beta}^{n}N$ for some $P\in\lterm$.
  We use $\smred_{\beta}$ to denote the reflexive-transitive closure of $\to_{\beta}$,
  $M\mred_{\beta} N$ iff $\myex{n}{M\to_{\beta}^n N}$;
   we let $\to_\beta^{=}$ denote the reflexive closure of $\to_\beta$, 
  ${\to_\beta^{=}} = {\to_\beta^{0}} \cup {\to_\beta^1}$.
  We write $M \conv N$ to denote that $M$ is $\beta$-convertible with $N$,
  i.e., $\sconv$ is the equivalence closure of $\to_\beta$.
  For syntactic equality (modulo renaming of bound variables), we use $\equiv$.
\end{defi}

Apart from the `many-steps' relation $\mred_\beta$,
we introduce `multi-steps' $\dred_\beta$~\cite{{terese:2003},{bare:1984}}, 
which arise from complete developments.
\begin{defi}
  We write $\dred_\beta$ for \emph{multi-steps}, that is, 
  complete developments of a set of redex occurrences in a term.
  A \emph{development} of a set of redex occurrences $U$ in a term contracts exclusively descendants (residuals) of $U$, 
  and it is called \emph{complete} if there are no residuals of $U$ left.
  %
\end{defi}  
A complete development of $U$ can alternatively be viewed as contracting 
all redex occurrences in $U$ in an inside-out fashion.

We will often omit the subscript $\beta$ in $\to_\beta$, $\smred_\beta$ and $\dred_\beta$. 

\begin{defi}
  A $\lambda$-term $M$ is called a \emph{normal form} if there exists no $N$ with $M \to N$.
  We say that a term $M$ \emph{has a normal form} if it reduces to one.
  For $\lambda$-terms $M$ having a normal form we write
  $\nf{M}$ for the unique normal form $N$ with $M \mred N$ 
  (uniqueness follows from confluence of the $\lambda$-calculus).
\end{defi}

Some commonly used combinators are:
\begin{align*}
  \comb{I} & \defeq \mylam{x}{x} &
  \comb{K} & \defeq \mylam{xy}{x} &
  \comb{S} & \defeq \mylam{xyz}{xz(yz)} &
  \comb{B} & \defeq \mylam{xyz}{x(yz)}
\end{align*}

\begin{defi}\label{def:pos}
  A \emph{position} is a sequence over $\{0,1,2\}$.
  For a finite or infinite $\lambda$-term~$M$, 
  the \emph{subterm $\subtrmat{M}{\apos}$ of $M$ at position $\apos$}
  is defined by:
  \begin{align*}
    \subtrmat{M}{\posemp} &= M
    &
    \subtrmat{(MN)}{\posconcat{1}{\apos}} 
    & = \subtrmat{M}{\apos} 
    \\
    \subtrmat{(\mylam{x}{M})}{\posconcat{0}{\apos}} 
    & = \subtrmat{M}{\apos} 
    &
    \subtrmat{(MN)}{\posconcat{2}{\apos}} 
    & = \subtrmat{N}{\apos}
  \end{align*}
  $\pos{M}$ is the set of positions $\apos$ such that $\subtrmat{M}{\apos}$ is defined.
\end{defi}


\begin{defi}
  \hfill
  \begin{enumerate}
    \item 
      A term $\afpc$ 
      is an \emph{fpc} 
      if $\afpc x \conv x(\afpc x)$.
    \item 
      An fpc $\afpc$ is \emph{$k$-reducing} 
      if $\afpc x \redn{k} x(\afpc x)$.
    \item 
      An fpc $\afpc$ is \emph{reducing} 
      if $\afpc$ is $k$-reducing for some $k \in \nat$.
    \item\label{item:wfpc}
      A term $\awfpc$ is a \emph{weak fpc (wfpc)} 
      if $\awfpc x \conv x (\awfpc' x)$
      where $\awfpc'$ is again a wfpc.
  \end{enumerate}
\end{defi}
\noindent
The definition of weak fpc's in~\eqref{item:wfpc} is essentially coinductive~\cite{sang:rutt:2012}, 
that is, implicitly employing a `largest set' semantics.
In long form, the definition means the following:
the set of weak fpc's is the largest set $W \subseteq \lterm$ such that
for every $Z \in W$ we have $\awfpc x \conv x (\awfpc' x)$ for some $Z' \in W$.

A wfpc is alternatively defined as a term having the same B\"{o}hm Tree
as an fpc, namely $\mylam{x}{x^{\omega}} \equiv \mylam{x}{x(x(x(\ldots)))}$.
Weak fpc's are known in foundational studies of type systems 
as \emph{looping combinators}; see, e.g., \cite{coqu:herb:1994} and \cite{geuv:wern:1994}.
\begin{exa}
  Define by double recursion, $Z$ and $Z'$ such that 
  $Zx = x(Z'x)$ and $Z'x = x(Zx)$.
  Then $Z,Z'$ are both wfpc's, and $Zx = x(x(Zx))$. 
  So $Z$ delivers its output twice as fast as an ordinary fpc, but the generator flipflops.
\end{exa}
As to `double recursion', \cite{klop:2007}
collects several proofs of the double fixed point theorem, 
including some in~\cite{bare:1984,smull:1985}.

\begin{defi}
  \hfill
  \begin{enumerate}
  \item 
  A \emph{head reduction step $\hred$} is a $\beta$-reduction step of the form:\\
  $\mylam{x_1\ldots x_n}{(\mylam{y}{M})N N_1 \ldots N_m} 
   \to \mylam{x_1\ldots x_n}{(\subst{M}{y}{N})N_1 \ldots N_m}$
  with $n,m \ge 0$.
  \item 
    Accordingly, a \emph{head normal form (hnf)} is a $\lambda$-term
    of the form:\\
    $\mylam{x_1}{\ldots\mylam{x_n}{y N_1 \ldots N_m}}$ with $n, m \ge 0$.
  \item
    A \emph{weak head normal form (whnf)} is 
    an hnf or an abstraction,
    that is, a whnf is a term of the form $x M_1 \ldots M_m$ or $\mylam{x}{M}$.
  \item
    A term \emph{has a (weak) hnf} if it reduces to one.
  \item
    We call a term \emph{root-stable} 
    if it does not reduce to a redex: $(\mylam{x}{M}) N$.
    A term is called \emph{root-active} if it does not reduce
    to a root-stable term.
  \end{enumerate}
\end{defi}

\subsection*{Infinitary \texorpdfstring{$\lambda$}{lambda}-calculus
  \texorpdfstring{$\mbs{\inflamcal}$}{lambda infinity beta}}
We will only use the infinitary $\lambda$-calculus $\inflamcal$ for some simple calculations 
such as $(\mylam{ab}{(ab)^\omega}) \comb{I} =_{\inflamcal} \mylam{b}{(\comb{I} b)^\omega} =_{\inflamcal} \mylam{b}{b^\omega}$.
Here $M^\omega$ denotes the infinite $\lambda$-term $M(M(M(\ldots)))$
obtained as the solution of $M^\omega = M\,M^\omega$.
For a proper setup of $\inflamcal$ we refer to
\cite{bera:intr:1996,kenn:klop:slee:vrie:1997,kenn:vrie:2003,bare:klop:2009}.
%
In a nutshell, 
$\inflamcal$ extends finitary $\lambda$-calculus by admitting infinite $\lambda$-terms, 
the set of which is called $\infterm{\lambda}$, 
and infinite reductions (in~\cite{kenn:vrie:2003,bare:klop:2009} 
possibly transfinitely long, in~\cite{bera:intr:1996} of length $\leq \omega$).
Limits of infinite reduction sequences are obtained by a strengthening of Cauchy-convergence, 
stipulating that the depth of contracted redexes must tend to infinity.
The $\inflamcal$-calculus is not infinitary confluent ($\CRinf$), 
but still has unique infinite normal forms ($\UNinf$). 
\boehm{} Trees (BT's) without $\bot$ are infinite normal forms in $\inflamcal$.
But beware, the reverse does not hold, e.g.\ $\mylam{x}{(\mylam{x}{(\mylam{x}{\ldots})})}$
is an infinite normal form, but not a BT; it is in fact an LLT (\levy{} Tree, 
and also a BeT (Berarducci Tree). 
The notions BT, LLT, BeT are defined e.g.\ in~\cite{bare:klop:2009},
and in~\cite{beth:klop:vrij:2000}.
These notions are also defined in Sections~\ref{sec:clocked} and~\ref{sec:levy}, 
via their clocked versions.

\begin{defi}
  For terms $A,B$ we define
  $\leftappiterate{A}{B}{n}$ and $\rightappiterate{A}{n}{B}$:
  \begin{align*}
  \leftappiterate{A}{B}{0}    & = A 
  &&& 
  \rightappiterate{A}{0}{B}   & = B 
  \\
  \leftappiterate{A}{B}{n+1}  & = \leftappiterate{\app{A}{B}}{B}{n} 
  &&& 
  \rightappiterate{A}{n+1}{B} & = \app{A}{(\rightappiterate{A}{n}{B})}
  \end{align*}
  A context of the form $\leftappiterate{\cxthole}{B}{n}$ 
  is called a \emph{vector}.
  For the vector notation, it is to be understood that 
  term formation gets highest priority,
  i.e., 
  $\leftappiterate{AB}{C}{n} = \leftappiterate{(AB)}{C}{n}$.
\end{defi}

\section{Fixed Point Combinators}\label{sec:fpcs}

\begin{quotation}
  \noindent
  \textit{%
    The theory of sage birds (technically called \emph{fixed point combinators})
    is a fascinating and basic part of combinatory logic;
    we have only scratched the surface.%
  }
  \\
  \hspace*{\fill} R.~Smullyan~\cite{smull:1985}.
\end{quotation}

\subsection{The \boehm{} Sequence}\label{sec:boehm}

There are several ways to make fpc's. 
For heuristics behind the construction of Curry's fpc 
$\fpcC \defeq \mylam{f}{\omega_{f}\omega_{f}}$,
with $\omega_{f} \defeq \mylam{x}{f(xx)}$, 
and Turing's fpc $\fpcT \defeq \eta\eta$
with  $\eta \defeq \mylam{xf}{f(xxf)}$,
see~\cite{bare:1984,klop:2007}.
%

%
It is well-known, as observed by C.~B\"{o}hm~\cite{bohm:1963,bare:1984}, 
that the class of fpc's coincides exactly with the class of fixed points 
of the peculiar term $\delta = \mylam{ab}{b(ab)}$, convertible with $\comb{S}\comb{I}$. 
The notation $\delta$ is convenient for calculations 
and stems from~\cite{intri:1997}.
This term also attracted the attention of R.~Smullyan,
in his beautiful fable about fpc's figuring as birds 
in an enchanted forest: 
  ``An extremely interesting bird is the owl $O$ defined
    by the following condition: $Oxy = y(xy)$.'' 
  \cite
  {smull:1985}. 
We will return to the Owl in Remark~\ref{rem:owl} below.

Thus the term $Y \delta$ is an fpc whenever $Y$ is.
It follows that starting with $\fpc{0}$, Curry's fpc,
we have an infinite sequence of fpc's
$\fpc{0}, \fpc{0}\delta, \fpc{0}\delta\delta, \ldots, \leftappiterate{\fpc{0}}{\delta}{n}, \ldots$;
we call this sequence the `\boehm{} sequence'.
Note that $\fpc{0}\delta \conv \eta\eta$,
justifying the overloaded notation $\fpc{1}$. 
Now the question is whether all these `derived' fpc's are really new, 
in other words, whether the sequence is free of duplicates. 
This is {*}Exercise~6.8.9 in~\cite{bare:1984}. 

Note that we could also have started the sequence from another fpc than Curry's. 
Now for the sequence starting from an \emph{arbitrary} fpc $Y$, 
it is actually an open problem whether that sequence of fpc's 
$Y, Y\delta, Y\delta\delta, \ldots, \leftappiterate{Y}{\delta}{n}, \dots$ is free of repetitions. 
All we know, applying Intrigila's theorem, Theorem~\ref{thm:intrigila} below, 
is that no two consecutive fpc's 
in this sequence are convertible. But let us first consider the B\"{o}hm sequence. 

\begin{defi}\label{def:boehm:sequence}
  The \emph{\boehm{} sequence} is the sequence $(\fpc{n})_{n \ge 0}$ where $\fpc{n}$ is defined by 
  \begin{align*}
    \fpc{0} & =  \mylam{f}{\omega_f\omega_f}
    & \fpc{n} & = \leftappiterate{\eta\eta}{\delta}{(n-1)} \quad (n > 0)
  \end{align*}
\end{defi}
We show that the \boehm{} sequence contains no duplicates by 
determining the set of reducts of every $\fpc{n}$.
For $\fpc{3}$\footnote{%
  Actually Figure~\ref{fig:Y3:head:reduction:graph} displays $\fpc{3} x$.
  We will frequently consider $\afpc x$ instead of $\afpc$ as then the repetition is immediate,
  and because we have that if $M x \nconv M' x$ then also $M \nconv M'$.%
}
the head reduction is displayed in 
Figure~\ref{fig:Y3:head:reduction:graph},
but this is by no means the whole reduction graph.
For future reference we note that the head reduction diagram suggests a `clock behaviour'.
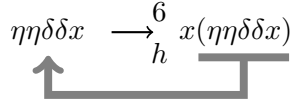
\begin{figure}[ht!]
  \begin{center}
  \begin{tikzpicture}[thick]
  \node (Y3) {$\eta \eta \delta \delta x$};
  \node (R) [right of=Y3,node distance=25mm] {$x(\eta \eta \delta \delta x)$};
  \draw [->,shorten >= 2mm,shorten <= 2mm] (Y3) -- (R) node [very near end,below] {$h$} node [very near end,above] {$6$};
  \draw [gray,line width=3pt] ($(R.south west)+(4mm,.0mm)$) -- ($(R.south east)+(-2mm,.0mm)$);
  \draw [->,gray,line width=3pt] ($(R.south west)!.5!(R.south east)+(1mm,.0mm)$) -- +(0mm,-.5cm) -| ($(Y3.south west)!.5!(Y3.south east)$);
  \end{tikzpicture}
  \caption{Head reduction of $\fpc{3} x$.} 
  \label{fig:Y3:head:reduction:graph}
  \end{center}
\end{figure}

\begin{thm}\label{thm:boehm:seq}
  The B\"{o}hm sequence $(\fpc{n})_{n \ge 0}$ contains no duplicates.
\end{thm}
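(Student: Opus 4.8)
The plan is to discriminate the fpc's $\fpc{n}$ pairwise by analyzing their head reduction behaviour, i.e.\ by an invariant approach tracking the set of reducts of each $\fpc{n} x$. For each $n$, the key object is the set $R_n$ of $\beta$-reducts of $\fpc{n} x$ (equivalently, a closed description of the head reduction graph together with the ``side'' reductions). Figure~\ref{fig:Y3:head:reduction:graph} already exhibits the shape of the head reduction for $\fpc{3} x$: it is a cycle in which $\fpc{3} x$ head-reduces in a fixed number of steps to $x(\fpc{3} x)$, regenerating itself under the head variable $x$. First I would make this precise for general $n$: show that $\fpc{n} x \hredn{k_n} x(\fpc{n} x)$ for a specific constant $k_n$ depending on $n$ (reading off from the definition $\fpc{n} = \leftappiterate{\eta\eta}{\delta}{(n-1)}$, one computes the number of head steps to unfold $\eta\eta$ and then successively fire the $n-1$ copies of $\delta$). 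The point is that $k_n$ is strictly monotone in $n$, so the ``tempo'' at which the B\"ohm tree $\mylam{f}{f^\omega}$ is produced differs for distinct $n$.

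Next I would turn this tempo observation into an actual $\beta$-inconvertibility proof. The cleanest route is: (i) describe the full reduction graph of $\fpc{n} x$ — every reduct is of the form $x(x(\cdots(x(M))\cdots))$ where $M$ is a reduct of $\fpc{n} x$ lying ``between'' two consecutive head-variable emissions, and there are only finitely many such intermediate terms up to the obvious structure; (ii) read off an invariant that is preserved by $\beta$-reduction and distinguishes $R_n$ from $R_m$ for $n \neq m$. The natural invariant is exactly the clocked-B\"ohm-tree datum of the introduction: the number of head reduction steps between successive emissions of the head variable $x$. Since $\fpc{n}$ and (after this analysis) all its reducts have constant ``gap'' essentially $k_n$ (up to bounded correction from non-head activity, which I would check does not affect the head reduction count because the duplicated subterms are themselves already the regenerating copy), and $k_n \neq k_m$ for $n \neq m$, no reduct of $\fpc{n} x$ can be a reduct of $\fpc{m} x$; hence by confluence $\fpc{n} x \nconv \fpc{m} x$, and therefore $\fpc{n} \nconv \fpc{m}$ (using the remark that $Mx \nconv M'x$ implies $M \nconv M'$).

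Concretely the bookkeeping I would carry out is: compute that $\omega_f \omega_f x \hredn{?} x(\omega_f\omega_f x)$ giving $k_0$; compute that firing one $\delta$ adds a fixed number $c$ of head steps, so that $k_n = k_0' + cn$ for appropriate constants (the constant $c$ coming from $\delta = \mylam{ab}{b(ab)}$ needing two head steps to expose its head, roughly); and verify that the intermediate reducts never provide a ``shortcut'' that would let $\fpc{n} x$ reach a term with a smaller gap. This last verification — that the head reduction length is a genuine invariant of the reduction graph and not merely of one particular reduction path — is the main obstacle. It amounts to checking that $\beta$-reduction steps performed off the head position commute appropriately with the head reduction and do not create new redexes that shorten the head path; this is where one must be slightly careful, and it is precisely the phenomenon that the later machinery of ``simple terms'' (Theorem~\ref{thm:simple}) is designed to handle cleanly — here, for the B\"ohm sequence, it can be done by hand since the $\fpc{n}$ are simple (their reduction to the B\"ohm tree duplicates no redex other than the regenerating copy of themselves, which is harmless).
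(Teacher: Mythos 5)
Your strategy is sound, but it is not the route the paper takes for this theorem: the paper's own proof is a purely syntactic invariant argument. It defines, for each $n$, a grammar-generated language $\lang{n}$ of $\lambda$-terms, checks that $\fpc{n}\in\lang{n}$, that each $\lang{n}$ is closed under $\beta$-reduction (a short case analysis on the grammar), and that the languages are pairwise disjoint (for $n,m>1$ by counting passive $\delta$'s, i.e.\ occurrences of the form $P\delta$, with a separate small observation separating $\lang{0}$ from $\lang{1}$); disjointness of the reduct sets plus confluence then gives $\fpc{n}\nconv\fpc{m}$. What you propose is instead exactly the clock method that the paper develops in Section~\ref{sec:clocked} and applies in Example~\ref{ex:boehm:seq} as an \emph{alternative} proof of this same theorem: $\fpc{n}x$ reduces to (or is) a simple term whose head reduction emits $x$ every $2n$ steps for $n\ge 2$ (versus eventually $1$ and $2$ for Curry's and Turing's fpc's), and simple terms with distinct eventual clocks are inconvertible. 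The trade-off is that the grammar proof is elementary and self-contained but ad hoc, while your route is systematic but rests on the lemmas you only gesture at: that clocks can only accelerate under reduction (Proposition~\ref{prop:clocks}) and that for simple terms they are invariant up to a finite prefix (Proposition~\ref{prop:simple}, proved via elementary diagrams using that a head step in a simple term is never duplicated), so that the comparison must be made \emph{eventually}, any finite prefix of annotations being reducible to $0$ (Remark~\ref{rem:prefix}). With those lemmas supplied --- or verified by hand for these particular terms, as you suggest --- together with the explicit computation $\fpc{n}x\hredn{2n}x(\fpc{n}x)$ and the check that none of these head steps duplicates a redex (the duplicated arguments $\omega_f$, $\eta$, $\delta$, $x$ are normal forms, so the redexes are simple), your argument is complete and coincides with the paper's Example~\ref{ex:boehm:seq} and Corollary~\ref{cor:simple:simple}.
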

\begin{proof}
  We define languages $\lang{n}\subseteq\lterm$ as follows:
  \begin{align*}
  \lang{0} & \BNFis 
    \mylam{f}{\rightappiterate{f}{k}{(\omega_f\omega_f)}}
    && (k \geq 0)
  \\
  \lang{1} & \BNFis 
    \eta\eta %
    \BNFor 
    \mylam{f}{\rightappiterate{f}{k}{(\lang{1} f)}} %
    && (k > 0)
  \\
  \lang{n} & \BNFis 
    \lang{n-1}\delta 
    \BNFor 
    \mylam{b}{b^k (\lang{n}b)}
    \BNFor 
    \delta \lang{n}
    && (n > 1, k > 0)
  \end{align*}
  We show that:
  \begin{enumerate}
  \item
    $\fpc{n} \in \lang{n}$;
  \item
    $\lang{n}$ is closed under \mbox{$\beta$-reduction}; and
  \item
      $\lang{n}$ and $\lang{m}$ are disjoint, for $n \neq m$.
  \end{enumerate}
  Then it follows that $\lang{n}$ contains the set of $\smred_\beta$-reducts of $\fpc{n}$,
  and using (iii) this implies that $\fpc{n} \nconv \fpc{m}$
  for all $n\ne m$.
  
  For (i) note that $\fpc{0} \in \lang{0}$, 
  and 
  $\fpc{n} = \leftappiterate{\eta\eta}{\delta}{(n-1)}$
  which is in $\lang{n}$ by induction on $n > 0$. 
  
  We show (ii): if $M\in\lang{n}$ and $M \to N$, then $N\in\lang{n}$.
  Using induction, we do not need to consider cases where the rewrite
  step is inside a variable of the grammar.
  We write $\lang{n}$ in terms as shorthand for
  a term $M \in \lang{n}$.
  \begin{enumerate}[label=($\lang{\arabic*}$),start=0]
    \item 
    We have $\mylam{f}{\rightappiterate{f}{k}{(\omega_f\omega_f)}} 
    \to \mylam{f}{\rightappiterate{f}{k+1}{(\omega_f\omega_f)}} \in \lang{0}$.

    \item 
    We have $\eta\eta \to \mylam{f}{f(\eta\eta f)} \in \lang{1}$,\\
    and $\mylam{f}{\rightappiterate{f}{{k}}{(\mylam{f}{\rightappiterate{f}{{\ell}}{(\lang{1} f)}} f)}}
    \to \mylam{f}{\rightappiterate{f}{{k+\ell}}{(\lang{1} f)}} \in \lang{1}$.

    \item [($\lang{n}$)]
    \noindent
    Case 1:
    $\mylam{f}{\rightappiterate{f}{{k}}{(\lang{1} f)}} \delta 
    \to \rightappiterate{\delta}{{k}}{(\lang{1} \delta)} \in \lang{n}$
    for $n = 2$,
    and
    $(\mylam{b}{b^k(\lang{n-1}b)}) \delta \to \delta^k (\lang{n-1} \delta)$ $\in \lang{n}$ 
    for $n > 2$.
    
    \noindent
    Case 2: 
    $\mylam{b}{b^k(\mylam{c}{c^\ell(\lang{n} c)} b)} \to \mylam{b}{b^{k+\ell}(\lang{n} b)} \in \lang{n}$.

    \noindent
    Case 3: $\delta \lang{n} \to \mylam{b}{b(\lang{n} b)} \in \lang{n}$. 
  \end{enumerate}

  \noindent For $n \neq m$, $n > 1$, 
  (iii) follows by counting the number of passive $\delta$'s,
  that is, the number of occurrences of the form $P \delta$ for some $P$.
  To see that $\lang{0} \cap \lang{1} = \setemp$,
  note that if $M \in \lang{1}$ 
  is an abstraction,
  then $M \equiv \mylam{f}{\rightappiterate{f}{k}{(P f)}}$
  containing a subterm $P f$ which is never the case in~$\lang{0}$.
\end{proof}

A very interesting theorem involving $\delta$ was proved by B.~Intrigila,
affirming a conjecture by R.~Statman. 

\begin{thm}[Intrigila~\cite{intri:1997}]\label{thm:intrigila}
  There is no `double' fixed point combinator.
  That is, for no fpc $\afpc$\! we have $\afpc\delta \conv \afpc$.
\end{thm}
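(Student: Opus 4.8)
The plan is to argue by contradiction, using the \emph{tempo} of head reduction. Suppose $\afpc$ is an fpc with $\afpc\delta \conv \afpc$; then, iterating the hypothesis, $\leftappiterate{\afpc}{\delta}{n} \conv \afpc$ for every $n \ge 0$. The key observation compares the head reductions of $\afpc x$ and $\afpc\delta x$ for a fresh variable $x$. First head-reduce $\afpc$ to an abstraction $\mylam{a}{M}$; this takes the same finite number of head steps in both computations, since it does not depend on the argument. Since $\afpc x$ has head normal form $x(\ldots)$ with a single argument, the head reduction of $\subst{M}{a}{x}$ eventually brings the substituted variable into head position, and up to that moment the head reductions of $\subst{M}{a}{x}$ and $\subst{M}{a}{\delta}$ run in lock-step. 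But where the former has reached the head normal form $x N$, the latter is at $\delta N' x$ with $\delta = \mylam{ab}{b(ab)}$ in head position; it must contract this, at the cost of two extra head steps, and --- crucially --- leaves a fresh copy of $\delta$ guarding the tail, so the phenomenon recurs at the next level. Hence the clock of $\afpc\delta$ is pointwise at least that of $\afpc$ and strictly larger at infinitely many levels, and the clock of $\leftappiterate{\afpc}{\delta}{n}$ grows without bound in $n$.

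If $\afpc$ reduces to a simple term this already settles the matter: $\afpc\delta$ then also reduces to a simple term, and since its clock is not eventually faster than that of $\afpc$, Theorem~\ref{thm:simple} yields $\afpc\delta \nconv \afpc$, contradicting the hypothesis. (This covers, in particular, the B\"ohm sequence of Theorem~\ref{thm:boehm:seq} and all fpc's built by the modular schemes of Section~\ref{sec:fpcs}.) The genuinely hard case is that of an arbitrary fpc $\afpc$: now the head reduction $\afpc x \mhred x(\ldots)$ may duplicate the slot occupied by $a$, the clock of $\afpc$ need not be independent of the chosen reduction, and Theorems~\ref{thm:general} and~\ref{thm:simple} do not apply directly.

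For the general case I would pass, by Church--Rosser, to a common reduct $W$ with $\afpc \mred W$ and $\afpc\delta \mred W$ (allowing oneself $\leftappiterate{\afpc}{\delta}{n} \conv \afpc$ for large $n$ to gain room), and then run a residual-tracking argument in the spirit of Intrigila~\cite{intri:1997}: trace the descendants of the displayed copies of $\delta$, and of the residuals of the $a$-slot, along $\afpc\delta \mred W$, and derive from a counting argument on the depth at which a residual of $\delta$ must still be contracted that $W$ cannot at the same time be a $\beta$-reduct of $\afpc$, which is under no such obligation. The main obstacle is exactly this step: controlling the duplication of the $\delta$-slot under arbitrary, non-head, non-simple reductions, and extracting from the head-reduction tempo an invariant robust enough to survive them. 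This is the combinatorial core of Intrigila's argument, and I do not expect a substantial simplification.
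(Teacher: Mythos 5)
Be aware that the paper does not prove this statement at all: it is imported verbatim from Intrigila~\cite{intri:1997}, and the concluding remarks even list re-deriving/extending Intrigila's result by the clocks method as an open problem. So your proposal is not reconstructing an argument of the paper but attempting a new clock-based proof, and it does not close. The decisive gap is the one you flag yourself: the theorem quantifies over \emph{all} fpc's $\afpc$, and for an arbitrary (non-simple) $\afpc$ nothing in the paper controls the clocks of the reducts of $\afpc\delta$ or of $\afpc$, so Theorems~\ref{thm:general} and~\ref{thm:simple} give no purchase; the promised ``residual-tracking argument in the spirit of Intrigila'' is a placeholder, not an argument. Since the general case is precisely the content of the theorem (the paper explicitly notes that already $\afpc \nconv \leftappiterate{\afpc}{\delta}{n}$ for $n\ge 2$ and arbitrary $\afpc$ is open), what you leave unproved is the whole statement.

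Even the special case you claim to settle has concrete holes. First, from ``$\afpc$ reduces to a simple term'' you conclude that $\afpc\delta$ also reduces to a simple term; this is unjustified: simplicity is not preserved under applying $\delta$, because substituting $\delta$ for the abstracted variable can destroy call-by-value redexes occurring along the head reduction (a normal-form argument such as $a\,y$ becomes the redex $\delta y$ after the substitution $a:=\delta$), so the redexes contracted in the BT-computation of $\afpc\delta$ need not be simple. And note that Theorem~\ref{thm:simple} must be applied with the \emph{simple} term on the slow side, i.e.\ you need a simple reduct of $\afpc\delta x$, which is exactly what this step fails to supply; with the simple reduct of $\afpc$ on the other side the hypothesis ``does not improve eventually'' points the wrong way and yields nothing. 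Second, your clock comparison is established only at the root: the lock-step argument gives the two extra head steps there, but afterwards the subtrees to be compared are $N$ (for $\afpc x$) and $N'x$ (for $\afpc\delta x$), which are different terms not related by any substitution-lock-step correspondence, so ``the phenomenon recurs at the next level'' and the claimed relation $\cbohm{\afpc x} \rel{\le} \cbohm{\afpc\delta x}$ (with infinitely many strict inequalities) are asserted rather than proved. As it stands, the proposal establishes neither the general theorem nor, rigorously, the restricted case.
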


\begin{rem}[Smullyan's Owl $\comb{S}\comb{I} \conv \delta \defeq \mylam{ab}{b(ab)}$]\label{rem:owl}\mbox{}\\
  We collect some salient facts and questions.
  \begin{enumerate}
    \item
      If $Z$ is a wfpc, both $\delta Z$ and $Z \delta$ are wfpc's \cite{smull:1985}.
    \item
      Call an applicative combination of $\delta$'s a \mbox{\emph{$\delta$-term}},
      that is, a term solely built from $\delta$ and application.
      In spite of $\delta$'s simplicity, not all \mbox{$\delta$-terms} 
      are strongly normalizing (SN). 
      An example of a \mbox{$\delta$-term} with infinite reduction 
      is $\delta\delta(\delta\delta)$ 
      (Johannes Waldman, Hans Zantema, personal communication, 2007).
    \item 
      Let $M$ be a non-trivial $\delta$-term, i.e., not a single $\delta$.
      Then $M$ is SN iff $M$ contains exactly one occurrence of $\delta\delta$.
      Furthermore, if $\delta$-terms $M,M'$ are SN, then they are convertible if and only if 
      $M,M'$ have the same length~\cite{klop:2007}. 
      Here the length of a $\delta$-term is the number $\delta$'s in the term.
    \item 
      Convertibility is decidable for $\delta$-terms \cite{stat:1989}.
    \item
      We define $\Delta = \delta^\omega$, so $\Delta \equiv \delta\Delta$.
      Then, the infinite $\lambda$-term $\Delta$ is an fpc:
      $\Delta x \equiv \delta\Delta x \mred x (\Delta x)$.
      The term $\Delta$ can be normalized: $\Delta \to_{\omega} \mylam{f}{f^\omega}$.
      There are many more infinitary fpc's, e.g., for every $n$,
      the infinite term $\leftappiterate{(\comb{S}\comb{S})^\omega}{\comb{S}}{n} \comb{I}$ is one,
      as will be clear from the sequel.
    \item 
      The term $\delta\delta(\delta\delta)$ has no hnf, 
      and hence its \bohm{} Tree is trivial, $\sbohm(\delta\delta(\delta\delta)) \equiv \sink$.
      However, its \ber{} Tree is not trivial.
      Zantema remarked that $\delta$-terms, even infinite ones,
      such as $\Delta\Delta$, are ``top-terminating'' 
      (Zantema considered the applicative rule for $\delta$ only ---
      we expect that his observation remains valid for the \mbox{$\lambda\beta$-version}).
    \item 
      Is Intrigila's theorem also valid for wfpc's:
      for no wfpc $\awfpc$ we have $\awfpc \delta \conv \awfpc$?
  \end{enumerate}
\end{rem}

\subsection{The Scott Sequence}\label{sec:scott}

In \cite[p.~360]{scott:1975} the equation $\comb{B}\fpcC = \comb{B}\fpcC\comb{S}$ 
is mentioned as an interesting example of
an equation not provable in $\lamcal$%
  \footnote{This equation is also discussed in~\cite{deza:seve:vrie:2003}.},
while easily provable with Scott's Induction Rule.
Scott mentions that he expects that using `methods of \boehm' 
the non-convertibility in $\lamcal$ can be established, 
but that he did not attempt a proof. 
On the other hand, with the induction rule the equality is easily established. 
We will not consider Scott's Induction Rule, but we will be working 
in the infinitary lambda calculus, $\inflamcal$.
It is readily verified that in $\inflamcal$ we have:
\[
  \comb{B} \fpcC =_{\inflamcal} \comb{B} \fpcC \comb{S} =_{\inflamcal} \mylam{ab}{(ab)^{\omega}}
\]

\begin{prop}\label{prop:BY:neq:BYS}
  $\comb{B}\fpcC \notconv \comb{B} \fpcC \comb{S}$
\end{prop}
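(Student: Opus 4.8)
The plan is to mimic the proof of Theorem~\ref{thm:boehm:seq}: find a language $\lang{}$ of $\lambda$-terms that is closed under $\beta$-reduction, contains $\comb{B}\fpcC$ (or rather $\comb{B}\fpcC x y$, applied to fresh variables), yet provably excludes every $\beta$-reduct of $\comb{B}\fpcC\comb{S}$. The natural idea is to exploit the structural distinction that Scott's equation hides: $\comb{B}\fpcC x y$ head-reduces to $x(\fpcC(xy))$, whereas $\comb{B}\fpcC\comb{S}xy$ head-reduces to $x(\fpcC(\comb{S}xy)) = x(\fpcC(xy(yy)))$ — after the first head step the arguments accumulated inside the $\fpcC$-redex differ, and this difference persists along all reductions. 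Concretely, I would apply both terms to two fresh variables $x$ and $y$ and track, as an invariant of all reducts, the shape of the subterm sitting underneath the outermost spine $x(x(\cdots))$: in the reducts of $\comb{B}\fpcC xy$ this subterm is built only from $x$, $y$, $\omega_{f}$-like blocks, and applications of the form $f(\cdots)$ with exactly one argument per application node on the relevant spine, while in the reducts of $\comb{B}\fpcC\comb{S}xy$ one forever finds a residual of the extra argument $(yy)$ (equivalently, a binary application $yy$ appearing in a fixed structural position, or an application node with two arguments where the first term has only one).

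The key steps, in order, are: (1) compute the head reductions $\comb{B}\fpcC xy \smred x(\fpcC(xy)) \equiv x(\fpcC(xy))$ and $\comb{B}\fpcC\comb{S}xy \smred x(\fpcC(xy(yy)))$, so the two terms are in weak head normal form with the same head $x$ but syntactically different arguments already after finitely many steps; (2) define, exactly as in Theorem~\ref{thm:boehm:seq}, a BNF-style language $\lang{L}$ generated from $x(\fpcC(xy))$ by all the productions one needs to absorb $\beta$-steps — namely the unfolding of $\fpcC$ (using $\fpcC M \smred M(\omega_M\omega_M)$ and $\omega_M\omega_M \to M(\omega_M\omega_M)$, so one needs nonterminals for "a tower $f^k(\omega_f\omega_f)$ with $f$ instantiated to a term of the language") and the generic spine-growth production $P \mapsto x\,P$; (3) check closure of $\lang{L}$ under $\to_\beta$ by the usual case analysis over which redex is contracted (redex inside a subterm of the language — handled by induction; redex at one of the finitely many "interface" positions — handled by the productions); (4) verify $\comb{B}\fpcC xy \in \lang{L}$ via step (1); and (5) exhibit a syntactic invariant separating $\lang{L}$ from every reduct of $\comb{B}\fpcC\comb{S}xy$. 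For (5) the cleanest route is: every term in $\lang{L}$, when read as a tree, has the property that on the leftmost spine below the top, every application node whose left child eventually heads to a variable has that variable be $x$ or $f$ applied to \emph{exactly one} non-trivial argument group, so the binary combination $y\,y$ — which is a residual of $\comb{S}$'s action and, by a parallel closure argument for the reducts of $\comb{B}\fpcC\comb{S}xy$, is never destroyed — cannot occur; hence $\comb{B}\fpcC\comb{S}xy \notin \lang{L}$ and is not $\beta$-convertible to any term in $\lang{L}$, so $\comb{B}\fpcC xy \notconv \comb{B}\fpcC\comb{S}xy$, and therefore $\comb{B}\fpcC \notconv \comb{B}\fpcC\comb{S}$ by the footnote remark that $Mx \notconv M'x$ implies $M \notconv M'$.

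The main obstacle will be step (5), making the separating invariant genuinely robust: after the $\fpcC$-redex fires, the argument $(xy)$ resp.\ $(xy(yy))$ gets substituted for the bound variable $f$ of $\fpcC$ and thereby copied into the growing tower $f(f(\cdots(\omega_f\omega_f)))$, so the distinguishing feature "$yy$ appears / does not appear" must be formulated so that it survives (a) duplication by the $\omega_f\omega_f$ self-application, (b) further head unfoldings that nest more copies of $xy(yy)$, and (c) reductions taking place \emph{inside} the copied argument (e.g. if $x$ itself later gets a value — but here $x,y$ are fresh variables, which is exactly why applying to variables is the right move, since then $xy$ and $xy(yy)$ and $yy$ are all frozen and no internal reduction is possible). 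I expect the honest work to be bookkeeping: one must set up $\lang{L}$ with enough nonterminals that \emph{all} reducts are captured (the tower can have arbitrary height, copies can be nested arbitrarily, and the outer $x$-spine can have arbitrary length), while keeping the grammar restrictive enough that the "no $yy$" invariant is manifestly preserved by every production. An alternative, possibly lighter, route worth mentioning is to instead separate by the \emph{clocked} \boehm{} Tree (Theorem~\ref{thm:general}): show that every reduct of $\comb{B}\fpcC\comb{S}$ has a clock strictly dominating that of $\comb{B}\fpcC$ at every level, because the extra $\comb{S}$ forces one additional head step per $x$-layer; but since clocked \boehm{} Trees are introduced only in Section~\ref{sec:clocked}, at this point in the paper the self-contained invariant argument above is the appropriate one.
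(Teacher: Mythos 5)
Your plan breaks down already at step (1), and the separating invariant you build on it does not exist. With $\comb{B}\defeq\mylam{xyz}{x(yz)}$ we have $\comb{B}\fpcC x y \mred \fpcC(xy)$, whose head normal form is $(xy)(\omega_{xy}\omega_{xy}) \equiv x\,y\,(\omega_{xy}\omega_{xy})$ --- not $x(\fpcC(xy))$. Likewise $\comb{B}\fpcC\comb{S}xy \mred \fpcC(\comb{S}x)\,y \mred \comb{S}\,x\,(\omega_{\comb{S}x}\omega_{\comb{S}x})\,y \to x\,y\,(\omega_{\comb{S}x}\omega_{\comb{S}x}\,y)$, and your identification $\comb{S}xy = xy(yy)$ is false: $\comb{S}xy \to \mylam{z}{xz(yz)}$, and one only gets $xy(yy)$ from $\comb{S}xyy$. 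Since in the actual reducts of $\comb{B}\fpcC\comb{S}xy$ the three arguments fed to each $\comb{S}$ are $x$, a tower built from $\omega_{\comb{S}x}$, and $y$, the subterm $yy$ never arises at all; so the invariant ``a residual of $yy$ persists in every reduct of $\comb{B}\fpcC\comb{S}xy$ but never occurs in $\lang{L}$'' is vacuous and cannot separate anything. The remaining steps (2)--(5), which you yourself flag as the honest work, are only a programme; with the wrong target terms the grammar you would write down does not even contain the reducts it is supposed to contain, so the closure and separation arguments would have to be redone from scratch around the correct shapes $(xy)^k(\omega_{xy}\omega_{xy})$ versus $x\,y\,(\cdots y)$-towers over $\omega_{\comb{S}x}$. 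Such an ad hoc invariant proof can probably be made to work, but as written the proposal is not a proof.

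For comparison, the paper's argument is a two-line reduction to Theorem~\ref{thm:boehm:seq}: postfix $\comb{I}$ rather than fresh variables. Then $\comb{B}\fpcC\comb{I} \conv \fpcC$ and $\comb{B}\fpcC\comb{S}\comb{I} \conv \fpcC(\comb{S}\comb{I}) \conv \fpcT$, and $\fpcC \notconv \fpcT$ is exactly the case $n=0,1$ of the duplicate-freeness of the \boehm{} sequence, which was already proved by an invariant argument; since $\conv$ is closed under application, $\comb{B}\fpcC \conv \comb{B}\fpcC\comb{S}$ would contradict this. If you want to salvage your approach, the cleanest fix is precisely this observation: choose the postfixed arguments so that the two terms collapse onto terms you have already discriminated, instead of inventing a new language of reducts for each pair.
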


\begin{proof}
  Postfixing the combinator $\comb{I}$ yields $\comb{B}\fpcC \comb{I}$ and $\comb{B}\fpcC \comb{S}\comb{I}$.
  Now $\comb{B} \fpcC \comb{I} \conv \fpcC$ and $\comb{B} \fpcC \comb{S}\comb{I} \conv \fpcC(\comb{S}\comb{I}) = \fpcT$.
  Because $\fpcC \nconv \fpcT$ (Theorem~\ref{thm:boehm:seq}), 
  the result follows.
  \end{proof}
\noindent
In the same way we can strengthen this non-equation to all fpc's $Y$, 
using 
Theorem~\ref{thm:intrigila}. 

\begin{rem}
  \hfill
  \begin{enumerate}
  \item
  The idea of postfixing an $\comb{I}$ is suggested by the \bohm{} Tree $\mylam{ab}{(ab)^{\omega}}$
  of $\comb{B}Y$ and $\comb{B}Y\comb{S}$. Namely, in $\inflamcal$ we calculate:
  $(\mylam{ab}{(ab)^{\omega}}) \comb{I} = \mylam{b}{(\comb{I} b)^{\omega}} = \mylam{b}{b^{\omega}}$
  which is the \bohm{} Tree of any fpc.
  \item
    Interestingly, Scott's equation $\comb{B} Y = \comb{B} Y \comb{S}$
    implies the equation of Statman and Intrigila,
    $Y = Y \delta$
    as one readily verifies, as in the proof of Proposition~\ref{prop:BY:neq:BYS}.
  \end{enumerate}
\end{rem}

\noindent Actually, the comparison between the terms $\comb{B}Y$ and $\comb{B}Y\comb{S}$ has more
in store for us than just providing an example that the extension
from finitary lambda calculus $\lambda\beta$ to infinitary lambda
calculus $\inflamcal$ is not conservative. 
The BT-equality of $\comb{B}Y$ and $\comb{B}Y\comb{S}$ suggests looking 
at the whole sequence
$\comb{B}Y, \comb{B}Y\comb{S}, \comb{B}Y\comb{S}\comb{S}, \ldots, \leftappiterate{\comb{B}Y}{\comb{S}}{n},\ldots$.
All these terms 
have the \bohm{} Tree $\lambda ab.(ab)^{\omega}$,
and hence they are not fpc's.
But 
postfixing an $\comb{I}$ turns them into fpc's. 

\begin{defi}
  The \emph{Scott sequence} is defined by:
  \[ 
    \comb{B}\fpcC \comb{I},\;
    \comb{B}\fpcC \comb{S}\comb{I} ,\;
    \comb{B} \fpcC \comb{S}\comb{S}\comb{I},\;
    \ldots,\;
    \leftappiterate{\comb{B}\fpcC}{\comb{S}}{n} \comb{I}, \ldots \]
  We write $\comb{U}_n = \leftappiterate{\comb{B}\fpcC}{\comb{S}}{n}\comb{I}$ for the $n$-th term in this sequence. 
\end{defi}
The Scott sequence concurs with
the \boehm{} sequence of fpc's only for the first two elements, and then
splits off with different fpc's. 
But there is a second surprise. 
In showing that $\comb{U}_n$ is an fpc,
we find as a bonus the fpc-generating vector 
$\leftappiterate{\cxthole(\comb{S}\comb{S})}{\comb{S}}{n}\comb{I}$
(which does preserve the property of fpc's to be reducing).
%
%
\begin{thm}\label{thm:scott:sequence}
  Let $Y$ be a $k$-reducing fpc and $n \geq 0$. Then:
  \begin{enumerate}
  \item
      $\leftappiterate{\comb{B}Y}{\comb{S}}{n}\comb{I}$ is a (non-reducing) fpc;
  \item
    $\leftappiterate{Y(\comb{S}\comb{S})}{\comb{S}}{n}\comb{I}$ is a $(k+3n+7)$-reducing fpc.
  \end{enumerate}
\end{thm}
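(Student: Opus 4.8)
I would prove both parts by direct calculation in (infinitary) $\lambda$\nb-calculus, exploiting the definitions $\comb{B} = \mylam{xyz}{x(yz)}$, $\comb{S} = \mylam{xyz}{xz(yz)}$, $\comb{I} = \mylam{x}{x}$, and the fpc-property $\afpc x \conv x(\afpc x)$ (resp.\ $\afpc x \redn{k} x(\afpc x)$ for the reducing case). The key observation, already foreshadowed in the text, is that postfixing $\comb{I}$ collapses the $\comb{S}$'s in a controlled way, and that the intermediate term $\leftappiterate{Y(\comb{S}\comb{S})}{\comb{S}}{n}$ behaves like $\leftappiterate{\comb{B}Y}{\comb{S}}{n}$ once fed the argument $\comb{I}$. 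So the heart of the matter is a reduction lemma of the shape $\leftappiterate{\comb{B}Y}{\comb{S}}{n}\comb{I} \mred \text{(something built from $Y$ and $\comb{I}$ that is visibly an fpc)}$, plus its reducing-cost refinement.

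\emph{Part (i).} First I would compute the base case $n=0$: $\comb{B}Y\comb{I}\,x \mred Y(\comb{I} x) \mred Yx \conv x(Yx)$, using $\comb{B}Yfz \mred Y(fz)$, and note $\comb{B}Y\comb{I}$ is an fpc (indeed $\comb{B}Y\comb{I} \conv Y$). For the inductive step, the clean route is to show $\leftappiterate{\comb{B}Y}{\comb{S}}{n+1}\comb{I}\,x \mred \leftappiterate{\comb{B}Y}{\comb{S}}{n}(\comb{S}\comb{I}\,(\text{residual}))\cdots$— more precisely, to find a single reduct pattern. Cleaner still: prove by induction on $n$ that $\leftappiterate{\comb{B}Y}{\comb{S}}{n}\comb{I} \conv Y(\comb{S}\comb{I})^{?}$—but that is not literally right, so instead I would establish the direct identity $\leftappiterate{\comb{B}Y}{\comb{S}}{n}\comb{I}\,x \conv x(\leftappiterate{\comb{B}Y}{\comb{S}}{n}\comb{I}\,x)$ by unfolding: feeding $x$ to $\comb{B}Y\comb{S}^{\ldots}$ hits $\comb{B}$ last, producing $Y(\comb{S}^{\ldots}\comb{I}\,x)$; then each $\comb{S}$ is a $k$-ary combinator whose redex, applied to $\comb{I}$ and then to the tail, duplicates $x$'s and reorganizes, but crucially the head symbol that eventually surfaces is again $x$ wrapped around a term $\conv$ to the original. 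The bookkeeping for "non-reducing" is that $\comb{S}^n\comb{I}$ contracts to a term whose size grows with $n$, so there is no uniform $k$; but convertibility-as-fpc only needs the $\conv$ identity.

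\emph{Part (ii).} Here I would exploit $\comb{S}\comb{S} = \mylam{yz}{\comb{S}z(yz)}$ and compute the reduction cost precisely. The strategy: show $\leftappiterate{Y(\comb{S}\comb{S})}{\comb{S}}{n}\comb{I}\,x \redn{m} x(\leftappiterate{Y(\comb{S}\comb{S})}{\comb{S}}{n}\comb{I}\,x)$ with $m = k+3n+7$ by counting steps. Using $Y$ is $k$-reducing, $Y(\comb{S}\comb{S})\,w \redn{k} (\comb{S}\comb{S})(Y(\comb{S}\comb{S})\,w) = (\comb{S}\comb{S})(Y(\comb{S}\comb{S})\,w)$, then applying this to $\comb{S}^n$ and $\comb{I}$ and $x$, each $\comb{S}$ costs a fixed number of steps (contracting $\comb{S}abc \to ac(bc)$ is $3$ $\beta$-steps from $\mylam{xyz}{xz(yz)}$ applied to three args — actually $3$ steps), the two extra $\comb{S}$'s and the $\comb{I}$ and the final rearrangement contribute the constant $7$. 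I would tabulate: $k$ from $Y$, $3$ per each of the $n$ trailing $\comb{S}$'s, and $7$ for the fixed overhead ($\comb{S}\comb{S}$ unfolding $+$ the $\comb{I}$ step $+$ the combinatorial rearrangement to expose the head $x$). The self-reference is preserved because the argument position that gets the "recursive call" is exactly $\leftappiterate{Y(\comb{S}\comb{S})}{\comb{S}}{n}\comb{I}\,x$ up to $\conv$, and in fact up to $\mred$, which is what makes it \emph{reducing}.

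\emph{Main obstacle.} The delicate point is the precise step-count in (ii): getting the constant $7$ (and the coefficient $3$) exactly right requires careful attention to how many $\beta$-steps it takes to contract each combinator redex under the chosen reduction strategy (head reduction), and to verify that no redex is duplicated in a way that would inflate the count — i.e., that $\leftappiterate{Y(\comb{S}\comb{S})}{\comb{S}}{n}\comb{I}$ is "simple" in the sense of the paper along the relevant reduction. I expect the cleanest presentation is to first do $n=0$ explicitly, reducing $Y(\comb{S}\comb{S})\comb{I}\,x$ step by step to $x(Y(\comb{S}\comb{S})\comb{I}\,x)$ and literally counting $k+7$, then handle the inductive step by showing that wedging one more $\comb{S}$ before the $\comb{I}$ costs exactly $3$ additional steps and re-establishes the same pattern; the self-similar shape of $\comb{S}(\text{tail})(ww)$ is what makes this induction go through.
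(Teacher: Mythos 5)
Your proposal is essentially the paper's own proof: the paper just carries out the head reduction (its example with $n=3$), counting $k$ steps for the $k$-reducing $Y$\!, $3$ steps for pushing $\comb{S}\comb{S}(\cdot)$ past each trailing $\comb{S}$, and $3+3+1=7$ steps for the final $\comb{S}\comb{S}(\cdot)\comb{I}$, $\comb{S}\comb{I}(\cdot)x$ and $\comb{I}x$ contractions, which is exactly the $k+3n+7$ bookkeeping you plan to verify, with part~(i) then following by $\beta$-convertibility since $\leftappiterate{\comb{B}Y}{\comb{S}}{n}\comb{I}x \mred \leftappiterate{Y(\comb{S}\comb{S})}{\comb{S}}{(n-2)}\comb{I}x$ for $n\ge 2$. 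Only two small slips to repair when writing it out: contracting the ternary $\comb{B}$-redex absorbs just $Y$ and the first two $\comb{S}$'s, yielding $\leftappiterate{Y(\comb{S}\comb{S})}{\comb{S}}{(n-2)}\comb{I}x$ rather than $Y(\comb{S}^{\ldots}\comb{I}x)$, and $\comb{B}Y\comb{I} \mred \mylam{z}{Yz}$ is only an $\eta$-expansion of $Y$\!, not $\beta$-convertible to $Y$ in general (neither point affects the fpc claims).
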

The proof of Theorem~\ref{thm:scott:sequence} is easy:
see the next example.

\begin{exa}
  Let $Y$ be a $k$-reducing fpc.
  Then:
  \begin{align*}
    \comb{B} Y \comb{S} \comb{S} \comb{S} \comb{I} x
    & \mhred Y (\comb{S} \comb{S}) \comb{S} \comb{I} x 
    \hredn{k} \comb{S} \comb{S} (Y (\comb{S} \comb{S})) \comb{S} \comb{I} x
    \\
    & \hredn{3} \comb{S} \comb{S} (Y (\comb{S} \comb{S}) \comb{S}) \comb{I} x
    \hredn{3} \comb{S} \comb{I} (Y (\comb{S} \comb{S}) \comb{S} \comb{I}) x
    \\
    & \hredn{3} \comb{I} x (Y (\comb{S} \comb{S}) \comb{S} \comb{I} x)
    \hredn{1} x (Y (\comb{S} \comb{S}) \comb{S} \comb{I} x)
  \end{align*}
  This shows that $\leftappiterate{\comb{B} Y}{\comb{S}}{3} \comb{I}$ is a non-reducing fpc,
  and at the same time that $Y (\comb{S}\comb{S}) \comb{S} \comb{I} x$ is reducing.
\end{exa}


\subsection{Generalized Generation Schemes}\label{sec:schemes}

The schemes mentioned in 
Theorem~\ref{thm:scott:sequence} 
for generating new fixed points from old, 
are by no means the only ones. 
There are in fact infinitely many of such schemes. 
They can be obtained analogously to the ones that we extracted
above from the equation $\comb{B}Y = \comb{B}Y\comb{S} = \lambda ab.(ab)^{\omega}$, 
or the equation $Mab = ab(Mab)$. 
We only treat the case for $n=3$: 
consider the equation $Nabc = abc(Nabc)$. 
Then every solution $N$ is again a `pre-fpc', 
namely $N\comb{I}\comb{I}$ is an fpc: $N\comb{I}\comb{I}x \conv \comb{I}\comb{I}x(N\comb{I}\comb{I}x) \conv x(N\comb{I}\comb{I}x)$. 
\begin{enumerate}
  \item
  \mbox{$Nabc = Y(abc)$,
  which yields $N = (\lambda yabc.y(abc)))Y$} $= (\lambda yabc.\comb{B}\comb{B}\comb{B}yabc)Y$.
  We obtain $N = \comb{B}\comb{B}\comb{B}Y$. 
  \item 
  $N = Y \xi$ with $\xi = \lambda nabc.abc(nabc)$, 
  yielding the fpc-generating vector $\cxthole \xi \comb{I}\comb{I}$. 
  \item $Nabc = abc(Nabc) = \comb{S}(ab)(Nab)c$.
  So we take $Nab = \comb{S}(ab)(Nab)$, which yields $Nab = Y(\comb{S}(ab)) = \comb{B}\comb{B}\comb{B}Y(\comb{B}\comb{S})ab$.
  So $N = \comb{B}\comb{B}\comb{B}Y(\comb{B}\comb{S})$, and thus we find the equation 
  $\comb{B}\comb{B}\comb{B}Y = \comb{B}\comb{B}\comb{B}Y(\comb{B}\comb{S})$,
  in analogy with the equation $\comb{B}Y = \comb{B}Y\comb{S}$ above. 
\end{enumerate}
Also this equation spawns lots of fpc's as well as fpc-generating vectors. 
Let's abbreviate $\comb{B}\comb{S}$ by $\comb{A}$. 
First one forms the sequence 
\begin{align*}
  \comb{B}\comb{B}\comb{B}Y,\; \comb{B}\comb{B}\comb{B}Y\comb{A},\; \comb{B}\comb{B}\comb{B}Y\comb{A}\comb{A},\; \comb{B}\comb{B}\comb{B}Y\comb{A}\comb{A}\comb{A},\ldots
\end{align*}
These terms all have the BT $\lambda abc.abc(abc)^{\omega}$. 
They are not yet fpc's , but only `pre-fpc's'. 
But after postfixing this time $\ldots \comb{I}\comb{I}$ we do again obtain a sequence of fpc's: 
\begin{align*}
  \comb{B}\comb{B}\comb{B}Y\comb{I}\comb{I},\; \comb{B}\comb{B}\comb{B}Y\comb{A}\comb{I}\comb{I},\; \comb{B}\comb{B}\comb{B}Y\comb{A}\comb{A}\comb{I}\comb{I}, \ldots
\end{align*}
Again the first two coincide with $\fpcC,\fpcT$, 
but the series deviates not only from the \boehm{} sequence 
but also from the Scott sequence above. 
As above, the proof that a term in this sequence is indeed an fpc, 
yields an fpc-generating vector. 
%
In this way we find as a new fpc-generating scheme
\[ Y \Rightarrow \leftappiterate{Y (\comb{A}\comb{A}\comb{A})}{\comb{A}}{n} \comb{I}\comb{I} \]
We can derive many more of these schemes by proceeding
with solving the general equation $Na_{1}a_{2}...a_{n}=a_{1}a_{2}...a_{n}(Na_{1}a_{2}...a_{n})$,
bearing in mind the following proposition.
\begin{prop}
  If $N$ is a term satisfying
  $N a_1 a_2 \ldots a_n = a_1 a_2 \ldots a_n (N a_1 a_2 \ldots a_n)$,
  then $\leftappiterate{N}{\comb{I}}{(n-1)}$ is an fpc.
\end{prop}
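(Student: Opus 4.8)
The plan is to unfold the abbreviation $M \defeq \leftappiterate{N}{\comb{I}}{(n-1)}$ and invoke the defining equation of $N$ exactly once, generalizing the $n=3$ computation $N\comb{I}\comb{I}x \conv \comb{I}\comb{I}x(N\comb{I}\comb{I}x) \conv x(N\comb{I}\comb{I}x)$ treated above. We must show that $M$ is an fpc, i.e.\ that $Mx \conv x(Mx)$ for a fresh variable $x$. The case $n=1$ is degenerate: then $M \equiv N$ and the hypothesis $Na_1 = a_1(Na_1)$ is already the fpc property, so there is nothing to prove. Hence the substantial case is $n \ge 2$, which I treat next.

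For $n \ge 2$, note that $Mx \equiv N\comb{I}\cdots\comb{I}\,x$ with $n-1$ copies of $\comb{I}$, which — bracketed to the left as usual — is precisely $N$ applied to the $n$ arguments $a_1 \equiv \comb{I},\ \ldots,\ a_{n-1} \equiv \comb{I},\ a_n \equiv x$. Since $\conv$ is a congruence (being the equivalence closure of the compatible relation $\to_\beta$), the scheme equation $N a_1 \cdots a_n = a_1 \cdots a_n (N a_1 \cdots a_n)$, which holds for fresh variables $a_1,\dots,a_n$, remains valid after substituting these particular terms for the $a_i$. Instantiating this way gives
\[
  Mx \;\conv\; (\comb{I}\cdots\comb{I}\,x)\,(N\comb{I}\cdots\comb{I}\,x) \;\equiv\; (\comb{I}\cdots\comb{I}\,x)\,(Mx),
\]
where the prefix $\comb{I}\cdots\comb{I}\,x$ has $n-1 \ge 1$ leading copies of $\comb{I}$. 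Using $\comb{I}\comb{I}\to\comb{I}$ and $\comb{I}x\to x$ and contracting leftmost redexes, $\comb{I}\cdots\comb{I}\,x \mred x$, so $Mx \conv x(Mx)$, as required.

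I do not expect a genuine obstacle here; the argument is no longer than the $n=3$ example. The only two points deserving a line of care are: (a) that the scheme equation may be instantiated at the non-variable arguments $\comb{I},\dots,\comb{I},x$ — licensed by $\conv$ being a congruence and by the $a_i$ being chosen fresh for $N$; and (b) the bookkeeping that the right-hand occurrence $N\comb{I}\cdots\comb{I}\,x$ in the display is syntactically the very term $Mx$ we started from, so that the displayed conversion is literally the fixed-point equation for $M$. With these observed, the statement follows.
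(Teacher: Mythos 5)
Your proof is correct and is essentially the argument the paper intends: the paper only displays the $n=3$ instance $N\comb{I}\comb{I}x \conv \comb{I}\comb{I}x(N\comb{I}\comb{I}x) \conv x(N\comb{I}\comb{I}x)$ and leaves the general case implicit, and your instantiation $a_1,\dots,a_{n-1} := \comb{I}$, $a_n := x$ (licensed by closure of $\sconv$ under substitution) followed by $\comb{I}\cdots\comb{I}\,x \mred x$ is exactly that computation generalized, with the degenerate case $n=1$ handled correctly as well.
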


We finally mention an fpc-generating scheme with `dummy parameters':
\[ Y \Rightarrow Y Q P_{1} \ldots P_{n} \]
where $P_{1},\ldots,P_{n}$ are arbitrary (dummy) terms, 
and $Q = \mylam{yp_{1}...p_{n}x}{x(y p_1\ldots p_n x)}$.

\section{Clock Behaviour of Lambda Terms}\label{sec:clocked}

As we have seen, there is vast space of fpc's
and there are many ways to derive new fpc's.
The question is whether all these fpc's are indeed new.
So we have to prove that they are not $\beta$-convertible.

For the \bohm{} sequence we did this by an ad hoc argument
based on a syntactic invariant; 
and this method works fine to establish lots of non-equations
between the alleged `new' fpc's that we constructed above.
Still, the question remains whether there are 
not more `strategic' ways of proving such inequalities. 

In this section we propose a more strategic way 
to discriminate terms with respect to $\beta$-conversion.
The idea is to extract from a $\lambda$-term more than just its BT,
but also how the BT was formed; 
one could say, in what tempo, or in what rhythm.
A BT is formed from static pieces of information,
but these are rendered in a clock-wise fashion,
where the ticks of the internal clock are head reduction steps.

In the sequel we write $\annotate{k}{M}$ for
the term $M$ where the root is \emph{annotated with $k\in\nat$}.
Here, term formation binds stronger than annotation $\annotate{k}{}$.
For example $\annotate{k}{M N}$ stands for the term $\annotate{k}{(M N)}$
(that is, annotating the (non-displayed) application symbol in-between $M$ and $N$,
in contrast to $(\annotate{k}{M}) N$).
Moreover, for an annotated term $M$ we use $\deannotate{M}$
to denote the term obtained from $M$ by dropping all annotations (including annotations of subterms).
%
%
\begin{defi}[Clocked \bohm{} Trees]\label{def:cbohm}
  The \emph{clocked \bohm{} Tree $\cbohm{M}$} of a $\lambda$-term $M$
  is an annotated (potentially infinite) term coinductively defined as follows.
  If $M$ has no hnf, then define $\cbohm{M}$ as $\sink$.
  Otherwise,
  there is a head reduction $M \hredn{k} \mylam{x_1}{\ldots\mylam{x_n}{y M_1 \ldots M_m}}$ to hnf.
  Then we define 
  $\cbohm{M}$ as the term $\annotate{k}{\mylam{x_1}{\ldots\mylam{x_n}{y \cbohm{M_1} \ldots \cbohm{M_m}}}}$.
\end{defi}

The notions of subterms and positions (see Definition~\ref{def:pos}) 
carry over to annotated terms, in particular clocked \boehm{} Trees, in a straightforward way. 
The (non-clocked) \boehm{} Tree of a $\lambda$-term $M$
can be obtained by dropping the annotations:
$\sbohm(M) \defeq \deannotate{\cbohm{M}}$.


\begin{figure}[ht!]
  \begin{center}
  \begin{tikzpicture}[level distance=7mm,inner sep=1mm]
      \node  {$\treeap$} \annotatednode{$\treeap$}{2}
        child { node {$f$} }
        child { node {$\treeap$} \annotatednode{$\treeap$}{1}
          child { node {$f$} }
          child { node {$\treeap$} \annotatednode{$\treeap$}{1}
            child { node {$f$} }
            child { node {$\ddots$} \annotatednode{$\treeap$}{1}
            }
          }
        };
  \end{tikzpicture}
  \begin{tikzpicture}[level distance=7mm,inner sep=1mm]
      \node {$\treeap$} \annotatednode{$\treeap$}{2}
        child { node {$f$} }
        child { node {$\treeap$} \annotatednode{$\treeap$}{2}
          child { node {$f$} }
          child { node {$\treeap$} \annotatednode{$\treeap$}{2}
            child { node {$f$} }
            child { node {$\ddots$} \annotatednode{$\treeap$}{2}
            }
          }
        };
  \end{tikzpicture}
  \caption{\mbox{Clocked \bohm{} Trees of $\fpcC f$ and $\fpcT f$.}}
  \label{fig:boem:y0:y1}
  \end{center}
\end{figure}
Let us consider the fpc's $\fpcC$ of Curry and $\fpcT$ of Turing.
We have $\fpcC \equiv \mylam{f}{\omega_f\omega_f}$ 
where $\omega_f \equiv \mylam{x}{f(xx)}$, and
\begin{align*}
  \omega_f\omega_f \hredn{1} f (\omega_f\omega_f)
\end{align*}
Therefore we obtain $\cbohm{\fpcC f} = \annotate{2}{f \cbohm{\omega_f \omega_f}}$,
and $\cbohm{\omega_f \omega_f} = \annotate{1}{f \cbohm{\omega_f \omega_f}}$.

For $\fpcT \equiv \eta \eta$ where $\eta \equiv \mylam{x}{\mylam{f}{f (xxf)}}$ we get:
\begin{align*}
  \fpcT f \equiv \eta \eta f \hredn{2} f (\eta \eta f)
\end{align*}
Hence, $\cbohm{\fpcT f} = \annotate{2}{f \cbohm{\fpcT f}}$.
Figure~\ref{fig:boem:y0:y1} displays the 
clocked \bohm{} Trees of $\fpcC f$ (left) and $\fpcT f$ (right).

The following definition captures the well-known \boehm{} equality
of $\lambda$-terms.
\begin{defi}
  $\lambda$-terms $M$ and $N$ are \emph{$\sbohm$-equal},
  denoted by $M \treeequal{\sbohm} N$,
  if $\sbohm(M) \equiv \sbohm(N)$.
\end{defi}

If $M$ and $N$ are not $\sbohm$-equal, then $M \notconv N$.
Consequently, if for some $\lambda$-term~$F$, 
we have $\sbohm(M F) \not\equiv \sbohm(N F)$, then $M \notconv N$.

Below, we refine this approach by comparing
the clocked \bohm{} Trees $\cbohm{M}$ and $\cbohm{N}$
instead of the ordinary (non-clocked) \boehm{} Trees.
In general, $\cbohm{M} \not\equiv \cbohm{N}$
does not always imply that $M \notconv N$.
Nevertheless, for a large class of $\lambda$-terms, called `simple' below,
this implication will turn out to be true.

In the following definition, 
we lift relations over natural numbers to relations over clocked \boehm{} Trees.

\newcommand{\scbt}{T}
\newcommand{\cbt}{\sub{\scbt}}
\newcommand{\acbt}{\cbt{1}}
\newcommand{\bcbt}{\cbt{2}}

\begin{defi}
  Let $\acbt$ and $\bcbt$ be clocked \boehm{} Trees,
  and ${\mathrel{R}} \subseteq \nat \times \nat$.
  We define the following notations:
  \begin{enumerate}
    \item 
      For $\apos \in \pos{\acbt} \cap \pos{\bcbt}$
      we let $\acbt \relat{\mathrel{R}}{\apos} \bcbt$ denote 
      that either both $\subtrmat{\acbt}{\apos}$ and $\subtrmat{\bcbt}{\apos}$ are not annotated,
      or both are annotated and then
      $\subtrmat{\acbt}{\apos} \equiv \annotate{k_1}{\acbt'}$
      and $\subtrmat{\bcbt}{\apos} \equiv \annotate{k_2}{\bcbt'}$
      with $k_1 \mathrel{R} k_2$.

  \item 
    We write $\acbt \mathrel{R} \bcbt$
    if $\deannotate{\acbt} \equiv \deannotate{\bcbt}$
    and $\acbt \relat{\mathrel{R}}{\apos} \bcbt$ for every $\apos \in \pos{\acbt}$.
 
  \item 
    We write $\acbt \relev{\mathrel{R}} \bcbt$, and say that \emph{$\mathrel{R}$ holds eventually},
    if $\deannotate{\acbt} \equiv \deannotate{\bcbt}$ and
    there exists a depth level $\ell \in \nat$
    such that  $\acbt \relat{\mathrel{R}}{\apos} \bcbt$ for all positions $\apos \in \pos{\acbt}$ with $\length{p} \ge \ell$.

  \end{enumerate}

\end{defi}

%
%
%

%



\begin{defi}
For $\lambda$-terms $M$ and $N$ we say:
\begin{enumerate}
  \item \emph{$M$ improves $N$ globally} if $\cbohm{M} \rel{\le} \cbohm{N}$;
  \item \emph{$M$ improves $N$ eventually} if $\cbohm{M} \relev{\le} \cbohm{N}$; 
  \item \emph{$M$ matches $N$ eventually} if $\cbohm{M} \relev{=} \cbohm{N}$.
\end{enumerate}
\end{defi}

The following proposition states that
the ordering $\crel{>}$ on $\lambda$-terms defined by
$M \crel{>} N$ if and only if $\cbohm{M} \rel{\ge} \cbohm{N}$
is a `semi-model' of $\beta$-reduction~\cite{plotkin:91}. 
We leave this for future research.

\begin{prop}\label{prop:clocks}
  Clocks are accelerated under reduction, 
  that is, if $M {\mred} N$, then the reduct $N$ improves $M$ globally $(\cbohm{M} \rel{\ge} \cbohm{N})$.
  Dually, clocks slow down under expansion (the reverse of reduction).
\end{prop}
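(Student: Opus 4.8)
The plan is to compare $\cbohm{M}$ and $\cbohm{N}$ position by position. First, since B\"ohm Trees are invariant under $\beta$\nb-reduction and ``having a head normal form'' is closed under $\beta$\nb-conversion, we have $\sbohm(M) \equiv \sbohm(N)$; hence $\cbohm{M}$ and $\cbohm{N}$ have the same underlying (possibly infinite) term $\deannotate{\cbohm{M}} \equiv \deannotate{\cbohm{N}}$, and in particular if $M$ (equivalently $N$) has no hnf then both clocked trees are $\sink$ and there is nothing to prove. So assume both have an hnf; it remains to show that at every $\apos \in \pos{\cbohm{M}}$ the annotation of $\cbohm{M}$ dominates that of $\cbohm{N}$. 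The dual statement about expansion is just this same inequality read from $N$ back to $M$, so nothing further is needed for it.

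Write $h(P)$ for the length of the head reduction of $P$ to hnf, i.e.\ the root annotation of $\cbohm{P}$. It suffices to establish the following \emph{Main Lemma}: if $P \mred Q$ and $P$ has an hnf, then $Q$ has an hnf, $h(P) \ge h(Q)$, and --- writing the two hnfs as $\mylam{x_1\ldots x_n}{y\,A_1 \ldots A_p}$ and $\mylam{x_1\ldots x_n}{y\,B_1\ldots B_p}$, which have the same shape by invariance of $\sbohm$ under reduction --- one has $A_j \mred B_j$ for every $j \le p$. Granting this, one shows by induction on $\length{\apos}$ that the $\lambda$\nb-term produced at position $\apos$ in the construction of $\cbohm{M}$ reduces (iterating the last clause of the Main Lemma) to the one produced at $\apos$ in the construction of $\cbohm{N}$; the clause $h(P)\ge h(Q)$ then gives that the annotation of $\cbohm{M}$ at $\apos$ dominates that of $\cbohm{N}$, which is exactly $\cbohm{M} \rel{\ge} \cbohm{N}$.

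For the Main Lemma I would argue by induction on $h(P)$. If $h(P)=0$ then $P$ is already an hnf, and any reduction out of an hnf proceeds inside its arguments, so the statement is immediate. If $h(P)>0$, let $\Delta$ be the head redex of $P$ (its leftmost redex), so $P \hred P'$ by contracting $\Delta$ and $h(P') = h(P)-1$. By the Standardization Theorem we may take $P \mred Q$ to be standard, and then either (a) its first step is the head step $P \hred P'$, so $P' \mred Q$ and the induction hypothesis applies directly; or (b) no residual of $\Delta$ is ever contracted along $P \mred Q$, so $\Delta$ survives to $Q$ as its head redex, $Q \hred Q'$, and --- $\Delta$ lying disjoint from or above all redexes used in $P \mred Q$ --- the Parallel Moves (cube) lemma yields a reduction $P' \mred Q'$, to which the induction hypothesis applies since $h(P') < h(P)$. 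In both cases, combining the hypothesis with $h(P) = 1 + h(P')$ (and $h(Q) = 1 + h(Q')$ in case (b)) and with the fact that continuing head reduction from $P'$, resp.\ $Q'$, reaches the same hnf as from $P$, resp.\ $Q$, delivers $h(P)\ge h(Q)$ and the required reductions between the hnf arguments.

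\textbf{Main obstacle.} The delicate point is the residual bookkeeping behind clauses (a)/(b): when the redexes contracted in $P \mred Q$ lie inside the head argument of $\Delta$, contracting $\Delta$ duplicates or erases them, so one step of $P \mred Q$ corresponds to a complete development below $P'$, and some of those residuals may themselves become head redexes of $P'$ --- which is precisely why $h$ can strictly decrease and why Standardization, the Finite Developments Theorem and the Parallel Moves lemma are the right tools. Making the full case analysis airtight (head redex of $P$ contracted versus not, and the positions an internal redex may occupy relative to $\Delta$) is where the real work lies; everything after the Main Lemma is routine induction on positions in the B\"ohm Tree.
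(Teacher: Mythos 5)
Your proposal is correct, but it reaches the conclusion by a genuinely different route than the paper. You reduce everything to a Main Lemma comparing the lengths of head reductions to hnf, proved by induction on $h(P)$ after invoking the Standardization Theorem to split into the cases ``head redex contracted first'' versus ``head redex never contracted'', and then Finite Developments plus the Parallel Moves (cube) lemma to project the head step over the remaining internal reduction; your case analysis is sound (in a standard reduction the head redex, being leftmost and contained in no other redex, is either contracted at the very first step or survives untouched as the head redex of $Q$), and the final induction on positions is the routine part. The paper never standardizes: it decomposes $\mred$ into multi-steps $\dred$ (complete developments) and uses one elementary diagram --- a head step $M \hred M_1$ against a multi-step $M \dred M_2$ closes as $M_1 \dred M'$ and $M_2 \hredeq M'$, because a head redex can be erased but never duplicated by a multi-step --- to project the step $M \dred N$ over the entire head reduction $M \hredn{k} H$ to hnf, obtaining $N \hredn{\ell} H'$ with $\ell \le k$, and then recurses coinductively (induction on tree depth) into the arguments of the hnf. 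So the two proofs project in opposite directions: you project the head step over the (standardized) reduction and induct on the head-reduction length, while the paper projects the reduction over the head reduction and inducts on the number of multi-steps. What you buy is a self-contained, textbook-style argument using classical theorems; what the paper buys is that the residual bookkeeping you flag as the ``main obstacle'' is packaged once in the elementary diagram, avoiding Standardization altogether --- and the same diagram is then reused almost verbatim for Proposition~\ref{prop:simple}, which is harder to extract from your formulation.
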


\begin{proof}
  We proceed by an elementary diagram construction.
  Whenever we have co-initial steps $M \hred M_1$ and $M \dred M_2$,
  then by orthogonal projection~\cite{terese:2003}
  there exist joining steps $M_1 \dred M'$ and $M_2 \hredeq M'$.
  Note that the head step $M \hred M_1$
  cannot be duplicated, only erased in case of an overlap.
  This leads to the elementary diagram displayed in Figure~\ref{fig:elementary:diagram}.
  \vspace{-3ex}
  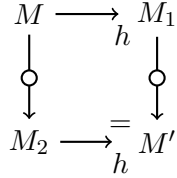
\begin{figure}[ht!]
  \begin{center}
  \begin{tikzpicture}[thick,node distance=17mm]
    \node (M) {$M$};
    \node (M1) [right of=M] {$M_1$};
    \node (M2) [below of=M] {$M_2$};
    \node (M') [below of=M1] {$M'$};
    \draw [->,shorten >= 1mm] (M) -- (M1) node [pos=.95,below] {$h$};
    \draw [->] (M) -- (M2); \fill [fill=white,draw=black] ($(M)!.5!(M2)$) circle (1mm);
    \draw [->,shorten >= 2mm] (M2) -- (M') node [pos=.9,below] {$h$} node [pos=.9,above] {$=$};
    \draw [->] (M1) -- (M'); \fill [fill=white,draw=black] ($(M1)!.5!(M')$) circle (1mm);
  \end{tikzpicture}
  \vspace{-2ex}
  \caption{Elementary diagram.}
  \vspace{-2ex}
  \label{fig:elementary:diagram}
  \end{center}
  \end{figure}

  We have ${\mred} \subseteq {\dred^*}$.
  By induction on the length of the rewrite sequence $\dred^*$
  it suffices to show that $M \dred N$ implies $\cbohm{M} \rel{\ge} \cbohm{N}$.
  Let $M \dred N$.
  If $M$ has no hnf, then the same holds for $N$, and hence $\cbohm{M} = \bot = \cbohm{N}$.
  Assume that there exists a head rewrite sequence 
  $M \hredn{k} H \equiv \mylam{x_1}{\ldots\mylam{x_n}{y M_1 \ldots M_m}}$ to hnf.
  We have $\cbohm{M} \equiv \annotate{k}{\mylam{x_1}{\ldots\mylam{x_n}{y \cbohm{M_1} \ldots \cbohm{M_m}}}}$.

  Using the elementary diagram above ($k$ times),
  we can project $M \dred N$ over $M \hredn{k} H$,
  and obtain
  $H \dred H'$, 
  $N \hredn{\ell} H' \equiv \mylam{x_1}{\ldots\mylam{x_n}{y M_1' \ldots M_m'}}$
  with $\ell \le k$.
  Then
  $\cbohm{N} \equiv \annotate{\ell}{\mylam{x_1}{\ldots\mylam{x_n}{y \cbohm{M_1'} \ldots \cbohm{M_m'}}}}$
  and $\ell \le k$.
  Since $H \dred H'$ and $H$ is in hnf,
  we get $M_i \dred M_i'$ for every $i = 1,\ldots,m$.
  We then can apply the same argument to $M_i \dred M_i'$
  and by coinduction (or induction on the depth), we obtain $\cbohm{M} \ge \cbohm{N}$.
\end{proof}

While $\cbohm{M} \not\equiv \cbohm{N}$ does not imply $M \notconv N$,
the following theorem allows us to use clocked \bohm{} Trees
for discriminating $\lambda$-terms:
%
%
\begin{thm}\label{thm:general}
  Let $M$ and $N$ be $\lambda$-terms.
  If $N$ cannot be improved 
  globally by any reduct of $M$, then $M \ne_\beta N$.
\end{thm}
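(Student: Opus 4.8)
The plan is to prove the contrapositive: assuming $M \conv N$, I will exhibit a common reduct whose clocked B\"ohm Tree can be compared to $\cbohm{N}$ and conclude that some reduct of $M$ improves $N$ globally. First I would invoke confluence of finitary $\lambda$\nb-calculus (the Church--Rosser theorem) to obtain a term $L$ with $M \mred L$ and $N \mred L$. The point of passing to $L$ is that it is simultaneously a reduct of $M$ and of $N$, so anything we can say relating $\cbohm{L}$ to $\cbohm{N}$ is a statement about a reduct of $M$.

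The key step is then Proposition~\ref{prop:clocks}: since $N \mred L$, clocks are accelerated under reduction, so $\cbohm{N} \rel{\ge} \cbohm{L}$, i.e.\ $\cbohm{L} \rel{\le} \cbohm{N}$. By the definition of ``improves globally'', this says precisely that $L$ improves $N$ globally. Since $L$ is a reduct of $M$ (as $M \mred L$), we have produced a reduct of $M$ that improves $N$ globally. Contrapositively, if no reduct of $M$ improves $N$ globally, then $M \nconv N$, which is the statement $M \ne_\beta N$. I would spell out the trivial point that $\cbohm{M} \rel{\le} \cbohm{N}$ and $\cbohm{N} \rel{\ge} \cbohm{M}$ are literally the same relation by symmetry of how $\rel{\ge}$ and $\rel{\le}$ are defined, so the direction of the inequality in Proposition~\ref{prop:clocks} matches the direction needed here.

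There is essentially no serious obstacle once Proposition~\ref{prop:clocks} and confluence are in hand; the only care needed is bookkeeping about which term plays the role of $M$ and which of $N$ in each application, and making sure the notion ``$N$ cannot be improved globally by any reduct of $M$'' is read correctly as ``for all $M'$ with $M \mred M'$, not $\cbohm{M'} \rel{\le} \cbohm{N}$''. One subtlety worth a remark: if $M$ and $N$ have no hnf then $\cbohm{M} = \cbohm{N} = \bot$ and the hypothesis fails trivially ($M$ itself improves $N$), so the theorem is vacuous in that case and the interesting content is when the B\"ohm Trees are nontrivial. I would close with the observation that the theorem is genuinely weaker than a biconditional --- $M$ and $N$ can be inconvertible while every reduct of each improves the other (when they share the same clocked B\"ohm Tree), which motivates the later refinement to atomic clocks --- but that asymmetry is exactly what makes the hypothesis usable: one only has to inspect reducts of $M$, not of $N$.
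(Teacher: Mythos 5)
Your proposal is correct and is essentially identical to the paper's own argument: assume $M \conv N$, obtain a common reduct by confluence, and apply Proposition~\ref{prop:clocks} to the reduction from $N$ to conclude that this reduct of $M$ improves $N$ globally. The extra remarks (the vacuous $\bot$ case, the asymmetry of the hypothesis) are fine but add nothing beyond the paper's two-line proof.
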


\begin{proof}
  If $M =_\beta N$, then $M \mred M' \mredi N$
  for some $M'$ by confluence. Hence $\cbohm{M'} \rel{\le} \cbohm{N}$ by Proposition~\ref{prop:clocks}.
\end{proof}

\noindent
Note that for distinguishing $M$ and $N$ we can always consider
$\beta$-equivalent terms $M' \conv M$ 
and $N' \conv N$ instead.
For Theorem~\ref{thm:general} we have to show $\neg(\cbohm{M'} \rel{\le} \cbohm{N})$ for all reducts $M'$ of $M$.
This condition is in general difficult to prove.
However, the theorem is of use if one of the terms 
has a manageable set of reducts, and this term happens to have slower clocks.
A striking example will be given below in solving a question of Plotkin in Section~\ref{sec:plotkin}.

For a large class of $\lambda$-terms it turns out that clocks are invariant under reduction.
We call these terms `simple'.
\begin{defi}
  A redex $(\mylam{x}{M}){N}$ is called:
  \begin{enumerate}\setlength{\itemsep}{0ex}
    \item \emph{linear} if $x$ has at most one occurrence in $M$;
    \item \emph{call-by-value} if $N$ is a normal form; and
    \item \emph{simple} if it is linear or call-by-value.
  \end{enumerate}
\end{defi}

The definition of simple redexes generalizes the well-known notions
of call-by-value and linear redexes. 
Next, we define simple \emph{terms}.
Intuitively, we call a term $M$ `simple' if every reduction
admitted by $M$ only contracts simple redexes.
The following definition 
further generalizes this intuition by 
considering only reductions computing the \boehm{} Tree.
\begin{defi}[Simple terms]
  A $\lambda$-term $M$ is \emph{simple}
  if either $M$ has no hnf, 
  or the head reduction to hnf $M \mhred \mylam{x_1}{\ldots\mylam{x_n}{y M_1 \ldots M_m}}$
  contracts only simple redexes,
  and $M_1,\ldots,M_m$ are simple terms.
\end{defi}
\noindent
Note that this definition is essentially coinductive:
the set of simple terms is the largest set $X$ such that 
if $M \in X$ then either $M$ has no hnf or the reduction to hnf
$M \mhred \mylam{x_1}{\ldots\mylam{x_n}{y M_1 \ldots M_m}}$
contracts only simple redexes and $M_1,\ldots,M_n \in X$ again.

All the fpc's in this paper are either simple or have simple reducts.
The clock of simple $\lambda$-terms is invariant under reduction,
that is, reduction of a simple term affects only finitely many annotations 
in the clocked \bohm{} Tree.
For example, by reducing a term we can always make the clock values
in a finite prefix equal to $0$.

\begin{prop}\label{prop:simple}
  Let $N$ be a reduct of a simple term $M$. Then $N$ matches $M$ eventually
  $(\cbohm{M} \relev{=} \cbohm{N})$.
\end{prop}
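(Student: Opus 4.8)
The plan is to strengthen the elementary diagram used in the proof of Proposition~\ref{prop:clocks} by exploiting simplicity. Recall that there we projected a multi-step $M \dred N$ over the head reduction $M \hredn{k} H$ to hnf, obtaining $N \hredn{\ell} H'$ with $\ell \le k$; the inequality $\ell \le k$ came from the fact that a head step can be \emph{erased} in an overlap. The key observation is that for simple terms this erasure cannot happen: if the contracted redex $(\mylam{x}{P})Q$ is linear then the head redex cannot be nested inside a duplicated-and-then-erased copy of $Q$ (there is at most one copy, and a single substituted copy of a head redex remains a head redex of the result); and if it is call-by-value then $Q$ is a normal form and contains no redex at all, so again no head redex can be lost. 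Hence in the elementary diagram of Figure~\ref{fig:elementary:diagram} specialised to simple redexes, the bottom step $M_2 \hredeq M'$ is always a genuine head step, not the empty step, so projecting $M \dred N$ over the head reduction to hnf yields $\ell = k$ exactly, and the residual multi-step $H \dred H'$ preserves the hnf shape $\mylam{x_1}{\ldots\mylam{x_n}{y M_i \ldots M_m}} \dred \mylam{x_1}{\ldots\mylam{x_n}{y M_i' \ldots M_m'}}$ with $M_i \dred M_i'$ and, crucially, each $M_i'$ again simple (simplicity is preserved under reduction, which follows directly from the coinductive definition together with confluence).

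First I would reduce to the case $M \dred N$ of a single multi-step, since ${\mred} \subseteq {\dred^*}$ and the relation $\relev{=}$ composes: if $\cbohm{M} \relev{=} \cbohm{N}$ and $\cbohm{N} \relev{=} \cbohm{P}$ with threshold depths $\ell_1$ and $\ell_2$, then $\max(\ell_1,\ell_2)$ witnesses $\cbohm{M} \relev{=} \cbohm{P}$. Next, for $M \dred N$ I would argue by induction on the depth at which annotations can differ, but the point is that at the root level they \emph{cannot} differ: by the erasure-free projection above, $M \hredn{k} H$ to hnf gives $N \hredn{k} H'$ with the same $k$, so the root annotations of $\cbohm{M}$ and $\cbohm{N}$ are equal, and we recurse into the arguments $M_i \dred M_i'$. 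Formally this is a coinductive argument showing $\cbohm{M} \rel{=} \cbohm{N}$ (full equality, threshold $0$) when $M$ is simple; the ``eventually'' in the statement is then automatic, and is the right formulation for $\mred$ because a general reduction $M \mred N$ decomposes into finitely many multi-steps, each of which might be followed by further multi-steps deeper in the tree — but for a single $\dred$ we actually get equality everywhere. I would phrase the final proposition's proof as: decompose $M \mred N$ as $M \dred M^{(1)} \dred \cdots \dred M^{(r)} \equiv N$; each $M^{(j)}$ is simple; for each $j$, $\cbohm{M^{(j)}} \rel{=} \cbohm{M^{(j+1)}}$ by the erasure-free diagram; hence $\cbohm{M} \rel{=} \cbohm{N}$, a fortiori $\cbohm{M} \relev{=} \cbohm{N}$.

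The main obstacle I expect is making precise the claim that ``a head redex inside the substituted argument of a simple redex survives as a head redex.'' One must be careful about the shapes: contracting a redex $R = (\mylam{x}{P})Q$ at some position in $M = \mylam{\vec z}{(\mylam{y}{M_0}) N_0 N_1 \cdots}$ while the designated head redex is $(\mylam{y}{M_0})N_0$. If $R$ is the head redex itself, $\ell$ decreases by one relative to $k$ but the head reduction simply lost one of its own steps — that is fine and still gives matching, since we measure $k$ along the \emph{chosen} head reduction $M \mhred$ which we may always take to proceed through the same contraction. If $R$ lies strictly inside one of the $N_i$ (hence inside an argument, not on the head path), contracting it commutes with the head step and does not change $k$. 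The genuinely delicate case is $R$ inside $N_0$ when $y$ occurs in $M_0$: after the head step, copies of $N_0$ (and of $R$) appear; linearity gives at most one copy, and call-by-value gives that $N_0$, being normal, contains no $R$ at all — so in the simple case the projection is clean. I would isolate this as a lemma (``projecting a multi-step of simple redexes over a head step yields a head step, not the empty step'') and prove it by the standard residual/labelling bookkeeping of orthogonal rewriting~\cite{terese:2003}, after which the proposition follows by the coinductive/decomposition argument sketched above.
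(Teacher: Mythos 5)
There is a genuine gap: your central lemma --- that for a simple term the projection of a reduction over the head reduction to hnf loses no head steps, so that a single $\dred$-step already gives full equality $\cbohm{M} \rel{=} \cbohm{N}$ of all annotations --- is false. The head-step count drops exactly when the contracted redex \emph{is} the head redex of $M$, or becomes the head redex later along the head reduction, and simplicity does nothing to prevent this. Concretely, $\fpcC f$ is simple (every redex on its head reduction is call-by-value), and $\fpcC f \hred \omega_f\omega_f$, yet $\cbohm{\fpcC f} = \annotate{2}{f \cbohm{\omega_f\omega_f}}$ while $\cbohm{\omega_f\omega_f} = \annotate{1}{f \cbohm{\omega_f\omega_f}}$: the root annotations are $2$ and $1$. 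Similarly $\comb{I}(\comb{I}x) \to \comb{I}x$ contracts a linear non-head redex that becomes a head redex after one head step, and the root annotation drops from $2$ to $1$. This is precisely why the proposition is stated up to \emph{eventual} matching and why Remark~\ref{rem:prefix} notes that a finite prefix of clock values can always be reduced to $0$. Your attempted escape in this case (``we measure $k$ along the chosen head reduction, which we may take to proceed through the same contraction'') does not work: head reduction is deterministic, so the annotation in $\cbohm{N}$ is determined by $N$ alone and genuinely differs from that in $\cbohm{M}$. Note also that your linearity/call-by-value analysis is aimed at the wrong phenomenon: the head redex can never sit inside the argument of another redex, so it cannot be erased by a disjoint step in \emph{any} term; what simplicity actually rules out is \emph{duplication} of redexes by head steps.

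The correct use of the elementary diagrams (and what the paper does) is quantitative rather than exact: since a head step of a simple term cannot duplicate a redex, co-initial head and arbitrary steps join either by cancellation or by a single projected head step, which yields the join property $M_1 \hredi M \to^n N \implies M_1 \to^{\le n} O \hredi N$ or $M_1 \to^{<n} N$, and then, by induction over the finite approximants $\nbohm{m}{M}$, the bound $\deannotate{\nbohm{m}{M}} \to^{\le n - d_m} \deannotate{\nbohm{m}{N}}$ where $d_m$ counts the positions at which annotations differ. Hence an $n$-step reduction changes annotations at no more than $n$ positions, which gives $\cbohm{M} \relev{=} \cbohm{N}$; equality everywhere is simply unattainable. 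Your proposal could be repaired by replacing the exact-preservation claim with such a per-step counting argument (one changed annotation per contracted step, transitivity of $\relev{=}$ over the finitely many steps of $M \mred N$), but as written the step ``for a single $\dred$ we actually get equality everywhere'' is wrong and the composition built on it collapses.
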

\begin{proof}
  The proof is a straightforward extension of the proof of~Proposition~\ref{prop:clocks}
  with the observation that for simple terms $M$,
  rewriting
  $M \hredn{k} H \equiv \mylam{x_1}{\ldots\mylam{x_n}{y M_1 \ldots M_m}}$ to hnf
  does not duplicate redexes.
  For simple terms $M$, the elementary diagrams
  are of the form displayed in Figure~\ref{fig:elementary:diagram:simple}.
  We use $s \stackrel{\varnothing}{\longrightarrow} t$ to denote empty steps, that is, $s \equiv t$.
  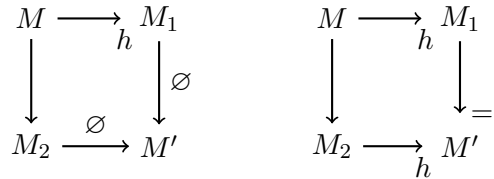
\begin{figure}[h!]
  \begin{center}
  \begin{tikzpicture}[thick,node distance=17mm]
    \node (M) {$M$};
    \node (M1) [right of=M] {$M_1$};
    \node (M2) [below of=M] {$M_2$};
    \node (M') [below of=M1] {$M'$};
    \draw [->,shorten >= 1mm] (M) -- (M1) node [pos=.95,below] {$h$};
    \draw [->] (M) -- (M2); 
    \draw [->] (M2) -- (M') node [midway,above] {$\varnothing$};
    \draw [->] (M1) -- (M') node [midway,right] {$\varnothing$};

    \begin{scope}[xshift=4cm]
    \node (M) {$M$};
    \node (M1) [right of=M] {$M_1$};
    \node (M2) [below of=M] {$M_2$};
    \node (M') [below of=M1] {$M'$};
    \draw [->,shorten >= 1mm] (M) -- (M1) node [pos=.95,below] {$h$};
    \draw [->] (M) -- (M2); 
    \draw [->,shorten >= 1mm] (M2) -- (M') node [pos=.9,below] {$h$};
    \draw [->,shorten >= 1.5mm] (M1) -- (M') node [pos=.9,right] {$=$};
    \end{scope}
  \end{tikzpicture}
  \vspace{-2ex}
  \caption{Elementary diagrams for simple $M$.}
  \vspace{-2ex}
  \label{fig:elementary:diagram:simple}
  \end{center}
  \end{figure}
  
  We briefly explain the two elementary diagrams.
  They depict the possible scenarios for joining co-initial steps $M \hred M_1$ and $M \to M_2$, that is,
  a head reduction step set out against an arbitrary reduction step.
  If $M$ is a simple term,
  then either 
  \begin{enumerate}
    \item\label{item:cancel}
    both are the same step, and can be joined with empty steps $\stackrel{\varnothing}{\longrightarrow}$, or
    \item\label{item:proj}
    they can be joined by $M_2 \hred M' \redi^= M_1$
    since a head step in a simple term with hnf cannot duplicate a redex (but deletion is possible).
  \end{enumerate}
  We show the following implication by induction on $n \in \nat$ (employing the diagrams in Figure~\ref{fig:elementary:diagram:simple}):
  \begin{align}
    M_1 \hredi M \to^n N \quad\implies\quad M_1 \to^{\le n} O \hredi N \;\vee\; M_1 \to^{< n} N
    \label{simple:join}
  \end{align}
  For $n = 0$, there is nothing to be shown.
  Let $n > 0$, and consider $M_1 \hredi M \to M_2 \to^{n-1} N$.
  Then either \eqref{item:cancel} $M_1 \equiv M_2$ and consequently $M_1 \to^{n-1} N$,
  or \eqref{item:proj} $M_1 \to^= M' \hredi M_2$
  and by induction hypothesis $M' \to^{\le n-1} O \hredi M_2$ or $M' \to^{< n-1} N$,
  yielding $M_1 \to^{\le n} O \hredi M_2$ or $M_1 \to^{< n} N$, respectively.

  For $\lambda$-terms $M$ and integers $n\in\nat$,
  we define finite approximations $\nbohm{n}{M}$ of the clocked \bohm{} Tree $\cbohm{M}$ of $M$.
  We let $\nbohm{0}{M} = M$, and for $n > 0$ define 
  \[\nbohm{n}{M} = \annotate{k}{\mylam{x_1}{\ldots\mylam{x_n}{y \nbohm{n-1}{M_1} \ldots \nbohm{n-1}{M_m}}}}\]
  if $M \hredn{k} \mylam{x_1}{\ldots\mylam{x_n}{y M_1 \ldots M_m}}$ to hnf,
  and  
  $\nbohm{n}{M} = \sink$ if $M$ has no hnf.
  Note that $\cbohm{M} = \lim_{n\to\infty} \nbohm{n}{M}$.

  Let $M,N$ be terms, $M$ simple and $M \to^n N$.
  Then we claim  
  \begin{align}
    \deannotate{\nbohm{1}{M}} \to^{\le n'} \deannotate{\nbohm{1}{N}}
    \tag*{$(*)$}
  \end{align}
  where $n' = n$ if $\nbohm{1}{M}$ and $\nbohm{1}{N}$ have the same annotation,
  and $n' = n-1$~otherwise.
  If $M$ has no hnf, then $N$ has no hnf, and $\nbohm{1}{M} = \nbohm{1}{N} = \bot$.
  Thus assume that~$M$ admits a head reduction $M \hredn{k} M' \equiv \mylam{x_1}{\ldots\mylam{x_n}{y M_1 \ldots M_m}}$ 
  to hnf for some $k\in\nat$.
  Then by induction using \eqref{simple:join}
  we obtain that either 
  \begin{align*}
    \text{(a) }N \hredn{k} N' \redi^{\le n} M' &&
    \text{or} &&
    \text{(b) }N \hredn{<k} N' \redi^{< n} M'
  \end{align*}
  for some $N' \equiv \mylam{x_1}{\ldots\mylam{x_n}{y N_1 \ldots N_m}}$.
  This proves the claim.

  Note that every reduction $\nbohm{m}{M} \to^{\le n} \nbohm{m}{N}$
  is contained within the subterms $M'$ of $\nbohm{m}{M}$ that are 
  left unreduced by the clause $\nbohm{0}{M'} = M'$;
  everything outside these subterms is in normal form.
  Using the previous observation and $(*)$,
  we obtain by induction that for all $m \in \nat$:
  $\deannotate{\nbohm{m}{M}} \to^{\le n-d_m} \deannotate{\nbohm{m}{N}}$
  where $d_m$ is the number of positions $p$ where 
  the annotation of $\subtrmat{\nbohm{m}{M}}{p}$ differs from the annotation of $\subtrmat{\nbohm{m}{N}}{p}$. 
  As a consequence,
  the annotations of $\cbohm{M}$ and $\cbohm{N}$
  differ at most at $n$ positions.
  Hence $\cbohm{M} \relev{=} \cbohm{N}$.%
\end{proof}

Reduction accelerates clocks, and 
for simple terms the clock is invariant under reduction, see Proposition~\ref{prop:simple}.
Hence if a term $M$ has a simple reduct $N$, then $N$ has the fastest clock reachable from $M$ 
modulo a finite prefix. This justifies the following convention.
\begin{cOnv}
  The \emph{(minimal) clock} of a $\lambda$-term $M$ with a simple reduct $N$
  is $\cbohm{N}$, the clocked BT of $N$.
\end{cOnv}

For simple terms we obtain the following theorem:
%
%
\begin{thm}\label{thm:simple}
  Let $M$ and $N$ be $\lambda$-terms such that $M$ is simple.
  If $M$ does not improve eventually on $N$, then $M \ne_\beta N$.
\end{thm}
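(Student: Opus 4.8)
The plan is to prove the contrapositive: assuming $M \conv N$, I will show that $\cbohm{M} \relev{\le} \cbohm{N}$, i.e.\ that $M$ improves $N$ eventually, which contradicts the hypothesis. By confluence of $\lambda\beta$ there is a common reduct $P$ with $M \mred P$ and $N \mred P$; the idea is to read off clock information about $P$ from each of the two sides and then line the two readings up.

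From $N \mred P$, Proposition~\ref{prop:clocks} gives that $P$ improves $N$ globally, $\cbohm{N} \rel{\ge} \cbohm{P}$; in particular $\deannotate{\cbohm{N}} \equiv \deannotate{\cbohm{P}}$ and, at every position, a present annotation of $\cbohm{P}$ is dominated by the corresponding annotation of $\cbohm{N}$. From $M \mred P$, using that $M$ is simple, Proposition~\ref{prop:simple} applies (with $M$ as the simple term and $P$ as its reduct) and yields $\cbohm{M} \relev{=} \cbohm{P}$; so $\deannotate{\cbohm{M}} \equiv \deannotate{\cbohm{P}}$ and there is a depth $\ell \in \nat$ beyond which $\cbohm{M}$ and $\cbohm{P}$ carry exactly the same annotations.

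Splicing the two readings: $\deannotate{\cbohm{M}} \equiv \deannotate{\cbohm{P}} \equiv \deannotate{\cbohm{N}}$, and for each position $p$ with $\length{p} \ge \ell$, $\subtrmat{\cbohm{M}}{p}$ is annotated precisely when $\subtrmat{\cbohm{P}}{p}$ is, with the same value $k$, and then $k$ is at most the annotation of $\subtrmat{\cbohm{N}}{p}$ (which is present, since $\deannotate{\cbohm{P}} \equiv \deannotate{\cbohm{N}}$); when $\subtrmat{\cbohm{M}}{p}$ is unannotated so is $\subtrmat{\cbohm{N}}{p}$. Hence $\cbohm{M} \relat{\le}{p} \cbohm{N}$ for all $p$ of length $\ge \ell$, which is exactly $\cbohm{M} \relev{\le} \cbohm{N}$. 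Thus $M$ improves $N$ eventually, contradicting the hypothesis, so $M \notconv N$. (Alternatively one may factor through Theorem~\ref{thm:general}: the same use of Proposition~\ref{prop:simple} shows that if $M$ does not improve $N$ eventually, then no reduct of $M$ improves $N$ globally.)

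I do not expect a genuine obstacle here: all the combinatorial weight lies in Propositions~\ref{prop:clocks} and~\ref{prop:simple}, and this argument only has to align their conclusions. The two points needing a little care are that improving eventually also requires equality of the underlying unannotated \boehm{} Trees, which both propositions also guarantee, and the degenerate case in which $M$, and hence $N$ and $P$, has no hnf, where $\cbohm{M} = \cbohm{P} = \cbohm{N} = \sink$ and the conclusion $\cbohm{M} \relev{\le} \cbohm{N}$ holds vacuously; so that case needs no separate treatment.
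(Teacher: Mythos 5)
Your proposal is correct and follows essentially the same route as the paper: assume $M \conv N$, take a common reduct by confluence, apply Proposition~\ref{prop:simple} to $M$ (simple) and Proposition~\ref{prop:clocks} to $N$, and combine $\crel{}$-eventual matching with global improvement to conclude $\cbohm{M} \relev{\le} \cbohm{N}$. Your extra care in spelling out the splicing of annotations and the $\sink$ case is a harmless elaboration of the paper's two-line argument.
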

\begin{proof}
  Assume $M =_\beta N$. Then
  $M \mred M' \mredi N$ for some $M'$.
  We have $\cbohm{M} \relev{=} \cbohm{M'}$ by Proposition~\ref{prop:simple}, and
  $\cbohm{M'} \rel{\le} \cbohm{N}$ by Proposition~\ref{prop:clocks}.
  Hence we obtain \alert{$\cbohm{M} \relev{\le} \cbohm{N}$}.
\end{proof}
Theorem~\ref{thm:simple} significantly reduces 
the proof obligation in comparison to Theorem~\ref{thm:general}.
We only consider the clocked BT's of $M$ and $N$,
instead of all reducts $M'$ of $M$.

Note that Theorem~\ref{thm:simple} can also be employed for discriminating non-simple $\lambda$-terms
if one of the terms has a simple reduct.
For the case that both $M$ and $N$ are simple, there is no need to look for reducts
since the eventual clocks are invariant under reduction, see Proposition~\ref{prop:simple}:
\begin{cor}\label{cor:simple:simple}
  If simple terms $M$, $N$ do not match eventually,
  then they are not $\beta$-convertible, that is, $M \ne_\beta N$.
\end{cor}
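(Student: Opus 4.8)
The plan is to derive Corollary~\ref{cor:simple:simple} directly from Theorem~\ref{thm:simple} together with Proposition~\ref{prop:simple}, by showing that if two simple terms are $\beta$-convertible then they must match eventually. So assume $M$ and $N$ are simple and $M =_\beta N$; the goal is to conclude $\cbohm{M} \relev{=} \cbohm{N}$, which is the contrapositive of the statement.

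First I would invoke confluence of the $\lambda$-calculus to obtain a common reduct: there is $M'$ with $M \mred M'$ and $N \mred M'$. Now both $M$ and $N$ are simple, so Proposition~\ref{prop:simple} applies to each of them with the common reduct $M'$, giving $\cbohm{M} \relev{=} \cbohm{M'}$ and $\cbohm{N} \relev{=} \cbohm{M'}$. Since $\relev{=}$ is symmetric (it says the underlying B\"{o}hm Trees are identical and the annotations agree from some depth level on) and transitive (take the maximum of the two depth levels), we conclude $\cbohm{M} \relev{=} \cbohm{N}$, i.e.\ $M$ matches $N$ eventually. Taking the contrapositive yields exactly the corollary: if simple $M$, $N$ do not match eventually, then $M \ne_\beta N$.

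Alternatively, and perhaps more in the spirit of the surrounding development, one can get this as a one-line consequence of Theorem~\ref{thm:simple}: that theorem already states that if the simple term $M$ does not improve eventually on $N$ then $M \ne_\beta N$, and likewise (swapping the roles, since $N$ is also simple) if $N$ does not improve eventually on $M$ then $M \ne_\beta N$. If $M$ and $N$ do not match eventually, then they cannot both have $\cbohm{M} \relev{\le} \cbohm{N}$ and $\cbohm{N} \relev{\le} \cbohm{M}$ — the conjunction of those two would force the annotations to coincide from some depth on, hence eventual matching — so at least one of the two non-improvement hypotheses holds, and Theorem~\ref{thm:simple} applies.

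I do not expect any real obstacle here; the only point requiring a small amount of care is the bookkeeping around the eventual relations $\relev{=}$ and $\relev{\le}$: one must check that they compose as expected (transitivity via taking the maximum of the two witnessing depths $\ell$, and the fact that $\relev{\le}$ in both directions collapses to $\relev{=}$). These are immediate from the definition of $\relat{R}{\apos}$ and the clause quantifying over positions of depth $\ge \ell$, and they have effectively already been used inside the proof of Theorem~\ref{thm:simple}.
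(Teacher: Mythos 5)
Your main argument is exactly the paper's proof: take a common reduct by confluence, apply Proposition~\ref{prop:simple} to both simple terms, and combine the two instances of $\relev{=}$ (the paper writes this as $\cbohm{M} \relev{=} \cbohm{O} \relev{=} \cbohm{N}$, implicitly using the same symmetry/transitivity bookkeeping you spell out). Your alternative derivation via two applications of Theorem~\ref{thm:simple} is also sound, but the first route is the one the paper takes, so nothing further is needed.
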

\begin{proof}
  Assume $M = N$, then $M \mred O \mredi N$ for a common reduct $O$.
  Then $\cbohm{M} \relev{=} \cbohm{O} \relev{=} \cbohm{N}$ by Proposition~\ref{prop:simple}. 
  Hence $\cbohm{M} \relev{=} \cbohm{N}$, that is, $M$ and $N$ match eventually.
\end{proof}

\begin{rem}\label{rem:prefix}
  The reason for the qualifier `eventually' in the notions above,
  in other words, working modulo a finite prefix of the BT, 
  is that by some preliminary reduction we can always make the clock 
  values in any finite prefix equal to $0$. 
  So we are interested exclusively in the `tail behaviour', 
  or the behaviour `at infinity', and not in the initial 
  behaviour of the development to the BT. 
  
  To give a concrete example:
  $\fpc{0}$ and $\fpc{1}$, the fpc's of Curry and Turing, can be reduced to reducts 
  $M_0$, $M_1$ respectively, that have an initial segment of arbitrary length $n$
  of their BT's with clock labels $0$ (just reduce first to $\mylam{f}{f^n(\ldots)}$).
  However, the infinite remainders of their BT's, their tails as it were, 
  will reveal the difference in clock values, witnessing the fact that $\fpc{0}$ 
  eventually improves $\fpc{1}$. And this situation is stable under reduction;
  indeed, for any two reducts $M_0$, $M_1$ as above, the first eventually improves the second.
\end{rem}

\begin{rem}
  Take $\lambda$-terms $M$, $N$ with finite BT's.
  Then the clock comparison of $M$ and $N$ amounts to the comparison of their non-clocked BT's.
  In case their BT's are equal and \mbox{$\bot$-free}, then $M \conv N$.
  If their BT's coincide but are not \mbox{$\bot$-free}, then we can fine-tune the analysis
  by comparing their clocked \levy{} or \ber{} Trees
  (a $\bot$ in a BT may give rise to an infinite \ber{} Tree), see Section~\ref{sec:levy}.
\end{rem}

\begin{exa}\label{ex:boehm:seq}
  We compute the clocks of  
  $\fpc{n} x$ with $\fpc{n}$ the $n$-th term of the \boehm{} sequence.
  The clocks of $\fpc{0} x$ and $\fpc{1} x$ have been computed before, 
  see Figure~\ref{fig:boem:y0:y1}.
  We now compute the clock of $\fpc{n} x$ for $n \geq 2$:
  \begin{align*}
    \fpc{n}
    & \equiv
    \leftappiterate{\eta \eta}{\delta}{n-1} x \\
    & \hredn{2} \leftappiterate{\delta (\eta \eta \delta)}{\delta}{n-2} x
    \\
    & \hredn{2(n-2)} \delta (\leftappiterate{\eta \eta}{\delta}{n-1}) x
    \\
    & \hredn{2} x (\leftappiterate{\eta \eta}{\delta}{n-1} x) 
  \end{align*}
  Notice that none of these steps duplicate a redex, 
  hence $\fpc{n}$ is a simple term. 
  We find $\cbohm{\fpc{n} x} 
   = \annotate{2n}{(x \cbohm{\fpc{n} x})}$.
  Hence, for $n \geq 2$ the clock of $\fpc{n}$ is $2n$.
  Consequently, all terms in the \boehm{} sequence have distinct clocks.
  Thus we have an alternative proof of Theorem~\ref{thm:boehm:seq}: 
  the \boehm{} sequence contains no duplicates.
\end{exa}
%

\begin{exa}\label{ex:scott:seq}
  Let $n \geq 2$.
  We compute the clocks of the fpc's 
  $\comb{U}_n \defeq \leftappiterate{\comb{B}\fpcC}{\comb{S}}{n} \comb{I}$
  of the Scott sequence. 
  We first reduce $\comb{U}_n x$ to a simple term:
  \begin{align*}
    \comb{U}_n x 
    & \mred \leftappiterate{\fpcC (\comb{S} \comb{S})}{\comb{S}}{(n-2)} \comb{I} x \\
    & \to \leftappiterate{\omega_{\comb{S}\comb{S}}\,\omega_{\comb{S}\comb{S}}}{\comb{S}}{(n-2)} \comb{I} x \\
    & \mred \leftappiterate{\nf{\omega_{\comb{S}\comb{S}}}\,\nf{\omega_{\comb{S}\comb{S}}}}{\comb{S}}{(n-2)} \comb{I} x
  \end{align*}
  where $\nf{\omega_{\comb{S}\comb{S}}} \equiv \mylam{abc}{bc(aabc)}$.
  We abbreviate $\theta = \nf{\omega_{\comb{S}\comb{S}}}$.
  Then we compute the clocks for $n = 2$, $n = 3$, and $n > 3$:
  \begin{align*}
    \theta\theta \comb{I} x 
    & \hredn{3} \comb{I} x (\theta\theta \comb{I} x) 
    \hredn{1} x (\theta\theta \comb{I} x) 
    \\
    \theta\theta \comb{S} \comb{I} x 
    & \hredn{3} {\comb{S} \comb{I}(\theta\theta \comb{S} \comb{I})} x
    \hredn{4} {x (\theta\theta \comb{S} \comb{I} x)} 
    \\
    \leftappiterate{\theta\theta}{\comb{S}}{(n-2)} \comb{I} x
    &\hredn{3} \leftappiterate{ \comb{S} \comb{S} (\theta\theta \comb{S} \comb{S})}{\comb{S}}{(n-4)} \comb{I} x
    \\
    &\hredn{3(n-4)} \comb{S} \comb{S} (\leftappiterate{ \theta \theta \comb{S} \comb{S} }{\comb{S}}{(n-4)}) \comb{I} x
    \\
    &\hredn{3} \comb{S} \comb{I} (\leftappiterate{ \theta \theta }{\comb{S}}{(n-2)} \comb{I}) x
    \\
    &\hredn{4} x (\leftappiterate{ \theta \theta }{\comb{S}}{(n-2)} \comb{I} x)
  \end{align*}
  respectively.
  For all three cases, we find:\\ 
  $\cbohm{\leftappiterate{\theta\theta}{\comb{S}}{(n-2)} \comb{I} x} 
   = \annotate{3n-2}{(x \cbohm{\leftappiterate{ \theta \theta }{\comb{S}}{(n-2)} \comb{I} x})}$.
\end{exa}
\noindent
Using Theorem~\ref{thm:simple} we infer from Example~\ref{ex:scott:seq}
and Figure~\ref{fig:boem:y0:y1} (recall that $\comb{U}_0 \conv \fpcC$ and $\comb{U}_1 \conv \fpcT$):
\begin{cor}
  The Scott sequence contains no duplicates.  
\end{cor}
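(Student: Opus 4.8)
The plan is to exhibit, for every term $\comb{U}_n$ of the Scott sequence, a simple reduct together with its (eventually constant) clock value, and then to pull the clocked \boehm{} Trees apart pairwise using Theorem~\ref{thm:simple}. Recall that $\comb{U}_0 \conv \fpcC$ and $\comb{U}_1 \conv \fpcT$, and that $\fpcC$ and $\fpcT$ are simple (all redexes contracted on the way to their head normal forms are call-by-value, as in Example~\ref{ex:boehm:seq}). By Figure~\ref{fig:boem:y0:y1}, the clocked BT of $\fpcC f$ has root annotation $2$ and annotation $1$ at every deeper position, while that of $\fpcT f$ carries annotation $2$ everywhere; so the \emph{eventual} clock of $\comb{U}_0$ is $1$ and that of $\comb{U}_1$ is $2$. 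For $n \ge 2$, Example~\ref{ex:scott:seq} shows that $\comb{U}_n x$ reduces to the simple term $\leftappiterate{\theta\theta}{\comb{S}}{(n-2)}\comb{I} x$ with $\theta = \nf{\omega_{\comb{S}\comb{S}}}$, whose clocked BT satisfies $T \equiv \annotate{3n-2}{(x\,T)}$; hence the eventual clock of $\comb{U}_n$ is the constant $3n-2$.

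Next I would record that these eventual clock values are pairwise distinct: the list is $\{1,2\} \cup \{\, 3n-2 \mid n \ge 2 \,\} = \{1,2,4,7,10,\dots\}$, since $3n-2 = 3m-2$ forces $n=m$ while $3n-2 \ge 4 > 2 > 1$ for $n \ge 2$. Now fix $i \ne j$; without loss of generality the eventual clock of $\comb{U}_i$ is strictly larger than that of $\comb{U}_j$. Passing to the respective simple reducts $N_i \conv \comb{U}_i$ and $N_j \conv \comb{U}_j$ (we may replace terms by $\beta$-convertible ones, and also by $N_i x$, $N_j x$, since $P x \nconv P' x$ implies $P \nconv P'$), the clock of $N_i$ is eventually strictly above that of $N_j$, so $\cbohm{N_i}$ does \emph{not} improve eventually on $\cbohm{N_j}$, i.e.\ $\neg(\cbohm{N_i} \relev{\le} \cbohm{N_j})$. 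Since $N_i$ is simple, Theorem~\ref{thm:simple} gives $N_i \ne_\beta N_j$, whence $\comb{U}_i \ne_\beta \comb{U}_j$. (Equivalently, all these reducts are simple with distinct eventual clocks, hence cannot match eventually, so Corollary~\ref{cor:simple:simple} applies directly.)

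There is no real obstacle here beyond bookkeeping: the substance lies entirely in Proposition~\ref{prop:simple}, Theorem~\ref{thm:simple}, and the clock computations of Examples~\ref{ex:boehm:seq} and~\ref{ex:scott:seq} together with Figure~\ref{fig:boem:y0:y1}, all of which may be assumed. The only point deserving a line of care is the boundary $n \in \{0,1\}$, where $\comb{U}_n$ is not in the form to which the computation of Example~\ref{ex:scott:seq} applies; there one invokes instead the already-known clocks of Curry's and Turing's fpc's and observes that $1$ and $2$ are still fresh values, so that the discrimination is uniform across the entire sequence.
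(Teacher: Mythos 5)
Your proposal is correct and follows essentially the same route as the paper: the paper derives the corollary directly from the clock computations in Example~\ref{ex:scott:seq} and Figure~\ref{fig:boem:y0:y1} (using $\comb{U}_0 \conv \fpcC$, $\comb{U}_1 \conv \fpcT$) together with Theorem~\ref{thm:simple}. Your write-up merely spells out the bookkeeping (eventual clock values $1$, $2$, and $3n-2$ being pairwise distinct, and the passage to simple reducts applied to $x$) that the paper leaves implicit.
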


\section{An Answer to a Question of Plotkin}\label{sec:plotkin}

Plotkin~\cite{plot:2007} asked:
\textit{Is there a fixed point combinator $\afpc$ such that}
\begin{align}
  A_\afpc \equiv \afpc(\mylam{z}{fzz}) \conv \afpc(\mylam{x}{\afpc(\mylam{y}{fxy})}) \equiv B_\afpc
  \label{eq:plotkin}
\end{align}
or in other notation: 
\begin{align*}
  \mymu{z}{fzz} \conv \mymu{x}{\mymu{y}{fxy}}\;,
\end{align*}
with the usual definition $\mymu{x}{M(x)} = Y(\mylam{x}{M(x)})$.
The terms $A_\afpc$ and $B_\afpc$ have the same \bohm{} Trees, namely
the solution of $T = f T T$.

Plotkin's question is pertinent to the question whether
absolutely unorderable models of the $\lambda$-calculus exist,
see Selinger's work~\cite{seli:1996,seli:1997,seli:2003}.
The negative solution of Plotkin's question 
blocks an appeal to Lemma 3.6 in~\cite{seli:2003}
to show that the generalized Mal'cev equations 
for all $n$ are inconsistent with the $\lambda\beta$-calculus. 
(For $n = 0,1$ this is established, but the general case is a difficult open problem.)

We show that there is no fpc $\afpc$ satisfying~\eqref{eq:plotkin}.
We begin with an example.

\begin{exa}
  We consider Turing's fpc $\fpcT$,
  and compute the clocked BT's of $A_{\fpcT}$ and $B_{\fpcT}$.
  Recall that $\fpcT \equiv \eta\eta$ with $\eta \equiv \mylam{xf}{f(xxf)}$.
  \begin{align*}
    A_{\fpcT} 
    &\equiv \eta\eta(\mylam{z}{fzz})
    \hredn{2} (\mylam{z}{fzz}) A_{\fpcT} 
    \hredn{1} f A_{\fpcT} A_{\fpcT} 
    \\
    B_{\fpcT} 
    &\equiv \eta\eta(\mylam{x}{\eta\eta(\mylam{y}{fxy})})
    \hredn{2} 
    (\mylam{x}{\eta\eta(\mylam{y}{fxy})}) B_{\fpcT} 
    \\
    &\hredn{1} 
    \eta\eta(\mylam{y}{f B_{\fpcT} y})
    \hredn{2}
    (\mylam{y}{f B_{\fpcT} y})(\eta\eta(\mylam{y}{f B_{\fpcT} y}))
    \\
    &\hredn{1}
    f B_{\fpcT} (\eta\eta(\mylam{y}{f B_{\fpcT} y}))
  \end{align*}
  Note that for $B_{\fpcT}$ developing the left branch takes six steps,
  whereas the right branch only needs three.
  We remark that this is not sufficient to conclude that $A_{\fpcT} \nconv B_{\fpcT}$
  since $A_{Y}$ and $B_{Y}$ are non-simple even for simple fpc's $Y$. 
  The clocked BT's for $A_{\fpcT}$ and $B_{\fpcT}$ are depicted in Figure~\ref{fig:plotkin}
  using hnf-notation, see~\cite{bare:1984} or~\cite{bare:klop:2009}.

  \begin{figure}[ht!]
  \begin{center}
    \begin{tikzpicture}[level distance=7mm,inner sep=0.5mm,
                        level 1/.style={sibling distance=18mm},
                        level 2/.style={sibling distance=9mm},
                        level 3/.style={sibling distance=4.5mm},
                        level 4/.style={sibling distance=2.25mm}
                       ]
      \node  {$f$} \annotatednode{$f$}{3}
        child { node {$f$} \annotatednode{$f$}{3}
          child { node {$f$} \annotatednode{$f$}{3}
            child { node {$\ldots$} }
            child { node {$\ldots$} }
          }
          child { node {$f$} \annotatednode{$f$}{3}
            child { node {$\ldots$} }
            child { node {$\ldots$} }
          }
        }
        child { node {$f$} \annotatednode{$f$}{3}
          child { node {$f$} \annotatednode{$f$}{3}
            child { node {$\ldots$} }
            child { node {$\ldots$} }
          }
          child { node {$f$} \annotatednode{$f$}{3}
            child { node {$\ldots$} }
            child { node {$\ldots$} }
          }
        };
    \end{tikzpicture}\quad
    \begin{tikzpicture}[level distance=7mm,inner sep=0.5mm,
                        level 1/.style={sibling distance=18mm},
                        level 2/.style={sibling distance=9mm},
                        level 3/.style={sibling distance=4.5mm},
                        level 4/.style={sibling distance=2.25mm}
                       ]
      \node  {$f$} \annotatednode{$f$}{6}
        child { node {$f$} \annotatednode{$f$}{6}
          child { node {$f$} \annotatednode{$f$}{6}
            child { node {$\ldots$} }
            child { node {$\ldots$} }
          }
          child { node {$f$} \annotatednode{$f$}{3}
            child { node {$\ldots$} }
            child { node {$\ldots$} }
          }
        }
        child { node {$f$} \annotatednode{$f$}{3}
          child { node {$f$} \annotatednode{$f$}{6}
            child { node {$\ldots$} }
            child { node {$\ldots$} }
          }
          child { node {$f$} \annotatednode{$f$}{3}
            child { node {$\ldots$} }
            child { node {$\ldots$} }
          }
        };
    \end{tikzpicture}
  \caption{Clocked BT's for $A_{\fpcT}$ and $B_{\fpcT}$, in hnf-notation.}
  \label{fig:plotkin}
  \end{center}
  \vspace{-2ex}
  \end{figure}
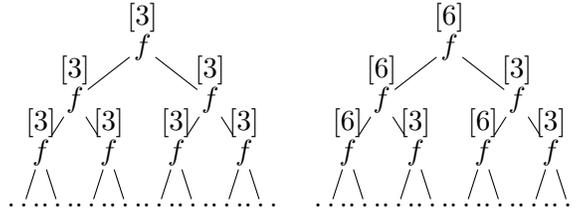
\end{exa}

To prove $A_Y \nconv B_Y$ for every fpc $Y$, 
we construct a term $B_Y'$ convertible with $B_Y$
such that no reduct of $A_Y$ improves globally on $B'_Y$.
Then by Theorem~\ref{thm:general} we have $A_Y \nconv B'_Y$,
and thus $A_Y \nconv B_Y$.
We have $B_Y \conv B'_Y$ where $B'_Y$ is defined by:
\begin{align*}
      B'_Y \equiv Y(\mylam{x}{fx(fx(Y(\mylam{y}{fxy})))}) 
\end{align*}
While the annotations of the clocked \bohm{} Tree $\cbohm{B_Y'}$ of $B_Y'$ of course depend on $Y$,
we can derive partial knowledge as follows.
We have $B'_Y \equiv \subst{(Yx)}{x}{M}$, where $M \equiv \mylam{x}{fx(fx(Y(\mylam{y}{fxy})))}$.
Then we have for every $Y' \conv Yx$:
\begin{align*}
  \subst{Y'}{x}{M} &\hredn{*} \subst{(x\, Y'')}{x}{M} \equiv M (\subst{Y''}{x}{M})\\
    &\hred f \;(\subst{Y''}{x}{M})\; \underline{(f \; (\subst{Y''}{x}{M}) \; (Y(\mylam{y}{f\,(\subst{Y''}{x}{M})\,y}))} )
\end{align*}
for some $Y'' \conv Y x$.
The underlined term in head normal form is obtained without reducing,
and hence will be annotated with $\annotate{0}{}$ in the clocked \bohm{} tree of $\subst{Y'}{x}{M}$.
The same argument applies for $\subst{Y''}{x}{M}$, and so on. 
Hence the second argument of every $f$ on the leftmost spine of $\cbohm{\subst{Y'}{x}{M}}$
has annotation $\annotate{0}{}$, as in Figure~\ref{fig:bty'}, which depicts 
the clocked \bohm{} tree $\cbohm{B_Y'}$ of $B_Y'$.
Therefore we obtain:
\begin{lem}\label{lem:BY'}
  Every annotation at a position of the form $(12)^{\ast}2$ 
  in $\cbohm{B_Y'}$ is $\annotate{0}{}$.
  \qed
\end{lem}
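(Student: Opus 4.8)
The plan is to unfold the head reduction of $B'_Y$ a bounded number of steps and then argue that the remaining structure is produced coinductively in a way that exposes an unreduced head normal form at each of the relevant positions. First I would make precise the claim in the discussion preceding the lemma: writing $M \equiv \mylam{x}{fx(fx(Y(\mylam{y}{fxy})))}$, we have $B'_Y \equiv Y(\mylam{x}{fx(fx(Y(\mylam{y}{fxy})))}) \equiv \subst{(Yx)}{x}{M}$. The key observation is that for \emph{any} term $Y' \conv Yx$, since $Y$ is an fpc we have $Yx \conv x(Yx)$, so $Y'$ head-reduces to a term of the shape $\subst{(xY'')}{x}{M} \equiv M(\subst{Y''}{x}{M})$ for some $Y'' \conv Yx$; from there two more $\beta$-steps (contracting the outer $M$ and then exposing the structure of $M$'s body) give the head normal form $f\,(\subst{Y''}{x}{M})\,(f\,(\subst{Y''}{x}{M})\,(Y(\mylam{y}{f\,(\subst{Y''}{x}{M})\,y})))$. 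Crucially, the \emph{second} argument of this top-level $f$, namely the subterm $f\,(\subst{Y''}{x}{M})\,(Y(\mylam{y}{f\,(\subst{Y''}{x}{M})\,y}))$, is already in head normal form (its head is $f$), so it requires \textbf{zero} further head-reduction steps to expose its root; therefore in the clocked \bohm{} tree it receives annotation $\annotate{0}{}$.

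Next I would set up the coinduction. Let $\Phi(Z)$, for $Z \conv Yx$, denote the clocked \bohm{} tree of $\subst{(xZ)}{x}{M}$ read off from the head reduction just described — so $\Phi(Z)$ has root $f$ with some annotation $\annotate{k}{}$ (depending on $Z$ and $Y$), left child the clocked \bohm{} tree of $\subst{Z'}{x}{M}$ for the relevant $Z' \conv Yx$, and right child the clocked \bohm{} tree of $f\,(\subst{Z'}{x}{M})\,(Y(\mylam{y}{f\,(\subst{Z'}{x}{M})\,y}))$, whose root annotation is $\annotate{0}{}$ precisely because that subterm is already a head normal form. Since $B'_Y$ itself is of the form $\subst{Y'}{x}{M}$ with $Y' \equiv Yx \conv Yx$, we get $\cbohm{B'_Y} \equiv \Phi(Yx)$ up to the initial unfolding. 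I would then prove by coinduction (on the structure of the clocked \bohm{} tree, as in Proposition~\ref{prop:clocks} and Proposition~\ref{prop:simple}) that along the leftmost spine of $\cbohm{B'_Y}$ — i.e. at positions $(12)^{n}1$, reading position $1$ for the left/function argument and $2$ for the right/argument of the displayed application-as-$f$-node — the subterm is again of the form $\cbohm{\subst{Z}{x}{M}}$ for some $Z \conv Yx$, so the invariant is maintained; and the right child at each such spine node, i.e. the subterm at position $(12)^{n}2$, is a clocked \bohm{} tree whose root annotation is $\annotate{0}{}$.

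The positions named $(12)^{\ast}2$ in the statement are exactly these right children hanging off the leftmost spine: reading each $f$-node in hnf-notation $f\,N_1\,N_2$ as an application $(f N_1) N_2$, stepping into $N_1$ is the move $1$ composed with $2$ (descending through the application $(fN_1)N_2$ to its left subterm $fN_1$, then to $N_1$), i.e. the prefix $12$; iterating $n$ times reaches the $n$-th spine node, and one final $2$ selects its right argument. By the invariant that node is a head-normal-form subterm exposed without any head reduction, hence carries annotation $\annotate{0}{}$. Matching this bookkeeping — confirming that the position addresses $(12)^{\ast}2$ as used here really do pick out the right arguments along the spine, given the chosen encoding of $f\,N_1\,N_2$ — is the only delicate point; the mathematical content is the single clean observation that $f\,(\subst{Y''}{x}{M})\,(Y(\mylam{y}{f\,(\subst{Y''}{x}{M})\,y}))$ is \emph{already} an hnf, which is where the $0$ comes from. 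I expect the position-indexing reconciliation to be the main (purely notational) obstacle; everything else is a routine coinductive unfolding.
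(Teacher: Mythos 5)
Your proposal is correct and takes essentially the same route as the paper's own argument: write $B'_Y \equiv \subst{(Yx)}{x}{M}$, head-reduce any $\subst{Y'}{x}{M}$ (with $Y' \conv Yx$) to the hnf $f\,(\subst{Y''}{x}{M})\,(f\,(\subst{Y''}{x}{M})\,(Y(\mylam{y}{f\,(\subst{Y''}{x}{M})\,y})))$, observe that the second argument is already an hnf and hence gets annotation $\annotate{0}{}$, and repeat this coinductively along the leftmost spine, with the same $(12)^{n}2$ (applicative) versus $1^{n}2$ (hnf-notation) position bookkeeping. (Your ``two more $\beta$-steps'' is actually a single step contracting $M(\subst{Y''}{x}{M})$, but the exact count is immaterial, since the lemma only concerns the zero annotations at the positions $(12)^{\ast}2$.)
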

\noindent
Note that in the statement of the above lemma,
the positions $(12)^{n}2$ refer to trees in applicative notation, 
which is equivalent to $1^{n}2$ in the hnf-notation, 
as seen in Figure~\ref{fig:bty'}.

\begin{figure}[ht!]
\begin{center}
  \begin{tikzpicture}[level distance=7mm,inner sep=0.5mm,
                      level 1/.style={sibling distance=36mm},
                      level 2/.style={sibling distance=18mm},
                      level 3/.style={sibling distance=9mm},
                      level 4/.style={sibling distance=4.5mm}
                     ]
    \node  {$f$} \tinyannotatednode{$f$}{?}
      child { node {$f$} \tinyannotatednode{$f$}{?}
        child { node {$f$} \tinyannotatednode{$f$}{?}
          child { node {$f$} \tinyannotatednode{$f$}{?}
            child { node {$\ldots$} }
            child { node {$\ldots$} }
          }
          child { node {$f$} \annotatednode{$f$}{0}
            child { node {$\ldots$} }
            child { node {$\ldots$} }
          }
        }
        child { node {$f$} \annotatednode{$f$}{0}
          child { node {$f$} \tinyannotatednode{$f$}{?}
            child { node {$\ldots$} }
            child { node {$\ldots$} }
          }
          child { node {$f$} \tinyannotatednode{$f$}{?}
            child { node {$\ldots$} }
            child { node {$\ldots$} }
          }
        }
      }
      child { node {$f$} \annotatednode{$f$}{0}
        child { node {$f$} \tinyannotatednode{$f$}{?}
          child { node {$f$} \tinyannotatednode{$f$}{?}
            child { node {$\ldots$} }
            child { node {$\ldots$} }
          }
          child { node {$f$} \tinyannotatednode{$f$}{?}
            child { node {$\ldots$} }
            child { node {$\ldots$} }
          }
        }
        child { node {$f$} \tinyannotatednode{$f$}{?}
          child { node {$f$} \tinyannotatednode{$f$}{?}
            child { node {$\ldots$} }
            child { node {$\ldots$} }
          }
          child { node {$f$} \tinyannotatednode{$f$}{?}
            child { node {$\ldots$} }
            child { node {$\ldots$} }
          }
        }
      };
  \end{tikzpicture}\quad
\caption{Clocked \bohm{} Tree for $B_Y'$ in hnf-notation.}
\label{fig:bty'}
\end{center}
\vspace{-2ex}
\end{figure}
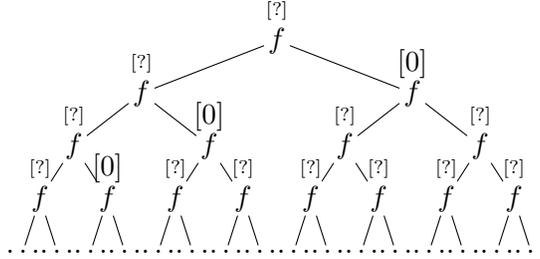

We continue by proving that no reduct of $A_Y$ improves globally on $B_Y'$.
More precisely, we show that for every reduct $A_Y'$ of $A_Y$ there exists
a position $\apos$ of the form $(12)^*2$ such that the annotation of $\cbohm{A_Y'}$ 
at $\apos$ is nonzero.
%
%

\begin{defi}
  A reduct $M$ of $A_Y \equiv Y(\mylam{z}{fzz})$ is called \emph{balanced} if
  $s \equiv t$ for every subterm $f\, s\, t$ in $M$
  where $f$ is a descendant of the displayed $f$ (see also Equation~\eqref{eq:plotkin}).
\end{defi}

Note that, in case $Y$ is not closed,
the term may contain occurrences~$f$.
In the definition of `balanced' we are not interested in those occurrences of $f$,
but only the residuals of $f$ displayed in $Y(\mylam{z}{fzz})$, that is,
the $f$ in $\mylam{z}{fzz}$.
Let us label this $f$ as $f^\star$, i.e. $Y(\mylam{z}{f^\star zz})$, such that $f^\star$
does not occur in $Y$.
Then by `residuals of the displayed $f$' we mean the free occurrences of $f^\star$
in all reducts of $Y(\mylam{z}{f^\star zz})$.

The following lemma states that every reduct of $A_Y$ can be balanced:

\begin{lem}\label{lem:symm}
  Let $A_Y \equiv Y (\mylam{z}{fzz}) \mred M$.
  There exists a balanced $N$ such that $M \mred N$.
\end{lem}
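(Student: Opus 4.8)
The plan is to argue by tracking residuals along the reduction $A_Y \mred M$, exploiting that the two arguments occurring under any residual of the marked symbol $f^\star$ are always copies of one and the same term. First I would establish, by induction on the length of $A_Y \mred M$, the invariant: \emph{for every subterm occurrence $f^\star\,s\,t$ of $M$ in which $f^\star$ is a residual of the displayed symbol, the terms $s$ and $t$ are both reducts of one common term $R$} — the residual of the argument that was substituted for $z$ at the step creating that node. The invariant holds at birth, where the node is produced by contracting a residual $(\mylam{z}{f^\star zz})R$, so that $s \equiv t \equiv R$; and it is preserved by any further $\beta$-step, since such a step either lies wholly outside the node, or inside exactly one of $s$ and $t$ (keeping that side a reduct of $R$), or duplicates or erases the node entirely (note $f^\star s t$ is never itself part of a redex, as $f^\star$ is a free variable, so no step "straddles" $s$ and $t$). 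Since $f^\star$ does not occur in $Y$, every occurrence of $f^\star$ in $M$ is such a residual, so the invariant applies to all $f^\star$-nodes; in particular $s \conv t$, and by confluence $s$ and $t$ have a common reduct.

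Given the invariant, a balanced $N$ would in principle be obtained by \emph{re-synchronising}: whenever an occurrence $f^\star\,s\,t$ with $s\not\equiv t$ remains, choose a common reduct $u$ of $s$ and $t$ and reduce $s\mred u$ and then $t\mred u$, turning that node into $f^\star\,u\,u$. The main obstacle is that this loop need not terminate on the nose, since reducing $s$ and $t$ to $u$ may create new $f^\star$-nodes inside $u$ that are themselves unbalanced. I would resolve this by carrying out the balancing not as a free-standing rewrite loop but \emph{in step with the reduction $A_Y \mred M$}, strengthening the induction hypothesis to: every reduct $M'$ of $A_Y$ has a balanced reduct reached by a reduction that is \emph{symmetric across every residual-$f^\star$ node}, meaning it contracts a redex inside the left argument of such a node precisely when it contracts the mirror-image redex inside the right argument — this is well-defined because, by the invariant, the two arguments are reduct-copies of one term. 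For $M'\equiv A_Y$ this is immediate, the empty reduction being vacuously symmetric and $A_Y$ already balanced (its only $f^\star$-node is $f^\star zz$). For the inductive step, given $A_Y\mred M'\to M$ together with a symmetric balancing reduction $M'\mred N'$ to a balanced $N'$, one projects the single step $M'\to M$ through $M'\mred N'$ by standard orthogonal residual projection~\cite{terese:2003} (the elementary-diagram technique of Proposition~\ref{prop:clocks}), obtaining $M\mred N''$ and $N'\dred N''$, where $N'\dred N''$ is the complete development in $N'$ of the residuals of the projected redex; symmetry of $M'\mred N'$ guarantees that these residuals sit symmetrically across the $f^\star$-nodes of $N'$, so the development is again symmetric and $N''$ is balanced. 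Taking $N\equiv N''$ finishes the step.

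The remaining work is precisely the bookkeeping of this symmetry: one has to verify that it is preserved both when a projected step is appended and when a complete development of a symmetrically placed redex set is taken, which rests again on the first paragraph's invariant — that the two arguments of each residual-$f^\star$ node are literally reduct-copies of a single term, so that "the mirror-image of a redex" is meaningful at every stage, and that node-creating residuals $(\mylam{z}{f^\star zz})R$, when fired, still produce balanced nodes $f^\star R R$. I expect this to be the only genuinely technical point; everything else is a routine application of the parallel-moves lemma and finite developments.
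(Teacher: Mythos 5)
Your first paragraph is fine in spirit (modulo a missing case: a $\beta$-step above a node can also change its arguments by \emph{substitution}, so the invariant must be that the two arguments are reducts of a common term that itself evolves under substitution), but the inductive step in your second paragraph has a genuine gap at exactly the point that the lemma is about. You project the single step $M'\to M$, contracting a redex $r$, over the balancing reduction $M'\mred N'$ and claim that ``symmetry of $M'\mred N'$ guarantees that these residuals sit symmetrically across the $f^\star$-nodes of $N'$''. That is false: nothing forces the \emph{new} redex $r$ to have a mirror partner, and residuals of a one-sided redex stay one-sided. Concretely, take $Y\equiv\fpcC$ and let $g\equiv\mylam{z}{f^\star zz}$, $\omega_g\equiv\mylam{x}{g(xx)}$; then $A_Y\mred M'\equiv f^\star(\omega_g\omega_g)(\omega_g\omega_g)$, which is balanced, so $N'\equiv M'$ with the empty (trivially symmetric) balancing reduction. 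Now let $M'\to M$ contract the left $\omega_g\omega_g$ only. The residuals of $r$ in $N'$ are just $r$ itself, the development gives $N''\equiv M\equiv f^\star(g(\omega_g\omega_g))(\omega_g\omega_g)$, which is \emph{not} balanced, so the induction hypothesis is not re-established. To repair this you would have to additionally contract the mirror copies of the residuals and prove that the complete development of this mirror-closed set restores balance at \emph{all} (nested) $f^\star$-nodes simultaneously --- which is the real technical content you have deferred, not a routine application of the parallel-moves lemma. A second, related defect: your notion of ``symmetric reduction'' presupposes a mirror correspondence between redexes of $s$ and $t$, but when $s\not\equiv t$ (merely co-reducts of $R$) no such canonical correspondence exists, so the property you induct on is not even well-defined at unbalanced nodes.

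The paper avoids all of this bookkeeping with a cofinality argument: it takes the Gross--Knuth sequence $A_Y\arsdav N_2\arsdav\cdots$ (complete developments of \emph{all} redexes), observes that a balanced term stays balanced under such a step --- the two identical arguments of an $f^\star$-node contain the same redexes and the same variables bound from above, so their descendants coincide --- and then uses cofinality of $\arsdav$ to conclude that any reduct $M$ of $A_Y$ reduces to some $N_i$, which is balanced. If you want to keep your step-by-step projection scheme, the fix is essentially to import that same observation (balance is preserved by complete developments of suitably closed redex sets); at that point the cofinality route is both shorter and already does the work for you.
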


\begin{proof}
  Let $\arsdav$ denote 
  complete developments of the set of all redex occurrences (also known as Gross--Knuth steps).
  We consider the sequence $A_Y \equiv N_1 \arsdav N_2 \arsdav \ldots$.
  By cofinality of $\arsdav$~\cite{bare:1984,terese:2003}
  there exists $i \in \nat$ such that $M \mred N_i$.
  We define $N \equiv N_i$.

  It remains to be shown that the term $N$ is balanced.
  This follows from the fact that $A_Y$ is balanced,
  and that balancedness is preserved under~$\arsdav$.
  The latter can be seen as follows: 
  consider a term $\cxtfill{\acxt}{f\, s\, s}$.
  Obviously both displayed occurrences of $s$ contain the same redexes, 
  and the same variables bound from above $f\,s\,s$ in $\acxt$.
  Hence all descendants of both occurrences of $s$ after $\arsdav$ will again be identical.
\end{proof}

\begin{lem}\label{lem:AY}
  Let $A_Y \mred A_Y'$. There exists a position $\apos$ of the form $(12)^*2$ 
  such that $\cbohm{A_Y'}$ has a nonzero annotation at position $p$.
\end{lem}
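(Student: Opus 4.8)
The plan is to reduce Lemma~\ref{lem:AY} to the balancing lemma (Lemma~\ref{lem:symm}) together with a structural analysis of the clocked \bohm{} Tree of a balanced reduct. First I would take an arbitrary reduct $A_Y \mred A_Y'$ and, using Lemma~\ref{lem:symm}, obtain a balanced $N$ with $A_Y' \mred N$. By Proposition~\ref{prop:clocks} clocks only accelerate under reduction, so $\cbohm{N} \rel{\le} \cbohm{A_Y'}$ pointwise wherever both are annotated; hence it suffices to exhibit a position $\apos$ of the form $(12)^{*}2$ where $\cbohm{N}$ already carries a nonzero annotation — the corresponding annotation in $\cbohm{A_Y'}$ will then be at least that large, in particular nonzero. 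So the problem is now purely about balanced terms.

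Next I would analyze the head reduction of the balanced term $N$ to hnf. Since $N$ is a reduct of $Y(\mylam{z}{f^\star zz})$ and $Y$ is an fpc, we have $Yx \conv x(Yx)$, so $N$ must reduce to an hnf of the form $f^\star\, s\, t$; by balancedness $s \equiv t$, and moreover $s$ is again (convertible to) a term of the shape $Y'(\mylam{z}{f^\star zz})$ for some $Y' \conv Yx$, hence balanced. The key point is the \emph{cost} of this head reduction: producing the head $f^\star$ requires first reducing $N$ to a term of the form $(\mylam{z}{f^\star zz})P$ and then contracting that redex, which is \emph{at least one} head step — the crucial redex $(\mylam{z}{f^\star zz})P \to f^\star PP$ is never already contracted in $N$ since $f^\star$ does not occur applied to two copies of a subterm ``for free'' (this is exactly what balancedness at the root, together with the fact that a reduct cannot anticipate this duplicating $\beta$-step, gives us). So the root annotation of $\cbohm{N}$ is $\ge 1$. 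But wait — the claim is about positions $(12)^{*}2$, i.e.\ the \emph{right} arguments along the leftmost spine, not the root. The right argument at the root is $t \equiv s$, which is a balanced reduct of the same shape; by induction on depth along the spine $1,12,1212,\dots$, at each such point we again land on a balanced term whose head $f^\star$ costs at least one step, and whose right argument is the next balanced term down. Taking $\apos$ to be the first such right-argument position, $(12)^{0}2 = 2$ itself, gives a nonzero annotation at $\cbohm{N}$ hence at $\cbohm{A_Y'}$.

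Let me reststructure: the induction is not needed for existence — I just need \emph{one} position $(12)^{*}2$. Position $2$ in applicative notation is the right argument of the root application $f^\star\, s\, t$. That subterm is $t$, a balanced reduct convertible to $Y''(\mylam{z}{f^\star zz})$, which again has no hnf without performing the duplicating head step, so $\cbohm{t}$ has root annotation $\ge 1$, i.e.\ $\cbohm{N}$ has annotation $\ge 1$ at position $2$. Hence $\cbohm{A_Y'}$ has a nonzero annotation at $\apos \equiv 2 \equiv (12)^{0}2$, and we are done.

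The main obstacle I anticipate is making precise the claim that ``reaching the head $f^\star$ of a balanced reduct costs at least one head step,'' i.e.\ that a reduct of $Y(\mylam{z}{f^\star zz})$ is never \emph{already} in hnf with head $f^\star$ in a way that makes the relevant clock entry $0$. One must argue that any reduct $N$ with $f^\star\, s\, t$ as an hnf either is not yet in hnf (so the annotation counts $\ge 1$ head steps) or, if it happens to already be in hnf $f^\star\, s\, t$, then by balancedness $s\equiv t$ and one must track where this shared subterm came from — it is a residual of a single subterm of an earlier term, and producing the configuration $f^\star\, s\, s$ required contracting $(\mylam{z}{f^\star zz})P$ at some earlier stage, whose residual structure is what balancedness (Lemma~\ref{lem:symm}, preservation under Gross--Knuth) records. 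The careful bookkeeping here — relating the annotation $0$ case to the absence of the duplicating redex contraction, and showing this cannot happen at \emph{every} spine position simultaneously (indeed it fails already at position $2$) — is the delicate part; everything else is a routine application of Proposition~\ref{prop:clocks} and Lemma~\ref{lem:symm}.
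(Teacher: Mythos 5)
Your overall strategy---balance the reduct via Lemma~\ref{lem:symm}, use balancedness to transfer ``not in hnf'' from the left argument to the right argument, and finish with Proposition~\ref{prop:clocks}---is the same as the paper's, but the step where you fix the witnessing position in advance is wrong. You claim that the subterm of a balanced reduct at position $2$ is ``a balanced reduct convertible to $Y''(\mylam{z}{f^\star zz})$, which again has no hnf without performing the duplicating head step,'' so that its clock entry is $\ge 1$. This is false: a reduct of $A_Y$ can have any finite prefix of its B\"ohm Tree already laid out in head normal form (cf.\ Remark~\ref{rem:prefix}). Concretely, $A_Y \mred f^\star A_Y A_Y \mred f^\star\,(f^\star A_Y A_Y)\,(f^\star A_Y A_Y)$ is balanced and in hnf, and its subterm at position $2$, namely $f^\star A_Y A_Y$, is again in hnf, so the annotation of the clocked BT at position $2$ is $0$. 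The same construction defeats any position $(12)^n2$ chosen uniformly in the reduct, and it also refutes your auxiliary claim that ``the crucial redex $(\mylam{z}{f^\star zz})P$ is never already contracted in $N$.'' Hence the assertion in your closing paragraph that the zero-annotation scenario ``fails already at position~$2$'' is precisely the gap.

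What is missing is that the position must be chosen \emph{depending on the reduct}: since the balanced term is a finite $\lambda$-term whose B\"ohm Tree is the infinite tree $T = fTT$, the subterms along the leftmost spine $(12)^n$ cannot all be in hnf, so there is a \emph{shortest} position $q$ of the form $(12)^*$ such that the subterm at $\posconcat{q}{12}$ is not in hnf. This is exactly the paper's proof: it first passes to an hnf reduct $A_Y''$ of $A_Y'$ and balances it to $A_Y'''$ (still an hnf with head $f$), so that along the spine up to $q$ the clocked BT annotations are read off directly from subterms of $A_Y'''$; then balancedness gives $\subtrmat{A_Y'''}{\posconcat{q}{12}} \equiv \subtrmat{A_Y'''}{\posconcat{q}{2}}$, whence the subterm at $\posconcat{q}{2}$ is not in hnf and $\cbohm{A_Y'''}$ has a nonzero annotation at $p = \posconcat{q}{2}$, which transfers to $\cbohm{A_Y'}$ by Proposition~\ref{prop:clocks}. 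So the ``first failure along the spine'' argument (closer to the induction you discarded when you ``restructured'') is essential; with a position fixed in advance the proof does not go through. A minor additional point: you balance $A_Y'$ directly, but then the balanced term need not be in hnf, and subtrees of its clocked BT at spine positions no longer correspond literally to its subterms; reducing to hnf before balancing, as in the paper, avoids this complication.
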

\begin{proof}
  Let $A_Y''$ be a reduct of $A_Y'$ such that $A_Y''$ is in hnf.
  By Lemma~\ref{lem:symm} there exists a balanced reduct $A_Y'''$ of $A_Y''$.
  Note that also $A_Y'''$ is in hnf and its head symbol is $f$. 
  
  Let $\bpos$ be the shortest position of the form $(12)^*$ such that 
  $\subtrmat{A_Y'''}{\posconcat{\bpos}{12}}$ is not in hnf.
  Then $\subtrmat{A_Y'''}{\bpos}$ is in hnf, and thus $\trmat{A_Y'''}{\posconcat{\bpos}{11}} = f$.
  Because $A_Y'''$ is balanced we have that $\subtrmat{A_Y'''}{\posconcat{\bpos}{12}} \equiv \subtrmat{A_Y'''}{\posconcat{\bpos}{2}}$,
  and so $\subtrmat{A_Y'''}{\posconcat{\bpos}{2}}$ is not in hnf, too.
  Hence the annotation of $\cbohm{A_Y'''}$ at the position $\apos = \bpos 2$ is nonzero.
  The claim follows since $\cbohm{A_Y'''} \leq \cbohm{A_Y'}$.
\end{proof}

The combination of Lemmas~\ref{lem:BY'} and~\ref{lem:AY} implies that
there exists no reduct of $A_Y$ that improves globally on $B_Y'$.
Therefore by Theorem~\ref{thm:general} we obtain:

\begin{prop}
  There exists no fpc $Y$ such that
  $Y(\mylam{z}{fzz}) \conv Y(\mylam{x}{Y(\mylam{y}{fxy})})$.
  \qed
\end{prop}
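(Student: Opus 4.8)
The plan is to fix an arbitrary fpc $Y$ and prove
$A_Y \equiv Y(\mylam{z}{fzz}) \nconv Y(\mylam{x}{Y(\mylam{y}{fxy})}) \equiv B_Y$
by a single application of Theorem~\ref{thm:general}. Since $A_Y \conv B_Y$ if and only if $A_Y \conv B_Y'$ for the $\beta$-convertible term $B_Y' \equiv Y(\mylam{x}{fx(fx(Y(\mylam{y}{fxy})))})$, it suffices to show that $B_Y'$ is not improved globally by any reduct of $A_Y$; Theorem~\ref{thm:general} then yields $A_Y \nconv B_Y'$, hence $A_Y \nconv B_Y$, and as $Y$ was arbitrary the proposition follows. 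Plain (unclocked) $\bohm{}$ Trees are of no use here, since $A_Y$ and $B_Y$ share one (the solution of $T = fTT$); the refinement to clocked $\bohm{}$ Trees is what makes the argument possible.

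The heart of the plan is to combine the two lemmas already available. Lemma~\ref{lem:BY'} fixes the clocked $\bohm{}$ Tree of $B_Y'$ along its leftmost spine: every node at a position of shape $(12)^{\ast}2$ carries the annotation $\annotate{0}{}$, because $B_Y'$ syntactically exposes the second argument of each $f$ on that spine and hence no head step is spent there. Lemma~\ref{lem:AY} supplies the opposite behaviour for $A_Y$: for every reduct $A_Y'$ of $A_Y$ there is a position $\apos$ of shape $(12)^{\ast}2$ at which $\cbohm{A_Y'}$ has a \emph{nonzero} annotation. Putting the two together: given any reduct $A_Y'$ of $A_Y$, pick $\apos$ as in Lemma~\ref{lem:AY}; then at $\apos$ the tree $\cbohm{A_Y'}$ has a positive label while $\cbohm{B_Y'}$ has label $0$, so $\neg(\cbohm{A_Y'} \rel{\le} \cbohm{B_Y'})$, i.e.\ $A_Y'$ does not improve $B_Y'$ globally. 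That is exactly the hypothesis required by Theorem~\ref{thm:general}.

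The genuinely substantial ingredient is Lemma~\ref{lem:AY}, whose engine is the notion of a \emph{balanced} reduct, exploiting that $A_Y \equiv Y(\mylam{z}{fzz})$ is symmetric in the two arguments of the displayed $f$. I would first pass, via Lemma~\ref{lem:symm}, from $A_Y'$ to a balanced reduct $A_Y'''$ in head normal form with head symbol $f$; this uses cofinality of Gross--Knuth developments together with the fact that a complete development never breaks the syntactic identity $s \equiv t$ of the twin arguments of such an $f$ (they contain the same redex occurrences and lie in the scope of the same binders). On a balanced hnf one then locates the shortest spine position $\bpos$ of shape $(12)^{\ast}$ below which the first argument of $f$ is not in hnf; balancedness forces the matching second argument at $\bpos 2$ to fail to be in hnf as well, so $\cbohm{A_Y'''}$ carries a nonzero label at $\apos = \bpos 2$, and Proposition~\ref{prop:clocks} (clocks accelerate under reduction) transports this back to $A_Y'$ via $\cbohm{A_Y'''} \rel{\le} \cbohm{A_Y'}$. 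The two places where I expect the real effort are checking that balancedness can always be re-established on an arbitrary reduct — which is why one detours through a cofinal reduction rather than transforming the reduct directly — and verifying that a balanced head normal form cannot already have its whole leftmost spine in normal form, so that some genuine head step must be charged somewhere along it.
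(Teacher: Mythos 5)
Your proposal follows the paper's own proof essentially step for step: it introduces the same auxiliary term $B_Y'$, uses Lemma~\ref{lem:BY'} for the zero annotations at positions $(12)^{\ast}2$, establishes the nonzero annotation for every reduct of $A_Y$ via balanced reducts (Lemma~\ref{lem:symm}, proved by cofinality of Gross--Knuth developments) exactly as in Lemma~\ref{lem:AY}, and concludes with Theorem~\ref{thm:general} together with Proposition~\ref{prop:clocks}. The argument is correct and matches the paper's route, including the points you flag as needing care (preservation of balancedness under complete developments, and the existence of a spine position whose first argument is not yet in hnf, which holds because the reduct is a finite term while the B\"ohm Tree spine is infinite).
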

%

\section{Clocked \boehm{} Trees and Periodic Terms}
\label{sec:periodic}

In the last two sections
we saw that the concept of clocked \boehm{} Trees
can be successfully used to discriminate between
$\lambda$-terms with equal \boehm{} Trees 
or to analyze the existence of terms satisfying certain equations.
In this section we will address the following question:
to which classes of $\lambda$-terms is the clocks method
most readily applicable?

A common feature of terms studied
in Sections \ref{sec:clocked} and \ref{sec:plotkin}
is that their \boehm{} Trees are `periodic':
they have inherent cyclical content that
allows us to analyze how their clocks are
developed towards infinity.  For example,
$Y$ is an fpc if the term $Y_x \equiv Yx$ satisfies
\begin{equation} \label{eq:fpc}
  Y_x = x (Y_x)
\end{equation}
That is, the subterm of $Y_x$ occurring at position
$2$ is equal to the term $Y_x$ itself.
If we use \boehm{} Tree equality instead of
$\beta$-equality then the equation \eqref{eq:fpc} would say
that $Y$ is a wfpc, or a looping combinator:
\[
\bt{Y_x} = \bt{x (Y_x)}
\]
The next
definition attempts to capture such situations
in a general fashion.

In what follows, $M_\sigma$ denotes the
term at position $\sigma$ in the \boehm{} Tree of $M$,
with $M_\sigma = \Omega = \comb{Y}_0 \comb{I}$ in the event that $\sigma \notin \pos{\bt{M}}$.
By a \emph{B\"ohm Tree context}
$C\cxthole_1\cdots\cxthole_n$ we mean a finite
B\"ohm Tree in which the leaves, in addition
to being variables or $\bot$, are allowed to be either
of the holes $\cxthole_i$.  (Equivalently, it is
a $\lambda\bot$-context which is redex-free.)

\begin{defi}
  Let $M \in \lterm$, and $\posemp \neq \sigma \in \pos{\bt{M}}$.
  \begin{enumerate}
  \item $M$ is \emph{periodic at $\sigma$} if $M =_\beta M_\sigma$.
  \item $M$ is \emph{weakly periodic at $\sigma$} if
    $M =_{\mathsf{BT}} M_\sigma$.
  \item $M$ is \emph{(weakly) locally periodic} if
    for every infinite path $x = \seqof{x_0, x_1, \dots}$
    through the B\"ohm Tree of $M$ we have that
    $M$ is (weakly) periodic at $x_n$ for some $n$.
  \item $M$ is \emph{(weakly) globally periodic} if
    there is a B\"ohm Tree context
    $C\cxthole_1\cdots\cxthole_n$ such that
    $M \equiv C[M_1]\cdots[M_n]$ and $M =_\beta M_i$
    ($M =_{\mathsf{BT}} M_i$).
  \end{enumerate}
\end{defi}

\begin{exa}
  \mbox{}
  \begin{enumerate}
    \item 
      If $Y$ is an fpc, then $Yx$ is periodic at $2$.
  
    \item 
      If $Y$ is a weak fpc, then $Yx$ is weakly periodic at $2$.
  
    \item 
      Using the fixed point theorem, let $M$ be such that $M = \lambda z. z M M$.  
      Then $M$ is periodic at $02$, $012$.
      Since any infinite path through $\bt{M}$ passes through one of these positions, 
      $M$ is locally periodic.
      In fact, $M$ is globally periodic with the context
      $C\cxthole_1\cxthole_2 \equiv \lambda z. z\cxthole_1\cxthole_2$.
  \end{enumerate}
  Generally, any term which is defined by an application of the
  fixed point theorem is periodic by construction.  If a looping
  combinator is used instead of an fpc, then the resulting term
  will be weakly periodic.  This suggests that (w)fpc's play a
  central role in the study of general periodic terms.
\end{exa}

We prove the following elementary fact:
\begin{prop}
  Let $M \in \lterm$.   Then
  \[
  M \text{ is (weakly) locally periodic } \iff
  M \text{ is (weakly) globally periodic } \]
\end{prop}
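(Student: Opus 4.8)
The plan is to prove the equivalence in both directions, treating the ``weakly'' and non-weak versions uniformly since the only difference is whether the periodicity witnesses use $=_\beta$ or $=_{\mathsf{BT}}$; I will write the argument for the $=_\beta$ case and remark that it transfers verbatim.

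For the direction globally periodic $\Rightarrow$ locally periodic, suppose $M \equiv C[M_1]\cdots[M_n]$ with $M =_\beta M_i$ for each $i$, where $C$ is a finite B\"ohm Tree context. Since $C$ is finite, every infinite path through $\bt{M}$ must eventually leave $C$, i.e.\ it passes through one of the hole positions; let $\sigma_i$ be the position of hole $\cxthole_i$ in $C$. If an infinite path $x = \seqof{x_0, x_1, \dots}$ passes through $\sigma_i$, then $x_k = \sigma_i$ for some finite $k$, and $M_{x_k} \equiv M_{\sigma_i} \equiv M_i$ (the subterm of $\bt{M}$ at the hole position is exactly the B\"ohm Tree of $M_i$, which since $M =_\beta M_i$ determines $M_i$ up to the relevant equality). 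Hence $M =_\beta M_{x_k}$, so $M$ is periodic at $x_k$, establishing local periodicity. One subtlety to address carefully: a hole of $C$ could coincide with the root, but the definition requires $\sigma \neq \posemp$; this is handled by the standing assumption that the relevant positions are nonempty, or by noting $C$ is a genuine context with at least one non-hole node above each used hole (otherwise $M \equiv M_i$ and the statement is about $\sigma = \posemp$, excluded).

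For the converse, locally periodic $\Rightarrow$ globally periodic, the key idea is a compactness / K\"onig's Lemma argument. Consider the tree $\bt{M}$ and call a position $\sigma \in \pos{\bt{M}}$ \emph{good} if $M$ is periodic at $\sigma$ (i.e.\ $M =_\beta M_\sigma$). Local periodicity says every infinite path meets a good position. I would then show that the set of \emph{minimal} good positions (good positions with no good proper prefix) together with the ``boundary'' cuts off a finite context: the subtree of $\bt{M}$ consisting of all positions having no good proper prefix is a finitely branching tree with no infinite path (every infinite path would contain a good position, hence would leave this subtree), so by K\"onig's Lemma it is finite. Taking $C$ to be this finite subtree, with holes placed exactly at the minimal good positions $\sigma_1, \dots, \sigma_n$, gives $M \equiv C[M_{\sigma_1}]\cdots[M_{\sigma_n}]$ with $M =_\beta M_{\sigma_i}$ by goodness; I must check $C$ is a legitimate B\"ohm Tree context, i.e.\ that its internal nodes are abstractions/applications/variables as in a B\"ohm Tree and that it is redex-free, which it is since it is a finite portion of an actual B\"ohm Tree.

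The main obstacle I anticipate is making the K\"onig's Lemma step fully rigorous, specifically verifying that the ``pre-good'' subtree is finitely branching and that every leaf of $C$ really is a minimal good position rather than, say, a $\bot$-leaf or a variable-leaf of $\bt{M}$ that happens not to be good. One needs to observe that a variable-leaf or $\bot$-leaf of $\bt{M}$ has no infinite path extending it, so local periodicity imposes nothing there, yet such a leaf could be a leaf of the pre-good subtree without being good; the fix is to allow $C$'s leaves to be variables, $\bot$, \emph{or} holes (as the definition of B\"ohm Tree context permits), putting holes only at the minimal good positions and keeping genuine variable/$\bot$ leaves as ordinary leaves. A secondary point requiring care is that $M_\sigma$ as defined uses $\Omega$ when $\sigma \notin \pos{\bt{M}}$, but in both directions we only ever evaluate $M_\sigma$ at positions genuinely in $\pos{\bt{M}}$, so this convention does not interfere. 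Finally, for the weak versions every occurrence of $=_\beta$ is replaced by $=_{\mathsf{BT}}$ throughout; the tree-combinatorial core is identical and only the closure conditions on the witnesses change, so no new work is needed.
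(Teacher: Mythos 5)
Your proof is correct, and the forward direction (globally $\Rightarrow$ locally periodic) is exactly the paper's argument: finiteness of the context forces every infinite path through a hole, where periodicity holds. For the converse the paper argues topologically rather than via K\"onig's Lemma: it considers the set of infinite paths through $\bt{M}$ (the ``B\"ohm space''), notes it is compact in the cylinder topology because $\bt{M}$ is finitely branching, covers it by the cylinders at the periodic positions supplied by local periodicity, extracts a finite subcover $\sigma_1,\dots,\sigma_n$, and builds $C$ by unfolding $\bt{M}$ up to depth $\max_i\length{\sigma_i}$ with holes at the $\sigma_i$. Your route --- applying K\"onig's Lemma to the subtree of positions having no periodic proper prefix and cutting at the minimal periodic positions --- is the combinatorial twin of that compactness argument, so the substance is the same; what it buys you is a canonical, prefix-free set of hole positions (the minimal periodic ones), whereas the paper's finite subcover need not be minimal or prefix-free and its context is cut at a uniform depth. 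The price is the extra bookkeeping you already identified: distinguishing leaves of your ``pre-good'' subtree that are genuine variable or $\bot$ leaves of $\bt{M}$ from those that become holes, which the paper sidesteps by simply placing holes at the chosen $\sigma_i$. Your side remarks (excluding $\sigma=\posemp$, the $\Omega$-convention for positions outside $\pos{\bt{M}}$, and the uniform treatment of the weak case by replacing $\beta$-convertibility with B\"ohm Tree equality) are all sound and match the paper's level of rigour.
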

\begin{proof}
  The statement is proved simultaneously for weak and strong
  periodicity.  For the latter case, the equality should be
  read as $\beta$-convertibility.  For the former, it should be
  read as \boehm{} Tree equality.

  ($\RA$)  Suppose that $M = C[M_1]\cdots[M_n]$,
  with $C\cxthole_1\cdots\cxthole_n$ a B\"ohm Tree context.
  Let $x$ be an infinite path through $\bt{M}$.  Since
  $C\cxthole_1\cdots\cxthole_n$ is finite,
  $x$ must pass through one of the holes $\cxthole_i$.
  Since $M=M_i$ is periodic at this position,
  $x$ passes through a periodic position.
  Since $x$ was arbitrary, $M$ is locally periodic.

  \newcommand{\bsp}{\mathcal X}
  ($\LA$)  Suppose that $M$ is locally periodic,
  and let $\bsp$ denote the set of infinite paths
  through the B\"ohm Tree of $M$
  ($\bsp$ could be called the \emph{B\"ohm space} of $M$).
  Since $\bt{M}$ is a finitely branching tree,
  the space $\bsp$ is compact, when given the topology
  generated by the cylinders
  \[X_\sigma = \setof{\theta \where \sigma \prefixeq \theta}\]

  For $x \in \bsp$, let $\sigma_x$ be a periodic position
  that $x$ passes through (by local periodicity).
  Then the collection $\Sigma = \setof{\sigma_x}_{x \in \bsp}$
  is an open cover of $\bsp$.  By compactness, there exists
  a subcollection $\sigma_1, \dots, \sigma_n \in \Sigma$
  which covers $\bsp$.  Let $C \equiv C\cxthole_1\cdots\cxthole_n$ be
  obtained from $M$ by unfolding its B\"ohm Tree up to depth
  $\max\setof{\len{\sigma_1},\dots,\len{\sigma_n}}$ and placing holes
  $\cxthole_i$ at positions $\sigma_i$.  We claim that $C$ is
  the desired B\"ohm Tree context.
  
  By construction, $C$ is a B\"ohm Tree context
  with $M = C[M_{\sigma_1}]\cdots[M_{\sigma_n}]$.
  We also have that $M$ is periodic at $\sigma_i$,
  hence $M = M_{\sigma_i}$.
  Furthermore, if $x \in \bsp$,
  then $x$ passes through $\sigma_i$ for some $i$.  Therefore,
  outside the holes $\cxthole_i$ the B\"ohm Tree of $C$ is
  finite.
  
  Indeed, $M$ is globally periodic.
\end{proof}

In light of the proposition above,
when $M$ is either locally or globally periodic,
we simply say that $M$ is periodic.

The resulting notion of periodicity is somewhat restrictive,
because it does not account for terms in which
repetition begins later in the B\"ohm Tree.
For example if $Y$ is an fpc, then $Yx$ is periodic,
but $Y$ itself is not, because the first node of its
B\"ohm Tree has the abstraction $\lambda f. f$, which
is never repeated in the infinite sequence of $f$'s that
follows.

This situation motivates the following definition.

\begin{defi}
  Suppose that $\sigma \in \pos{\bt{M}}$ and $\theta \prefixeq \sigma$.
  Let $\pi$ be such that $\posconcat{\theta}{\pi} = \sigma$.
  \begin{enumerate}
    \item
      \emph{$M$ is periodic at $\sigma$ with offset $\theta$} if
      $M_\theta$ is periodic at $\pi$.  
      In this case, we say that $\sigma$ is \emph{cyclic} in $\bt{M}$, 
      and call $\pi$ the \emph{period} and $\theta$ the \emph{phase}
      of the \emph{cycle at $\sigma$}.
    \item
      $M$ is \emph{weakly periodic at $\sigma$ with
      offset $\theta$} if $M_\theta$ is weakly periodic at
      $\pi$.  We say that $\sigma$ is \emph{weakly cyclic}, and
      call $\pi$ the period and $\theta$ the phase of the
      \emph{weak cycle at} $\sigma$.
    \item
      $M$ is \emph{(weakly) fully periodic} if every infinite path
      through $\bt{M}$ passes through a
      (weak) cycle $(\theta,\pi)$
      (that is, eventually comes into a periodic subterm of $M$).
  \end{enumerate}
\end{defi}

\noindent As before, it is easy to see that $M$ is fully periodic
if and only if there are finitely many positions
$\sigma_1, \dots, \sigma_n \in \pos{\bt{M}}$,
$\sigma_i = \posconcat{\theta_i}{\pi_i}$ such
that $M$ is periodic at $\sigma_i$ with offset $\theta_i$
and period $\pi_i$, and outside these positions the
B\"ohm Tree is finite.

Periodic and fully periodic terms are the perfect
targets for exercising the clocks method.
We will now illustrate this with an example, where,
instead of fpc's, we consider enumerators,
also called evaluators, for Combinatory Logic.
There are many possibilities for such an enumerator $\comb{E}$,
depending on how one does the coding.  However, in most schemes
$\comb{E}$ is a periodic term.

\begin{exa}
  Let $\code{\cdot}:CL \to \lterm$ be defined as follows:
  \begin{align*}
    \code{\cK} &= \lambda z. z\cK\cK\cI\\
    \code{\cS} &= \lambda z. z\cK\cS\cI\\
    \code{MN} &= \lambda z. z (\cK \cI) \code{M} \code{N}
  \end{align*}
Let $\seqof{M} = \lambda z. z M$, with $z \notin \fv{M}$ and $\omega = \mylam{x}{xx}$.
Then some possible evaluators for $\code{\cdot}$ include:
\begin{align}
  \label{eq:1}
  \ec_1 &\equiv \omega
  (\lambda w. \seqof{\lambda a b c. a b ((w w b) (w w c))})
  = \seqof{\lambda a b c. a b ((\ec_1 b) (\ec_1 c))}\\
  \label{eq:2}
  \ec_2 &\equiv \omega
  (\lambda w. \seqof{\lambda a b c. a b (\cS \seqof{b} \seqof{c} (w w))})
  \,= \seqof{\lambda a b c. a b (\cS \seqof{b} \seqof{c} \ec_2)}\\
  \label{eq:3}
  \ec_3 &\equiv
  \seqof{\omega (\lambda w a b c. a b (\cS b c (w w)))}
  = \seqof{\ec'}, \quad\textnormal{where }
  \ec' = \lambda a b c. a b (\cS b c \ec')
\end{align}

It is straightforward to verify that,
for $i \in \setof{1,2,3}$ we have $\ec_i \code{M} = M$ for
any $\setof{\cK,\cS}$-term $M$.  The evaluators look similar,
and their B\"ohm Trees are indeed the same.
However, it is not immediate
whether the terms are $\beta$-convertible or if they are distinct.

Fortunately, since these terms are periodic, their clocked \boehm{} Trees
can be computed easily:
\begin{enumerate}
  \item
    For $N \in \lterm$, let
    $V_N \equiv \lambda a b c. a b ((N N b) (N N c))$, and
    $W = \lambda w. \seqof{V_w}$.
    Then we have
    \begin{align*}
      \ec_1 \to_h W W
      &\to_h \lambda z. z V_W \\
      &\equiv \lambda z. z (\lambda a b c. a b ((W W b) (W W c)))\\
      &\to_h \lambda z. z (\lambda a b c. a b
        ((\lambda z'. z' V_W) b (W W  c)))\\
      &\to_h \lambda z. z (\lambda a b c. a b
         (b V_W (W W c)))\\
      &\to \lambda z. z (\lambda a b c. a b
         (b V_W ((\lambda z'. z' V_W) c)))\\
      &\to \lambda z. z (\lambda a b c. a b
         (b V_W (c V_W)))\\
      &\to \cdots
    \end{align*}
    Hence $\ec_1$ is periodic at positions $02000212$ and $02000222$ 
    (in hnf-notation $010$ and $0110$ respectively)
    with offset $02$ and periods $000212$ and $000222$, 
    which allows us to construct the clocked \boehm{} Tree of $\ec_1$.
  
  \item 
    For $N \in \lterm$, let
    $V_N \equiv \lambda a b c. a b (\cS \seqof{b} \seqof{c} (ww))$,
    $W = \lambda w. \seqof{V_w}$,
    $\cS'_N = \lambda y z. N z (y z)$,
    and $\cS''_{N,M} = \lambda z. N z (M z)$.
    Then we have
    \begin{align*}
      \ec_2 \to_h WW
      &\to_h \lambda z. z V_W\\
      &\equiv \lambda z. z (\lambda a b c. a b
                   (\cS \seqof{b} \seqof{c} (WW)))\\
      &\to_h \lambda z. z (\lambda a b c. a b
                   (\cS'_{\seqof{b}} \seqof{c} (WW)))\\
      &\to_h \lambda z. z (\lambda a b c. a b
                   (\cS''_{\seqof{b},\seqof{c}} (WW)))\\
      &\to_h \lambda z. z (\lambda a b c. a b
                   (\seqof{b} (WW) (\seqof{c} (WW))))\\
      &\to_h \lambda z. z (\lambda a b c. a b
                   (W W b (W W c)))\\
      &\to^2 \lambda z. z (\lambda a b c. a b
                   ((\lambda z'. z' V_W) b ((\lambda z'. z' V_W) c)))\\
      &\to^2 \lambda z. z (\lambda a b c. a b
                   ((b V_W) (c V_W)))\\
      &\to \cdots
    \end{align*}
    Again, we are at a point where every position lies on a cycle, giving
    us a full description of the clocked \boehm{} Tree of $\ec_2$.

  \item
    Let $V_N = \lambda a b c. a b (\cS b c (N N))$,
    and $W$, $\cS'_N$, $\cS''_{N,M}$ as before.  Then
    \begin{align*}
      \ec_3 \equiv \seqof{\omega W} \to_h \seqof{W W}
      &\to_h \lambda z. z V_W\\
      &\equiv \lambda z. z (\lambda a b c. a b (\cS b c (W W)))\\
      &\to_h \lambda z. z (\lambda a b c. a b (\cS'_b c (W W)))\\
      &\to_h \lambda z. z (\lambda a b c. a b (\cS''_{b,c} (W W)))\\
      &\to_h \lambda z. z (\lambda a b c. a b (b (W W) (c (W W))))\\
    \end{align*}
\end{enumerate}

\begin{figure}[ht!]
\begin{center}
  \begin{tikzpicture}[level distance=10mm,inner sep=0.5mm,
                      level 1/.style={sibling distance=36mm},
                      level 2/.style={sibling distance=18mm},
                      level 3/.style={sibling distance=12mm},
                      level 4/.style={sibling distance=4.5mm}
                     ]
    \node  {$\mylam{z}{z}$} \annotatednode{$\mylam{z}{z}$}{2}
      child { node (vw) {$\mylam{abc}{a}$} \annotatednodeb{$\mylam{abc}{a}$}{0}{xshift=3mm}
        child { node {$b$} \annotatednode{$b$}{0} }
        child { node (b) {$b$} \annotatednode{$b$}{2}
          child[grow=-45,level distance=12mm] { node (c) {$c$} \annotatednode{$c$}{2} }
        }
      };
    \draw (b) to[out=-130,in=150,looseness=4.5] (vw);
    \draw (c) to[out=-70,in=20,looseness=3] (vw);
  \end{tikzpicture}
  \begin{tikzpicture}[level distance=10mm,inner sep=0.5mm,
                      level 1/.style={sibling distance=36mm},
                      level 2/.style={sibling distance=18mm},
                      level 3/.style={sibling distance=12mm},
                      level 4/.style={sibling distance=4.5mm}
                     ]
    \node  {$\mylam{z}{z}$} \annotatednode{$\mylam{z}{z}$}{2}
      child { node (vw) {$\mylam{abc}{a}$} \annotatednodeb{$\mylam{abc}{a}$}{0}{xshift=3mm}
        child { node {$b$} \annotatednode{$b$}{0} }
        child { node (b) {$b$} \annotatednode{$b$}{6}
          child[grow=-45,level distance=12mm] { node (c) {$c$} \annotatednode{$c$}{2} }
        }
      };
    \draw (b) to[out=-130,in=150,looseness=4.5] (vw);
    \draw (c) to[out=-70,in=20,looseness=3] (vw);
  \end{tikzpicture}
  \vspace{-2cm}
  \caption{Clocked BT's for $\ec_1$ and $\ec_2$, in hnf-notation.}
  \label{fig:e1e2}
\end{center}
\end{figure}
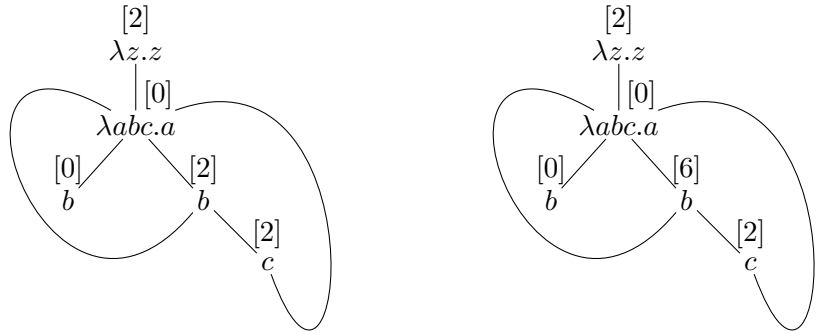

\begin{figure}[ht!]
\begin{center}
  \begin{tikzpicture}[level distance=10mm,inner sep=0.5mm,
                      level 1/.style={sibling distance=36mm},
                      level 2/.style={sibling distance=18mm},
                      level 3/.style={sibling distance=18mm},
                      level 4/.style={sibling distance=18mm}
                     ]
    \node  {$\mylam{z}{z}$} \annotatednode{$\mylam{z}{z}$}{0}
      child { node {$\mylam{abc}{a}$} \annotatednodeb{$\mylam{abc}{a}$}{2}{xshift=3mm}
        child { node {$b$} \annotatednode{$b$}{0} }
        child { node (b) {$b$} \annotatednode{$b$}{3}
          child[level distance=15mm] { node (vw_left) {$\mylam{abc}{a}$} \annotatednodeb{$\mylam{abc}{a}$}{1}{xshift=-2mm}
            child { node {$b$} \annotatednode{$b$}{0} }
            child { node (b_left) {$b$} \annotatednode{$b$}{3}
              child[grow=-45,level distance=12mm] { node (c_left) {$c$} \annotatednode{$c$}{0} }
            }
          }
          child[grow=-45,level distance=12mm] { node (c) {$c$} \annotatednode{$c$}{0} }
        }
      };
    \draw (b_left) to[out=-130,in=-180,looseness=4] (vw_left);
    \draw (c_left) to[out=-70,in=0,looseness=3] (vw_left);
    \draw (c) to[out=-150,in=45,looseness=2] (vw_left);
  \end{tikzpicture}
  \vspace{-1.9cm}
  \caption{Clocked BT for $\ec_3$, in hnf-notation.}
  \label{fig:e3}
\end{center}
\end{figure}
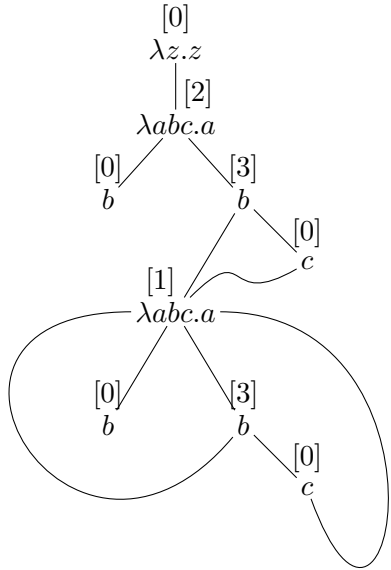

\noindent The \boehm{} Trees of $\ec_i$ are displayed in Figures~\ref{fig:e1e2} and \ref{fig:e3}.
This information, together with
Theorem~\ref{thm:simple}, allows us to discriminate between the terms.
Note that $\ec_1$ does not eventually improve on $\ec_3$: 
the occurrences of the variable $c$ in the clocked \boehm{}
Tree of $\ec_3$ are annotated by a $0$, while the corresponding
subterms of $\ec_1$ take two head steps to converge.  Furthermore,
the head redexes contracted in this reduction
$WW$ and $\seqof{V_W}c$ are both call-by-value.
By Theorem~\ref{thm:simple} we conclude that $\ec_3 \neq_\beta \ec_1$.
In contrast, $\ec_2$
\emph{is} improved by $\ec_1$.  In fact, $\ec_1$ and
$\ec_2$ can be reduced so that their clocked \boehm{}
Trees match eventually.  Upon further inspection, we see that
the two terms are indeed convertible.
\end{exa}

\begin{rem}
We note that (fully) periodic terms are
the $\lambda$-calculus analogue of the usual `circular'
terms defined by mutually recursive definitions.
The {\tt let} binder in Haskell is an example of
such a construct --- one difference being that, because
the semantics of Haskell is graph rewriting, such
terms actually denote possibly cyclic graphs rather
than their unfoldings as infinite trees.
For example, the following is an example of a
`periodic Haskell term':

\begin{samepage}
\begin{small}
\begin{verbatim}
  streamSys =
    let x = f 0 y
            where f n (b:ys) = n : f (b n) ys
        y = f id z
            where f g (c:cs) = g : f (\n -> g c + n) cs
        z = zz 0
            where zz n = n : zz (n+1)
    in x
  
  *Main> take 20 streamSys
  [0,0,0,1,4,10,20,35,56,84,120,165,220,286,364,455,560,680,816,969]
\end{verbatim}
\end{small}
\end{samepage}

\noindent In discussion of such recursive systems, a well-known topic is the
distinction between `strong equality' and `weak equality';
the former being syntactic convertibility between two such
systems, and the latter being bisimulation or
`behavioral equivalence'.  
For this reason, the complexity of strong
equality is usually $\Sigma_1$, since one only needs to
check the existence of a proof object (the conversion sequence)
for the equality between two terms, while weak equality is
$\Pi_2$, since it asks whether for every number $n$, there
is a conversion that makes the terms coincide up to depth $n$.

This discussion applies to our notions of fpc's and wfpc's,
and also periodic and weakly
periodic terms.  From this point of view,
the clocked B\"ohm Trees offer a refinement of these two
types of equality, replacing dichotomy with a spectrum,
or hierarchy, of equality,
by augmenting the syntactic shape of the terms
with the information of how quickly the
computation they define is performed.
For the case of $\lambda$-calculus, this spectrum was
displayed in Figure~\ref{fig:overivew}.
\end{rem}

Concluding this elaboration on cyclic terms, we note that
also for the fragment of $\lambda$\nb-calculus consisting of
$\mu$-terms, with the definition of the corresponding
$\mu$-rule $\mu x. A(x) \to A(\mu x. A(x))$,
the notion of clocked \boehm{} Trees may be interesting,
as it yields an equality strictly in between weak and strong
equality, as it is called in~\cite{endr:grab:klop:oost:2011,bare:dekk:stat:2011}.  
There $\mu$-terms are treated extensively,
for their application as a representation of recursive types.

\section{Atomic Clocks}\label{sec:atom}

We have introduced clocked \bohm{} Trees for discriminating $\lambda$-terms.
In this section, we refine the clocks to measure not only the \emph{number} of head
steps, but, in addition, the \emph{position} of each of these steps.
We call these clocks `atomic'.
We write $\hredat{p}$ for the head reduction step at position $p$.

Before we give the formal definition (Definition~\ref{def:abohm}), we consider a motivating example.
We discriminate $\fpc{2}$ from $\comb{U}_2$.
First, we reduce both terms to simple reducts:
\begin{align*}
  \fpc{2} x &\equiv \fpcC \delta \delta x \mred \eta \eta \delta x 
  && \text{where $\eta = \mylam{ab}{b(aab)}$}\\
  \comb{U}_2 x &\equiv \fpcC (\cS \cS) \cI x \mred \theta \theta \cI x 
  && \text{where $\theta = \mylam{abc}{bc(aabc)}$}
\end{align*}
Second, we compute the atomic clocks of these simple reducts, 
that is, the positions of the head steps, as follows:
\begin{align*}
  \eta \eta \delta x
  &\hredat{11} (\mylam{b}{b(\eta\eta b)}) \delta x
  \hredat{1} \delta (\eta\eta \delta) x\\
  &\hredat{1} (\mylam{b}{b (\eta\eta \delta b)}) x
  \hredat{\posemp} x (\eta\eta \delta x)
  \\
  \theta \theta \cI x
  &\hredat{11} (\mylam{bc}{bc(\theta\theta bc)}) \cI x
  \hredat{1} (\mylam{c}{\cI c(\theta\theta \cI c)}) x\\
  &\hredat{\posemp} \cI x(\theta\theta \cI x)
  \hredat{1} x(\theta\theta \cI x)
\end{align*}
Thus the atomic clocks of these terms are:
\begin{align*}
  \abohm{\eta \eta \delta x} &= \annotate{11,1,1,\posemp} (x \abohm{\eta \eta \delta x})\\
  \abohm{\theta \theta \cI x} &= \annotate{11,1,\posemp,1} (x \abohm{\theta \theta \cI x})
\end{align*}
Note that both terms have the (non-atomic) clocked \bohm{} Tree $T \equiv \annotate{4} (x T)$.
Hence the method from the previous section is not applicable.
However, the atomic clocks do allow us to discriminate the terms. Hence $\fpc{2} \nconv \comb{U}_2$
(by Corollary~\ref{cor:simple:simple} which generalises to the setting of atomic BT's).
Note that the (non-atomic) clocked BT's can be obtained by
taking the length of the lists of positions.

For lists $\vec{p}, \vec{q}$ of positions, we write $\vec{p} \cdot \vec{q}$
for concatenating $\vec{p}$ to $\vec{q}$.
We write $\hredat{\tuple{p_1,\ldots,p_n}}$ for the rewrite sequence
$\hredat{p_1} \cdots \hredat{p_n}$ consisting of steps at position $p_1$,\ldots,$p_n$.

\begin{defi}[Atomic clock \bohm{} Trees]\label{def:abohm}
  Let $M \in \lterm$.
  The \emph{atomic clock \bohm{} Tree $\abohm{M}$ of $M$}
  is an annotated infinite term defined as follows.
  If $M$ has no hnf, then define $\abohm{M}$ as $\sink$.
  Otherwise,
  there is a head reduction 
  \begin{align*}
    M \hredat{p_1} \cdots \hredat{p_k} \mylam{x_1}{\ldots\mylam{x_n}{y M_1 \ldots M_m}}
  \end{align*}
  of length $k$ to hnf.
  Then we define 
  $\abohm{M}$ as the term:
  \begin{align*}
    \abohm{M} 
    = \annotate{\tuple{p_1,\ldots,p_k}}{\mylam{x_1}{\ldots\mylam{x_n}{y \abohm{M_1} \ldots \abohm{M_m}}}}
  \end{align*}
\end{defi}

The theory developed for (non-atomic) BT's in Section~\ref{sec:clocked}
generalises to atomic trees, as follows.

\begin{thm}\label{thm:lift}
  For lists of positions $\vec{p}, \vec{q}$ we define
  $\vec{p} \ge \vec{q}$ whenever $\vec{q}$ is a subsequence of $\vec{p}$,
  and
  $\vec{p} > \vec{q}$ if additionally $\vec{p} \ne \vec{q}$.
  Here $\tuple{a_1,\ldots,a_n}$ is a subsequence of $\tuple{b_1,\ldots,b_m}$
  if there exist indexes $i_1 < i_2 < \ldots < i_n$ such that
  $\tuple{a_1,\ldots,a_n} = \tuple{b_{i_1},\ldots,b_{i_n}}$.
  
  Using this notation for comparing the atomic annotations
  (lists of positions),
  Proposition~\ref{prop:clocks}, Theorem~\ref{thm:general},
  Proposition~\ref{prop:simple}, Theorem~\ref{thm:simple}, and Corollary~\ref{cor:simple:simple}
  remain valid.
\end{thm}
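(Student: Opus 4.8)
The plan is to notice that the statements of Theorem~\ref{thm:general}, Theorem~\ref{thm:simple} and Corollary~\ref{cor:simple:simple} are all derived from Proposition~\ref{prop:clocks} and Proposition~\ref{prop:simple} by arguments (confluence, passing to a common reduct) that never inspect the annotations; so once the atomic analogues of these two propositions are in place, those three results lift verbatim, reading `$\le$' as the subsequence order $\vec{q} \le \vec{p}$ on lists of positions. Hence I would concentrate entirely on re-running the two diagram-based proofs of Section~\ref{sec:clocked} in the atomic setting.

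The one genuinely new ingredient I would isolate first is a \emph{position-preservation lemma}: contracting a redex never relocates a head redex. In a term $\mylam{x_1\ldots x_n}{(\mylam{y}{P})\,Q\,Q_1\cdots Q_m}$ the head redex $(\mylam y P)Q$ occupies the spine position $0^{n}1^{m}$, and a short case analysis on positions shows that every \emph{other} redex of the term is either strictly inside the head redex (inside $P$ or inside $Q$) or lies in a subterm disjoint from it (inside some $Q_i$); in neither case can contracting it change the number $n$ of leading abstractions, the number $m$ of head arguments, or the position $0^{n}1^{m}$ of the head redex, and in neither case is the head redex duplicated. Consequently, projecting a head step $M \hredat{p} M_1$ over a multi-step $M \dred M_2$ (in particular over a single step, or over a Gross--Knuth step $M \cdred M_2$) has only two outcomes: if the contracted set contains the head redex then it is \emph{erased} and $M_2 \equiv M'$; otherwise the head step reappears \emph{at the identical position}, $M_2 \hredat{p} M'$, and the outer shape of $M$ is preserved. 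This is precisely the refinement of the elementary diagrams of Figures~\ref{fig:elementary:diagram} and~\ref{fig:elementary:diagram:simple} that the atomic bookkeeping requires.

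Given this lemma, I would copy the proof of Proposition~\ref{prop:clocks}: projecting $M \dred N$ over a head reduction $M \hredat{p_1}\cdots\hredat{p_k} H$ to hnf yields $H \dred H'$ together with a head reduction $N \hredat{q_1}\cdots\hredat{q_\ell} H'$ whose position list $\tuple{q_1,\dots,q_\ell}$ consists exactly of the surviving $p_i$ — hence a subsequence of $\tuple{p_1,\dots,p_k}$ — while $H \dred H'$ propagates to the argument subterms; coinduction then gives $\abohm{M} \rel{\ge} \abohm{N}$ in the subsequence order, and Theorem~\ref{thm:general} follows. For Proposition~\ref{prop:simple} I would rerun its induction on $M \to^n N$ with the same bookkeeping: for simple $M$ the elementary diagrams still do not duplicate redexes, so each of the $n$ steps deletes at most one entry from at most one annotation list, whence $\abohm{M}$ and $\abohm{N}$ have \emph{identical} position lists at all but finitely many nodes, i.e.\ $\abohm{M} \relev{=} \abohm{N}$; Theorem~\ref{thm:simple} and Corollary~\ref{cor:simple:simple} then go through unchanged.

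I expect the only delicate point to be this position-preservation lemma together with its residual bookkeeping — specifically, confirming that a complete development which \emph{does} contract the head redex leaves no residual of it (so the head step is truly erased rather than shifted), and that one which does not produces exactly one residual, at the unchanged spine position. Everything downstream is a transcription of Section~\ref{sec:clocked} with `subsequence' substituted for `$\le$', so I do not anticipate further obstacles.
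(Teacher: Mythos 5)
Your proposal is correct and follows exactly the route the paper intends: the paper states Theorem~\ref{thm:lift} without proof, the implicit argument being precisely that the elementary-diagram proofs of Proposition~\ref{prop:clocks} and Proposition~\ref{prop:simple} go through once one observes that a head redex at spine position $0^{n}1^{m}$ is contained in no other redex, so under projection it is either consumed (when it lies in the developed set) or has a unique residual at the same position, making the projected head reduction's position list a subsequence of the original. Your isolation of this position-preservation lemma, and the remark that Theorem~\ref{thm:general}, Theorem~\ref{thm:simple} and Corollary~\ref{cor:simple:simple} then lift verbatim since their proofs never inspect annotations, is exactly the intended (omitted) proof.
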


As an application of using atomic clocks to discriminate $\lambda$-terms,
we show that every combination of 
the fpc-generating vectors $\leftappiterate{\cxthole(\cS\cS)}{\cS}{n}\cI$ 
from Theorem~\ref{thm:scott:sequence} applied to Curry's fpc $\fpcC$
gives rise to inconvertible fpc's.
We note that this cannot be proved using non-atomic clocks,
as for example we have 
$\cbohm{\leftappiterate{\leftappiterate{\fpcC(\cS\cS)}{\cS}{n}\cI(\cS\cS)}{\cS}{m}\cI} 
= \cbohm{\leftappiterate{\leftappiterate{\fpcC(\cS\cS)}{\cS}{m}\cI(\cS\cS)}{\cS}{n}\cI}$.

\begin{prop}\label{prop:scott:free}
  Let $\comb{G}_n = \leftappiterate{\cxthole(\cS\cS)}{\cS}{n}\cI$ 
  the fpc-generating vectors from Theorem~\ref{thm:scott:sequence}.
  For $n_1,\ldots,n_k \in \nat$ we define
  \begin{align*}
    \comb{Y}^{\tuple{n_1,\ldots,n_k}} = \comb{G}_{n_k}[\ldots \comb{G}_{n_1}[\fpcC]\ldots]
  \end{align*}
  All these fpc's are inconvertible, that is,
  $\vec{n} \ne \vec{m}$ implies $\comb{Y}^{\vec{n}} \nconv \comb{Y}^{\vec{m}}$.
\end{prop}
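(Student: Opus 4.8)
The plan is to pass from each $\comb{Y}^{\vec{n}}$ to a \emph{simple} reduct $\Theta_{\vec{n}}$ whose atomic clocked \bohm{} Tree is periodic, to show that the single period of that tree already determines $\vec{n}$, and then to invoke the atomic form of Corollary~\ref{cor:simple:simple} provided by Theorem~\ref{thm:lift}.

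\textbf{A simple common shape.}
Unfolding $\comb{G}_n$ and $\comb{Y}^{\vec{n}}$ gives, writing $\cS^{{\sim}n}$ for $n$ consecutive applications of $\cS$,
\[
  \comb{Y}^{\tuple{n_1,\dots,n_k}} \;\equiv\; \fpcC(\cS\cS)\,\cS^{{\sim}n_1}\cI\,(\cS\cS)\,\cS^{{\sim}n_2}\cI\cdots(\cS\cS)\,\cS^{{\sim}n_k}\cI .
\]
I would reduce the leading subterm $\fpcC(\cS\cS) \mred \omega_{\cS\cS}\omega_{\cS\cS} \mred \theta\theta$ with $\theta \equiv \nf{\omega_{\cS\cS}} \equiv \mylam{abc}{bc(aabc)}$ (as in Example~\ref{ex:scott:seq}), and reduce every argument $(\cS\cS)$ to its normal form $\sigma \equiv \nf{\cS\cS} \equiv \mylam{abc}{bc(abc)}$, obtaining
\[
  \comb{Y}^{\vec{n}} \;\mred\; \Theta_{\vec{n}} \;\equiv\; \theta\theta\,\cS^{{\sim}n_1}\cI\,\sigma\,\cS^{{\sim}n_2}\cI\cdots\sigma\,\cS^{{\sim}n_k}\cI .
\]
The point of this preprocessing is that now \emph{all} spine arguments of $\Theta_{\vec{n}}$ are normal forms. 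I would then verify that $\Theta_{\vec{n}}\,x$ is simple: along its head reduction to head normal form each substituted argument is either one of the normal forms $\theta,\cS,\cI,\sigma,x$ (so the step is call-by-value), or else it is the ``driver'' subterm of shape $\theta\theta\cS\cdots$ that is accumulated as the leading redex threads through the arguments, and a check of the reduction shows this driver always lands in a linearly occurring position (in particular it is never substituted into $\theta$, whose variables all occur twice, because the only instance of $\theta\theta$ ever contracted along such a reduction is the leading one, whose arguments are themselves normal forms); moreover the single argument of the resulting head normal form is again of the form $\Theta_{\vec{n}}\,x$, so simplicity follows coinductively. By contrast $\comb{Y}^{\vec{n}}$ itself is not simple --- it admits the non-simple redex $\omega_{\cS\cS}\omega_{\cS\cS}\to\cS\cS(\omega_{\cS\cS}\omega_{\cS\cS})$ --- which is exactly why we first move to $\Theta_{\vec{n}}$. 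Since $\comb{Y}^{\vec{n}}\conv\Theta_{\vec{n}}$, it suffices to discriminate the simple terms $\Theta_{\vec{n}}\,x$, and by the atomic form of Corollary~\ref{cor:simple:simple} it is enough to show that $\abohm{\Theta_{\vec{n}}\,x}$ and $\abohm{\Theta_{\vec{m}}\,x}$ do not match eventually whenever $\vec{n}\ne\vec{m}$.

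\textbf{The periodic atomic clock.}
An explicit head reduction gives $\Theta_{\vec{n}}\,x \mhred x(\Theta_{\vec{n}}\,x)$, returning to \emph{the same} term; writing $\vec{P}_{\vec{n}}$ for the list of positions of these head steps, Definition~\ref{def:abohm} yields $\abohm{\Theta_{\vec{n}}\,x} = \annotate{\vec{P}_{\vec{n}}}{(x\,\abohm{\Theta_{\vec{n}}\,x})}$, a periodic tree carrying the constant annotation $\vec{P}_{\vec{n}}$ at every node of its spine. Hence $\abohm{\Theta_{\vec{n}}\,x}$ and $\abohm{\Theta_{\vec{m}}\,x}$ match eventually if and only if $\vec{P}_{\vec{n}} = \vec{P}_{\vec{m}}$. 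I would compute $\vec{P}_{\vec{n}}$ by following how the leading redex $\theta\theta$ of $\Theta_{\vec{n}}\,x$ is driven, left to right, through the argument blocks $\cS^{{\sim}n_1}\cI$, $\sigma\cS^{{\sim}n_2}\cI$, $\dots$, $\sigma\cS^{{\sim}n_k}\cI$: passing through the $i$-th block contributes a contiguous segment of $\vec{P}_{\vec{n}}$ whose profile of position-lengths (a descent followed by a block-specific excursion) encodes $n_i$, and these segments occur in $\vec{P}_{\vec{n}}$ in the order of the $n_i$. This should yield a recursion for $\vec{P}_{\vec{n}}$ (for instance $\vec{P}_{\tuple{n+1}}$ arises from $\vec{P}_{\tuple{n}}$ by a uniform shift together with a fixed suffix) and, more to the point, a decoding that reads $\vec{n}$ back off $\vec{P}_{\vec{n}}$.

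\textbf{Conclusion, and the main obstacle.}
Granting that $\vec{n}\mapsto\vec{P}_{\vec{n}}$ is injective, for $\vec{n}\ne\vec{m}$ we obtain $\vec{P}_{\vec{n}}\ne\vec{P}_{\vec{m}}$, so $\abohm{\Theta_{\vec{n}}\,x}$ and $\abohm{\Theta_{\vec{m}}\,x}$ disagree at every spine node and therefore do not match eventually; since both terms are simple, the atomic form of Corollary~\ref{cor:simple:simple} (Theorem~\ref{thm:lift}, with Proposition~\ref{prop:simple}) gives $\Theta_{\vec{n}}\,x \nconv \Theta_{\vec{m}}\,x$, hence $\Theta_{\vec{n}} \nconv \Theta_{\vec{m}}$, and therefore $\comb{Y}^{\vec{n}} \nconv \comb{Y}^{\vec{m}}$. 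The hard part is the middle step: one must pin down the precise shape of the head reduction of $\Theta_{\vec{n}}\,x$ --- in particular that it really returns to $\Theta_{\vec{n}}\,x$ and not merely to a convertible term --- and then isolate an invariant of $\vec{P}_{\vec{n}}$, namely the ordered list of per-block fingerprints, that is demonstrably injective in $\vec{n}$. Coarser invariants do not suffice: the length of $\vec{P}_{\vec{n}}$ and its largest entry depend only on $k$ and $\sum_i n_i$, which is exactly why (as noted just before the proposition) the non-atomic clocked \bohm{} Trees of $\comb{Y}^{\vec{n}}$ and $\comb{Y}^{\vec{m}}$ already coincide when $\vec{m}$ is a permutation of $\vec{n}$; the decoding must genuinely use the positions of the head steps.
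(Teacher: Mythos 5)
Your strategy is the same as the paper's: reduce $\comb{Y}^{\vec{n}}x$ to the simple reduct $\theta\theta\,\cS^{{\sim}n_1}\cI\,\nf{\cS\cS}\,\cS^{{\sim}n_2}\cI\cdots\nf{\cS\cS}\,\cS^{{\sim}n_k}\cI\,x$, observe that its head reduction cycles back to the very same term so that the atomic clocked tree is periodic with one constant spine annotation $\vec{P}_{\vec{n}}$, and then invoke the atomic version of Corollary~\ref{cor:simple:simple} via Theorem~\ref{thm:lift}. But the proposition stands or falls with the injectivity of $\vec{n}\mapsto\vec{P}_{\vec{n}}$, and that is exactly the step you defer (``granting that \dots'', ``the hard part is the middle step''): you never compute $\vec{P}_{\vec{n}}$ nor exhibit a decoding. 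That computation is the entire technical content of the paper's proof: the head cycle is written out as a concatenation of explicit position sequences $(\dagger)_{n_1,m_1},(*)_{n_2,m_2},\ldots,(*)_{n_k,m_k},(\ddagger)$, where $m_i$ depends on $n_{i+1},\ldots,n_k$ --- so, contrary to your sketch, the segment contributed by the $i$-th block is not a purely local function of $n_i$ --- and then the block boundaries are recovered intrinsically as exactly the occurrences in $\vec{P}_{\vec{n}}$ of the local pattern $1^{l},1^{l+1},1^{l},1^{l-1},1^{l-2}$ (one increment followed by four decrements); counting these occurrences gives $k$, and the block lengths then determine the $n_i$.

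This decoding is also not routine, which is why it cannot simply be waved through. A block with $n_i=0$ exhibits the pattern only in combination with the first step of the \emph{next} block, and a trailing $\comb{G}_0$ would spoil the analysis altogether; the paper handles this by first reducing to the case $n_k\ne 0$, appending a common vector $\comb{G}_{n}$ with $n>0$ to both terms and using $MN\ne_\beta M'N \Rightarrow M\ne_\beta M'$. One must moreover verify that no spurious occurrences of the pattern arise at the seams between consecutive blocks (the paper checks precisely the possible overlaps, separately for $n_i=0$ and $n_i>0$). None of these issues surface in your proposal, and your proposed ``per-block fingerprint'' would have to be proved unambiguous against exactly these concatenation effects. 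So: correct framework, identical in outline to the paper's, but the core of the proof --- the explicit atomic-clock computation and the combinatorial argument that the resulting position sequence determines $\vec{n}$ --- is missing, as you yourself acknowledge; as it stands the argument establishes only the easy reductions surrounding that gap.
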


\begin{proof}
  In this proof we abbreviate context filling by simply concatenating the contexts,
  so that for $\comb{Y}^{\tuple{n_1,\ldots,n_k}}$ defined above we have $\comb{Y}^{\tuple{n_1,\ldots,n_k}} = \fpcC \comb{G}_{n_1} \ldots \comb{G}_{n_k}$.

  For $\vec{p} = \tuple{1^{m_1},\ldots,1^{m_k}}$ a list of positions
  we define $\posexp{\vec{p}}{n}$ as follows:
  \begin{align*}
    \posexp{\vec{p}}{0} &= \tuple{} \\
    \posexp{\vec{p}}{n} &= \vec{p} \cdot (\posexp{\tuple{1^{m_1-1},\ldots,1^{m_k-1}}}{(n-1)}) && (n > 0) 
  \end{align*}
  For example, we have $\tuple{1^5} \times 3 = \tuple{1^5,1^4,1^3}$.

  We also use the following abbreviations, for $n\in\nat$:
  \begin{align*}
    \comb{G}'_n & = \leftappiterate{\cxthole}{\cS}{n}\cI 
    &
    \theta & = \mylam{abc}{bc(aabc)} 
    \\
    \nf{\comb{G}_n} &= \leftappiterate{\cxthole \nfSS}{\cS}{n}\cI 
    &
    \nfSS & = \nf{\cS\cS} = \mylam{abc}{bc(abc)} 
  \end{align*}
  (Note that $\omega_{\nfSS} \to_\beta \theta$.)
  For $n_1,\ldots,n_k \in \nat$ we define an fpc $\comb{Y}$ by
  \begin{align*}
    \comb{Y} & = \theta \theta \comb{G}'_{n_1} \nf{\comb{G}_{n_2}} \ldots \nf{\comb{G}_{n_k}}
  \end{align*}
  so that we have the following reduction: 
  \begin{align*}
    \comb{Y}^{\tuple{n_1,\ldots,n_k}} x 
    \equiv \fpcC \comb{G}_{n_1} \ldots \comb{G}_{n_k}
    \mred \comb{Y} x 
    && (k \ge 1)
  \end{align*}
  and we observe that $\comb{Y}x$ is a simple term 
  (as can be inferred from the reductions below).

  Let $V_m = \cxthole N_1\,\ldots\,N_m$ be a vector of length $m$.
  Then we have, for every $n \ge 0$:
  \begin{align*}
    \theta\theta\comb{G}'_{n}V_m
    & \equiv \leftappiterate{\theta\theta}{\cS}{n} \cI V_m \\
    & \hredat{\posexp{(\posexp{\tuple{1^{n+m+1}}}{3})}{n}} 
      \cS\cI(\leftappiterate{\theta\theta}{\cS}{n}{\cI}) V_m \tag*{$(\dagger)_{n,m}$} \\
    & \hredat{\posexp{\tuple{1^{m+1}}}{2}} 
      (\mylam{c}{\cI c(\leftappiterate{\theta\theta}{\cS}{n}{\cI} c)}) V_m
  \end{align*}
  For $M$ an arbitrary term, we have the following reductions $(*)_{n,m}$ ($n,m\ge 0$):
  \begin{align*}
    (\mylam{c}{\cI c(Mc)}) \nf{\comb{G}_n} V_m 
    & \equiv
    \leftappiterate{ (\mylam{c}{\cI c(Mc)}) \nfSS}{\cS}{n} \cI V_m
    \\
    &\hredat{\tuple{1^{n+m+1}}}
    \cI \nfSS \leftappiterate{ (M \nfSS)}{\cS}{n} \cI V_m
    \\
    &\hredat{\tuple{1^{n+m+2}}}
    \nfSS \leftappiterate{ (M \nfSS)}{\cS}{n} \cI V_m
    \tag*{$(*)_{n,m}$} \\
    &\hredat{\posexp{(\posexp{\tuple{1^{n+m+1}}}{3})}{n}}
    \cS \cI (M \nf{\comb{G}_n}) V_m
    \\
    &\hredat{\posexp{\tuple{1^{m+1}}}{2}}
    (\mylam{c}{\cI c (M \comb{G}_n c)}) V_m
  \end{align*}
  Moreover, let $(\ddagger)$ denote the following rewrite sequence:
  \begin{align*}
    (\mylam{c}{\cI c (\comb{Y} c)}) x 
    \hredat{\tuple{1^0,1^1}}
    x (\comb{Y} x)
    \tag*{$(\ddagger)$}
  \end{align*}
  The rewrite sequence $\comb{Y}x \mhred x(\comb{Y}x)$ 
  is composed of $k$ subsequences 
  as follows: 
  \begin{align*}
    (\dagger)_{n_1,m_1},(*)_{n_2,m_{2}}, (*)_{n_3,m_3}, \ldots, (*)_{n_k,m_k},(\ddagger)
    \tag*{$(*)$}
  \end{align*}
  where $m_i$ is defined by
  $m_k = 1$ and $m_i = m_{i+1} + n_{i+1} + 2$ ($1 \le i < k$).
  For a rewrite sequence $\sigma$, let $\tuple{\sigma}$ denote
  the sequence of positions of the steps in $\sigma$.
  Note that the atomic clock \bohm{} Tree of $\comb{Y}^{\vec{n}}x$
  is of the form 
  $\abohm{\comb{Y}^{\vec{n}}x} = \annotate{\tuple{(*)}}{\;x(\annotate{\tuple{(*)}}{\;x(\ldots)})}$.

  As the goal is to prove inconvertibility, we may without loss of generality
  assume that $n_k \ne 0$ since the fact $M N \ne_\beta M' N \implies M \neq_\beta M'$
  allows us to append $\comb{G}_{n}$ with arbitrary $n > 0$.
  We argue that the starting positions of the subsequences $\tuple{(*)_{n_i,m_i}}$ in $\tuple{(*)}$
  coincide with the occurrences of subsequences of the form (\textmusicalnote)
  $1^l$, $1^{l+1}$, $1^{l}$, $1^{l-1}$, $1^{l-2}$ (for some $l \in \nat$), that is,
  an increment followed by four decrements.
  (In particular, we find exactly the patterns for $l = n_i + m_i+1$.)
  This suffices to derive $k,n_1,\ldots,n_k$ 
  since then the number of occurrences (\textmusicalnote) is $k-1$,
  and the $n_i$'s are a function of the length of the blocks; see also Example~\ref{ex:scott:free} below. 
  This shows that $\vec{n} \ne \vec{m}$ implies that 
  $\neg(\abohm{\comb{Y}^{\vec{n}}} \relev{=} \abohm{\comb{Y}^{\vec{m}}})$,
  and hence we conclude $\comb{Y}^{\vec{n}} \nconv \comb{Y}^{\vec{m}}$ 
  by atomic version of Corollary~\ref{cor:simple:simple}, see Theorem~\ref{thm:lift}.
    
  For the sequences $\tuple{(*)_{n_i,m_i}}$ where $n_i > 0$,
  notice that $\tuple{(*)_{n_i,m_i}}$ starts with the positions
  $1^{n_i+m_i+1},1^{n_i+m_i+2}$, $1^{n_i+m_i+1}$, $1^{n_i+m_i}$, $1^{n_i+m_i-1}$, 
  a subsequence of the form (\textmusicalnote);
  this is the only occurrence of four consecutive decrements in $\tuple{(*)_{n_i,m_i}}$.
  For sequences $\tuple{(*)_{n_i,m_i}}$ with $n_i = 0$,
  we have $i < k$ (since $n_k \ne 0$),
  and $\tuple{(*)_{n_i,m_i}}$ is of the form $1^{m_i+1},1^{m_i+2},1^{m_i+1},1^{m_i}$.
  This combined with the first element $m_i - 1 = m_{i+1} + n_{i+1} + 1$ of $\tuple{(*)_{n_{i+1},m_{i+1}}}$
  is an occurrence of the form (\textmusicalnote).
  Finally, we need to check that there are no other occurrences of (\textmusicalnote) in $\tuple{(*)}$.
  Note that other occurrences of four consecutive decrements can only
  occur as overlaps between  $\tuple{(*)_{n_i,m_i}}$ and  $\tuple{(*)_{n_{i+1},m_{i+1}}}$.
  Each of the sequences $\tuple{(*)_{n_{i+1},m_{i+1}}}$ starts with an increment 
  $1^{n_{i+1}+m_{i+1}+1}$, $1^{n_{i+1}+m_{i+1}+2}$, thus only the first element can 
  overlap.
  For $n_i = 0$, we have already analyzed the overlap, and for $n_i > 1$,
  only the last three elements of $\tuple{(*)_{n_i,m_i}}$ are not decreasing.
  This concludes the proof.
\end{proof}

\begin{exa}\label{ex:scott:free}
  Let $\comb{Y} = \theta \theta \comb{G}'_{2} \nf{\comb{G}_{0}} \nf{\comb{G}_{1}}$ 
  so that $\comb{Y}^{\tuple{2,0,1}} x \mred \comb{Y} x$ as in the above proof. 
  According to the calculations in the proof, 
  the atomic clock of the simple term $\comb{Y} x$ is 
  \newlength{\lul}
  \settowidth{\lul}{$1^3$}
  \[ \tuple{1^9,1^8,1^7,1^8,1^7,1^6,1^7,1^6,\underbrace{\mit{1^5,1^6,1^5,1^4},\hphantom{1^3}}_{\text{(\textmusicalnote)}}\hspace{-\lul}\overbrace{1^3,1^4,1^3,1^2,1^1}^{\text{(\textmusicalnote)}},1^2,1^1,\mit{1^0,1^1}} \]
  where we have indicated the occurrences of (\textmusicalnote),
  and have alternatingly used italics to separate the blocks $(\dagger)_{2,6}$, $\mit{(*)_{0,4}}$, $(*)_{1,1}$, and $\mit{(\ddagger)}$.
\end{exa}

\section{Clocked \levy{} and \ber{} Trees}\label{sec:levy}

In fact, there are three main semantics for the $\lambda$-calculus:
$\sbohm$, $\slevi$, and $\sber$; 
see \cite{abra:ong:1993,bera:intr:1996,beth:klop:vrij:2000,kenn:vrie:2003,bare:klop:2009}.
In the \bohm{} Tree semantics, a term is meaningful only if it has a hnf.
The \levy{} semantics weakens this condition to whnf's,
and thereby allows more terms to be distinguished.
The \ber{} Tree semantics is a further weakening
where only root-active terms are discarded as meaningless.

The notions from the Sections~\ref{sec:clocked} and~\ref{sec:atom} 
generalize directly to $\slevi$ and $\sber$ semantics.
We only treat the non-atomic versions here.

\begin{defi}[Clocked \levy{} Trees]\label{def:clevi}
  Let $M$ be a $\lambda$-term.
  The \emph{clocked \levy{} Tree $\clevi{M}$ of $M$}
  is an annotated potentially infinite term defined as follows.
  If $M$ has no whnf, then define $\clevi{M}$ as $\sink$.
  Otherwise,
  there exists a head rewrite sequence 
  $M \hredn{k} \mylam{x}{N}$ or $M \hredn{k} x M_1 \ldots M_m$ to whnf.
  In this case, we define
  $\clevi{M}$ as 
  $\annotate{k}{\mylam{x}{\clevi{N}}}$ or
  $\annotate{k}{x \clevi{M_1} \ldots \clevi{M_m}}$, respectively.
\end{defi}

\begin{defi}[Clocked \ber{} Trees]\label{def:cber}
  Let $M$ be a $\lambda$-term.
  The \emph{clocked \ber{} Tree $\cber{M}$ of $M$}
  is an annotated potentially infinite term defined as follows.
  If $M$ is root-active, let $\cber{M} \equiv \sink$.
  If $M \hredn{k} N$ rewrites to a root-stable term $N \equiv x$, $N \equiv \mylam{x}{P}$ or $N \equiv P Q$,
  then define $\cber{M}$ as
  $\annotate{k}{x}$, $\annotate{k}{\mylam{x}{\cber{P}}}$ or $\annotate{k}{\cber{P} \cber{Q}}$, respectively.
\end{defi}

\begin{exa}
  Consider the terms
  $M \defeq P P$ with $P \defeq \mylam{x}{\mylam{y}{xx}}$
  and
  $N \defeq Q Q$ with $Q \defeq \mylam{x}{\mylam{y}{\mylam{z}{xx}}}$.
  Then we find
  \begin{align*}
    \clevi{M} &\equiv \annotate{1}{\mylam{y}{\clevi{M}}}\\
    \clevi{N} &\equiv \annotate{1}{\mylam{y}{\annotate{0}{\mylam{z}{\clevi{N}}}}}
  \end{align*}
  Thus, in $\clevi{M}$ every $\lambda$ requires one head reduction step
  whereas in $\clevi{N}$ every second $\lambda$ is obtained for `free' (that is, in $0$ steps).
  
  We remark that $M$ and $N$ cannot be distinguished
  in the \bohm{} Tree semantics since $\cbohm{M} \equiv \cbohm{N} \equiv \bot$.
\end{exa}

\section{Concluding Remarks}\label{sec:conclusion}

We conclude with an encompassing conjecture, and some further research questions.
\begin{conjecture*}
  Building fpc's with fpc-generating vectors is a free construction, 
  that is, there are no non-trivial identifications.
\end{conjecture*}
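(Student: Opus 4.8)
\medskip
\noindent\textbf{Towards the conjecture.}
Since the statement is open, I can only outline a plan of attack; Proposition~\ref{prop:scott:free} is a first instance of exactly such a plan. The initial task is to make the claim precise. Fix a \emph{basis} $\mcl{V}$ of fpc-generating vectors $v = \cxthole P_1 \cdots P_n$, understood modulo the unavoidable trivial identifications: say that $w, w' \in \mcl{V}^{\ast}$ are \emph{trivially equivalent}, written $w \approx_{\mrm{t}} w'$, if they are related by the congruence generated by deleting a vector $v$ with $Y v \conv Y$ for every fpc $Y$ (the vector $\cxthole\cI$ is of this suspicious kind and must be excluded from the basis, or quotiented away), and by transposing adjacent vectors $v, v'$ with $Y v v' \conv Y v' v$ for every fpc $Y$. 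Deciding exactly which vectors and which pairs of vectors to quotient by is itself part of the problem. With base fpc $\fpcC$ and $w = v_1 \cdots v_k$, write $\comb{Y}^{w}$ for $\fpcC v_1 \cdots v_k$. The conjecture then asserts that trivially inequivalent words yield $\beta$\nb-inconvertible fpc's, $\comb{Y}^{w} \nconv \comb{Y}^{w'}$.

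The second step reduces this to a statement about atomic clocks. All fpc's produced by the modular schemes of Section~\ref{sec:fpcs} reduce to \emph{simple} terms, so for each $w$ there is a simple reduct $S_w$ with $\comb{Y}^{w} x \mred S_w x$, and $S_w x$ is periodic in the sense of Section~\ref{sec:periodic}: $S_w x \mhred x (S_w x)$. By the atomic version of Corollary~\ref{cor:simple:simple} (Theorem~\ref{thm:lift}) it then suffices to show that trivially inequivalent $w, w'$ satisfy $\neg(\abohm{S_w x} \relev{=} \abohm{S_{w'} x})$. Since these atomic clocked \bohm{} Trees are periodic, of shape $\annotate{\tuple{\rho_w}}{x(\annotate{\tuple{\rho_w}}{x(\cdots)})}$, everything reduces to a single list of positions $\rho_w$, the atomic clock of one period. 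Computing $\rho_w$ modularly --- exactly as with the blocks $(\dagger)_{n,m}$, $(*)_{n,m}$, $(\ddagger)$ in the proof of Proposition~\ref{prop:scott:free} --- it should decompose as a concatenation $\rho_w = \tuple{\beta} \cdot \tuple{\sigma_1} \cdot \tuple{\sigma_2} \cdots \tuple{\sigma_k} \cdot \tuple{\tau}$, where $\tuple{\beta}$ is the base contribution, $\tuple{\tau}$ closes the cycle, and $\tuple{\sigma_i}$ is the contribution of $v_i$ processed at the depth offset $m_i$ accumulated from the vectors $v_{i+1}, \dots, v_k$ to its right, this offset entering $\tuple{\sigma_i}$ as a uniform shift $1^{\ell} \mapsto 1^{\ell + m_i}$ of positions, as in Example~\ref{ex:scott:free}.

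The third and crucial step is to prove that $w \mapsto \rho_w$ is injective modulo $\approx_{\mrm{t}}$, by a \emph{parsing} argument. For every vector $v \in \mcl{V}$ one must exhibit a recognizable signature inside its block $\tuple{\sigma}$ --- a local pattern of increments and decrements of the exponents $\ell$ of the positions $1^{\ell}$, generalizing the five-symbol motif (\textmusicalnote) used in Proposition~\ref{prop:scott:free} --- that survives concatenation with the neighbouring blocks and determines $v$ up to trivial equivalence (in particular fixes the length $n$ of $v$). Given such signatures one recovers $k$ and then $v_1, \dots, v_k$ from $\rho_w$, and injectivity follows. A convenient reformulation is to pass to the ``derivative'' of $\rho_w$, the word recording at each step the change in head-redex depth; the offsets then disappear, and the claim becomes the purely combinatorial statement that the induced map from $\mcl{V}^{\ast}/{\approx_{\mrm{t}}}$ to the monoid of depth-change words is injective, once the per-vector blocks are in hand.

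Both obstacles are already visible. There is no classification of fpc-generating vectors, so one cannot design a uniform signature scheme a priori, and structurally different vectors might conspire to produce identical position-lists; the offset bookkeeping in Section~\ref{sec:fpcs} already shows how delicate this is. Furthermore the reduction above rests on Theorem~\ref{thm:simple}, hence on the constructed fpc's being simple; a vector that destroys simplicity would force a retreat to Theorem~\ref{thm:general}, whose quantifier over \emph{all} reducts of $M$ is precisely what makes arguments like the one in Section~\ref{sec:plotkin} laborious. I would therefore first settle the conjecture for an explicit large family --- for instance all words over $\{\cxthole\delta\} \cup \{\comb{G}_n \where n \in \nat\}$, extending Proposition~\ref{prop:scott:free} --- and only afterwards attempt a basis-level statement, presumably guided by a conjectural normal form for fpc-generating vectors and, where simplicity fails, by a more refined analysis of the reduction graph in the style of Section~\ref{sec:plotkin}.
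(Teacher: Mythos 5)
There is no proof to compare against: in the paper this statement is an open conjecture, supported only by partial results (Intrigila's theorem, duplicate-freeness of the \boehm{} and Scott sequences, and Proposition~\ref{prop:scott:free}), and the paper itself concedes that for arbitrary fpc's the conjecture ``may be beyond current techniques.'' You correctly treat the statement as open rather than pretending to prove it, and your program is essentially the one the paper's own evidence suggests: reduce the constructed fpc's to simple reducts, pass to atomic clocks via Theorem~\ref{thm:lift} and the atomic version of Corollary~\ref{cor:simple:simple}, exploit periodicity to compress the problem to one period's position list, and then run a parsing argument on block signatures generalizing the five-step motif of Proposition~\ref{prop:scott:free}. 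The obstacles you name (no classification of fpc-generating vectors, and loss of simplicity forcing a retreat to Theorem~\ref{thm:general} with its quantification over all reducts) are exactly the reasons the statement remains a conjecture, so as an assessment of the state of the problem your outline is sound.

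Two concrete corrections. First, $\cxthole\cI$ is not an example of a vector to be quotiented away: it is not fpc-generating at all, since $\fpcC\cI$ has no hnf (the paper even uses $\Omega = \fpcC\,\cI$ as its canonical meaningless term), and a fortiori it does not satisfy $Y\cI \conv Y$ for all fpc's $Y$. The trivial identifications the paper has in mind are rather of the kind $Y \conv Y(\cK Y)$ (the reason no prime fpc's exist in $\lambda$-calculus), where the appended vector depends on $Y$; this is what makes a precise basis-level formulation delicate. Second, your quotient by deletions and transpositions is more cautious than the paper's intended reading: items (2) and (3) of the conjecture assert inconvertibility whenever the vector words are \emph{syntactically} distinct, i.e.\ the paper conjectures that no such permutation or absorption identities exist among genuine fpc-generating vectors, rather than that they should be factored out. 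With these caveats, your suggestion to first settle the conjecture for an explicit family such as words over $\{\cxthole\delta\}\cup\{\comb{G}_n \where n\in\nat\}$, extending Proposition~\ref{prop:scott:free}, is a reasonable and faithful continuation of the paper's line of attack.
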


\noindent
A first step is found in Intrigila's theorem $Y \delta \nconv Y$,
for any fpc $Y$.
A second step is that the \boehm{} sequence is duplicate-free.
A third step is found in our proof that the Scott sequence is duplicate-free, 
and Proposition~\ref{prop:scott:free}, which states that 
there are no identifications when starting the construction with $\fpcC$.

Other parts of the conjecture are as follows.
Let $Y,Y'$ be fpc's and $B_1 \ldots B_n, C_1 \ldots C_k$ be fpc-generating vectors.
\begin{enumerate}
  \item
  $Y \delta \conv Y' \delta$ iff $Y \conv Y'$.
  
  \item
  $Y B_1 \ldots B_n  \conv  Y' B_1 \ldots B_n$ iff $Y = Y'$.
  
  \item
    $Y B_1 \ldots B_n  \nconv Y C_1 \ldots C_k$ if
    $B_1 \ldots B_n \not\equiv C_1 \ldots C_k$.
\end{enumerate}
For general fpc's $Y$, $Y'$ these conjectures may be beyond current techniques,
but for the well-known fpc's of Curry and Turing, and the fpc-generating 
vectors introduced here, including their versions for $n > 3$, 
these problems are tractable.

Other directions of research could be
\begin{enumerate}

  \item 
    For atomic clock \boehm{} Trees (Section~\ref{sec:atom}) 
    we have recorded the positions of
    head reduction steps building up the head normal form.
    What about a generalization to other spine reduction strategies~\cite{bare:1984}?
    Would this give rise to a stronger discrimination method?

  \item 
    The notion of simple terms could be refined by focusing on an infinite
    path in the clocked \boehm{} Trees. Then duplication of redexes may be allowed
    along other paths, thereby making the method applicable to a larger 
    class of terms.
  
  \item 
    What general condition on a given term's head reduction is sufficient 
    to ensure that it possesses a `minimal clock'?  

  \item 
    Is it possible to characterize fully cyclic terms using a coinductive
    version of simple type theory?

  \item 
    Can the notion of a clock itself be given a type-theoretic interpretation?

  \item 
    Can the clocks method be used to give a quantitative measure 
    for optimization of functional programs?

  \item 
    Is it possible, using the clocks method, to extend Intrigila's result~\cite{intri:1997}, 
    and prove that there is no fixed point combinator $Y$ \emph{and} no $n\in\nat$ 
    such that $Y =_\beta \leftappiterate{Y}{\delta}{n}$?
    This is an instance of the conjecture above.

  \item
    Also interesting is to systematically study all solutions
    of equations $M \vec{x} = \vec{x} (M \vec{x})$ 
    (we might call them \emph{vector-fpc's})
    where it is understood that 
    $\vec{x} = x_1 \ldots x_n$ 
    associates to the right when it occurs at an active position 
    $\vec{x} M = x_1 (x_2 (\ldots (x_n M) \ldots))$, 
    and to the left if it occurs in a passive position 
    $M \vec{x} = M x_1 x_2 \ldots x_n$.
    Can these solutions $M$ be discriminated by the methods presented here?
    Notice that the terms~$M$ lead to fpc's; 
    for example $\leftappiterate{M}{\cI}{n-1}$ is an fpc (see Section~\ref{sec:schemes}),
    or more generally, $M \vec{N}$ with $\vec{N} =  N_1 \ldots N_{n-1}$
    is an fpc when $\vec{N} x \mred_\beta x$.

  \item 
    Another interesting notion is that of \emph{prime fpc's}, that is, 
    fpc's $Y$ not of the form $Y = Y' \vec{P}$ where $Y'$ is an fpc.
    In $\lambda$\nb-calculus no prime fpc's exist, due to $Y \conv Y (\cK Y)$,
    but in the $\lambda\cI$\nb-calculus the notion is non-trivial.

\end{enumerate}

\subsection*{Acknowledgements}
We thank Raymond Smullyan for raising our awareness
of the mystery and magic surrounding fixed point combinators,
Hans Zantema and Johannes Waldmann for communicating
some facts about Smullyan's Owl ($\comb{S}\comb{I}$), 
Henk Barendregt for intensive discussions which led us to simple terms, 
Rick Statman for his stimulating interest in this work and
suggesting some future elaboration,
and 
Gordon Plotkin for his question and subsequent communication 
pertinent to Selinger's work concerning the existence of
unorderable models of the $\lambda$-calculus.

\bibliographystyle{plain}
\bibliography{main}

\end{document}